\algrenewcommand{\algorithmiccomment}[1]{\hfill$\triangleleft$ {\footnotesize \textsl{#1}}}
\algrenewcommand\algorithmicindent{1.0em}
\renewcommand{\ALG@beginalgorithmic}{\small}
\DeclareMathAlphabet{\mathpzc}{OT1}{pzc}{m}{it}
\DeclareMathOperator{\hsA}{\langle A\rangle}
\DeclareMathOperator{\hsL}{\langle L\rangle}
\DeclareMathOperator{\hsB}{\langle B\rangle}
\DeclareMathOperator{\hsE}{\langle E\rangle}
\DeclareMathOperator{\hsD}{\langle D\rangle}
\DeclareMathOperator{\hsO}{\langle O\rangle}
\DeclareMathOperator{\hsX}{\langle X\rangle}
\DeclareMathOperator{\hsAt}{\langle \overline{A}\rangle}
\DeclareMathOperator{\hsLt}{\langle \overline{L}\rangle}
\DeclareMathOperator{\hsBt}{\langle \overline{B}\rangle}
\DeclareMathOperator{\hsEt}{\langle \overline{E}\rangle}
\DeclareMathOperator{\hsDt}{\langle \overline{D}\rangle}
\DeclareMathOperator{\hsOt}{\langle \overline{O}\rangle}
\DeclareMathOperator{\Trk}{Trk}
\DeclareMathOperator{\Pref}{Pref}
\DeclareMathOperator{\Suff}{Suff}
\DeclareMathOperator{\states}{states}
\DeclareMathOperator{\intstates}{intstates}
\DeclareMathOperator{\lst}{lst}
\DeclareMathOperator{\fst}{fst}
\DeclareMathOperator{\nestb}{Nest_B}
\DeclareMathOperator{\Root}{root}
\DeclareMathOperator{\Rt}{R_t}
\DeclareMathOperator{\notRt}{\cancel{R}_t}
\DeclareMathOperator{\expand}{expand}
\newcommand{\AAbar}{\mathsf{A\overline{A}}}
\newcommand{\AAbarBbarEbar}{\mathsf{A\overline{A}\overline{B}\overline{E}}}
\newcommand{\ABbar}{\mathsf{A\overline{B}}}
\newcommand{\HSexi}{\mathsf{\exists A\overline{A}BE}}
\newcommand{\HSforall}{\mathsf{\forall A\overline{A}BE}}
\newcommand{\AAbarBBbarEbar}{\mathsf{A\overline{A}B\overline{B}\overline{E}}}
\newcommand{\AAbarEBbarEbar}{\mathsf{A\overline{A}E\overline{B}\overline{E}}}
\newcommand{\AAbarBE}{\mathsf{A\overline{A}BE}}
\newcommand{\D}{\mathsf{D}}
\newcommand{\BE}{\mathsf{BE}}
\newcommand{\BED}{\mathsf{BED}}
\newcommand{\ABbarL}{\mathsf{A\overline{B}L}}
\newcommand{\HSprop}{\mathsf{Prop}}
\newcommand{\Th}{\mbox{P}^{\mbox{\scriptsize NP}[O(\log n)]}}
\newcommand{\Thsq}{\mbox{P}^{\mbox{\scriptsize NP}[O(\log^2 n)]}}
\newtheorem{theorem}{Theorem}
\newtheorem{definition}[theorem]{Definition}
\newtheorem{proposition}[theorem]{Proposition}
\newtheorem{lemma}[theorem]{Lemma}
\newtheorem{corollary}[theorem]{Corollary}
\newtheorem{example}{Example}
\newtheorem*{thm**}{Theorem~\mynumber}
\newenvironment{theorem*}[1]
  {\newcommand{\mynumber}{\ref{#1}}\begin{thm**}}
  {\end{thm**}}
\begin{document}

\begin{frontmatter}

\title{Model Checking for Fragments of\\ Halpern and Shoham's Interval Temporal Logic\\ Based on Track Representatives\tnoteref{label0}}
\tnotetext[label0]{This paper is an extended and revised version of \cite{MMP15} and \cite{MMP15B}.}

\author[uniud]{Alberto Molinari}
\ead{molinari.alberto@gmail.com}
%\ead[url]{http://users.dimi.uniud.it/~alberto.molinari}
\address[uniud]{Department of Mathematics, Computer Science, and Physics, University of Udine, Italy}

\author[uniud]{Angelo Montanari}
\ead{angelo.montanari@uniud.it}
%\ead[url]{http://users.dimi.uniud.it/~angelo.montanari}

\author[unina]{Adriano Peron}
\ead{adrperon@unina.it}
%\ead[url]{http://people.na.infn.it/~peron}
\address[unina]{Department of Electronic Engineering and IT, University of Napoli ``Federico II'', Italy}

\begin{abstract}
Model checking allows one to automatically verify a specification of the expected 
properties of a system against a formal model of its behaviour (generally, a Kripke 
structure). 
Point-based temporal logics, such as LTL, CTL, and CTL$^*$, that describe how the 
system evolves state-by-state, are commonly used as specification languages. 
They proved themselves quite successful in a variety of application domains.
However, properties constraining the temporal ordering of temporally 
extended events as well as properties involving temporal aggregations, which are 
inherently interval-based, can not be properly dealt with by them. 
Interval temporal logics (ITLs), that take intervals as their primitive temporal 
entities, turn out to be well-suited for the specification and verification of 
interval properties of computations (we interpret all the tracks of a Kripke structure
as computation intervals).

In this paper, we study the model checking problem for some fragments of 
Halpern and Shoham's modal logic of time intervals (HS). HS features one modality 
for each possible ordering relation between pairs of intervals (the so-called 
Allen's relations).
First, we describe an EXPSPACE model checking algorithm for the HS fragment of Allen's relations 
\emph{meets}, \emph{met-by}, \emph{starts}, \emph{started-by}, and \emph{finishes}, which 
exploits the possibility of finding, for each track (of unbounded length), an equivalent 
bounded-length track representative. 
While checking a property, it only needs to consider tracks whose length does not exceed 
the given bound. Then, we prove the model checking problem for such a fragment to be 
PSPACE-hard.
Finally, we identify other well-behaved HS fragments which are expressive enough to capture meaningful interval properties of systems, such as mutual exclusion, state reachability, and non-starvation, and whose computational complexity is less than or equal to that of LTL. 
\end{abstract}

\begin{keyword}
Model checking \sep interval temporal logics \sep computational complexity
%% MSC codes here, in the form: \MSC code \sep code
\MSC[2010] 03B70 \sep 68Q60
\end{keyword}

\end{frontmatter}

%%
%% Start line numbering here if you want
%%
% \linenumbers

%% main text
\section{Introduction}

One of the most notable techniques for system verification is
model checking, which allows one to verify the desired properties of a system
against a model of its behaviour~\cite{CGP02}.
Properties are usually formalized by means of temporal logics, such as LTL and CTL, 
and systems are represented as labelled state-transition graphs (Kripke structures).
Model checking algorithms perform, in a fully automatic way, an (implicit or explicit) exhaustive enumeration of all the states reachable by the system, and either terminate positively, proving that all properties are met, or produce a counterexample, witnessing that some behavior falsifies a property. 

The model checking problem has systematically been investigated in the context of 
classical, point-based temporal logics, like LTL, CTL, and CTL$^*$, which predicate 
over single computation points/states, while it is still largely unexplored in the 
interval logic setting.

Interval temporal logics (ITLs) have been proposed as a formalism for temporal 
representation and reasoning more expressive than standard point-based ones~\cite{HS91,Ven90,chopping_intervals}.
They take intervals, instead of points, as their primitive temporal entities. Such a choice 
gives them the ability to cope with advanced temporal properties, such as actions with duration, accomplishments, and temporal aggregations, which can not be properly dealt with by standard, point-based temporal logics.

Expressiveness of ITLs makes them well suited for many applications in a variety of computer science fields, 
including artificial intelligence (reasoning about action and change, qualitative reasoning, planning, configuration and multi-agent systems, and computational linguistics), theoretical computer science (formal verification, synthesis), 
and databases (temporal and spatio-temporal databases) \cite{DBLP:journals/logcom/BowmanT03,DBLP:conf/ecp/GiunchigliaT99,DBLP:conf/tacas/LomuscioR06,DBLP:journals/ai/Pratt-Hartmann05,DBLP:series/eatcs/ChaochenH04,digital_circuits_thesis,DBLP:journals/corr/MontanariS14,DBLP:journals/fuin/MarcinkowskiM14,roadmap_intervals}.
However, this great expressiveness is a double-edged sword: in most cases the satisfiability problem for ITLs turns out to be undecidable, and, in the few cases of decidable ITLs, the standard proof machinery, like Rabin's theorem, is usually not applicable.

The most prominent ITL is Halpern and Shoham's modal logic of time intervals (HS, for short)~\cite{HS91}. HS features one modality for each of the 13 possible ordering
relations between pairs of intervals (the so-called Allen's relations~\cite{All83}), apart from the equality relation.
In~\cite{HS91}, it has been shown that the satisfiability problem for 
HS interpreted over all relevant (classes of) linear orders is 
undecidable. Since then, a lot of work has been done on the satisfiability
problem for HS fragments, which has shown that undecidability prevails over
them (see~\cite{DBLP:journals/amai/BresolinMGMS14} for an up-to-date
account of undecidable fragments).
However, meaningful exceptions exist, including the interval logic of temporal
neighbourhood $\AAbar$ and the interval logic of sub-intervals $\D$~\cite{BGMS10,BGMS09,BMSS11,MPS10}.

In this paper, we focus our attention on the model checking problem for HS, for which, as we said, 
little work has been done~\cite{DBLP:conf/time/MontanariMPP14,MMMPP15,LM13,LM14,LM15}
(it is worth pointing out that, in contrast to the case of point-based, linear temporal logics, there is not 
an easy reduction from the model checking problem to validity/satisfiability for ITL).

\paragraph{Related work}
In the classical formulation of the model checking problem~\cite{CGP02}, 
%systems are usually modelled as (finite) labelled state-transition graphs (Kripke structures), and
point-based temporal logics are used to analyze, for each path in a Kripke structure, 
how proposition letters labelling the states change from one state to the next one along the path. 
In interval-based model checking, in order to check interval properties of computations, one needs to collect 
information about states into computation stretches. This amounts to interpreting each 
finite path of a Kripke structure (a track) as an interval, and to suitably defining its labelling 
on the basis of the proposition letters that hold on the states composing it.

%\subsection{Related work}
In \cite{DBLP:conf/time/MontanariMPP14}, Montanari et al.\ give a first characterization of the model checking problem for full HS, interpreted over finite Kripke structures (under the homogeneity assumption~\cite{Roe80}, according to which a proposition letter holds on an interval if and only if it holds on all its sub-intervals). In that paper, the authors introduce the basic elements of the general picture, namely, the interpretation of HS formulas over (abstract) interval models, the mapping of finite Kripke structures into (abstract) interval models, the notion of track descriptor, and a small model theorem proving (with a non-elementary procedure) the decidability of the model checking problem for full HS against finite Kripke structures. Many of these notions will be recalled in the following section.
In \cite{MMMPP15}, Molinari et al.\ work out the model checking problem for full HS in all its details, and prove that it is EXPSPACE-hard, if a succinct encoding of formulas is allowed, and PSPACE-hard otherwise.

In~\cite{LM13,LM14,LM15}, Lomuscio and Michaliszyn address the model checking problem for some fragments of HS extended with epistemic modalities. Their semantic assumptions differ from those made in \cite{DBLP:conf/time/MontanariMPP14}, making it difficult to compare the outcomes of the two research directions. In both cases, formulas of interval temporal logic are evaluated over finite paths/tracks 
obtained from the unravelling of a finite Kripke structure. However, while in \cite{DBLP:conf/time/MontanariMPP14} a proposition letter holds over an interval (track) if and only if it holds over all its states (homogeneity assumption), in~\cite{LM13,LM14} truth of proposition letters on a track/interval depends only on their values at its endpoints. 

In~\cite{LM13}, the authors focus their attention on the HS fragment $\BED$ of Allen's relations \emph{started-by}, \emph{finished-by}, and \emph{contains} (since modality $\hsD$ is definable in terms of modalities $\hsB$ and $\hsE$, $\BED$ is actually as expressive as $\BE$), extended with epistemic modalities. They consider a restricted form of model checking, which verifies the given specification against a single (finite) initial computation interval. Their goal is indeed to reason about a given computation of a multi-agent system, rather than on all its admissible computations.
They prove that the considered model checking problem is PSPACE-complete; moreover, they show that the same problem restricted to the pure temporal fragment $\BED$, that is, the one obtained by removing epistemic modalities, is in PTIME. These results do not come as a surprise as they trade expressiveness for efficiency: modalities $\hsB$ and $\hsE$ allow one to access only sub-intervals of the initial one, whose number is quadratic in the length (number of states) of the initial interval.

In~\cite{LM14}, they show that the picture drastically changes with other fragments of HS, that allow one to access infinitely many tracks/intervals. In particular, they prove that the model checking problem for the HS fragment $\ABbarL$ of Allen's relations \emph{meets}, \emph{starts}, and \emph{before} (since modality $\hsL$ is definable in terms of modality $\hsA$, $\ABbarL$ is actually as expressive as $\ABbar$), extended with epistemic modalities, is decidable with a non-elementary upper bound. Note that, thanks to modalities $\hsA$ and $\hsBt$, formulas of $\ABbarL$ can possibly refer to infinitely many (future) tracks/intervals.

Finally, in~\cite{LM15}, Lomuscio and Michaliszyn show how to use regular expressions in order to specify the way in which tracks/intervals of a Kripke structure get labelled. Such an extension leads to a significant increase in 
expressiveness, as the labelling of an interval is no more determined by that of its endpoints, but it depends on the ordered sequence of states the interval consists of. They also prove that there is not a corresponding
increase in computational complexity, as the complexity bounds given in \cite{LM13,LM14} still hold with the new semantics: the model checking problem for $\BED$ is still in PSPACE, and it is non-elementarily decidable for $\ABbarL$.

\paragraph{Main contributions}
In this paper, we elaborate on the approach to ITL model checking outlined in \cite{DBLP:conf/time/MontanariMPP14} and
we propose an original solution to the problem for some relevant HS fragments based on the notion of track representative. We first prove that the model checking problem for two large HS fragments,
namely, the fragment $\AAbarBBbarEbar$ (resp., $\AAbarEBbarEbar$) of Allen's relations \emph{meets}, \emph{met-by}, \emph{started-by} (resp., \emph{finished-by}), \emph{starts} and \emph{finishes}, is in EXPSPACE. Moreover, we show that it is PSPACE-hard (NEXP-hard, if a succinct encoding of formulas is used). Then, we identify some well-behaved HS fragments, which are still expressive enough to capture meaningful interval properties of state-transition systems, such as mutual exclusion, state reachability, and non-starvation, whose model checking problem exhibits a considerably lower computational complexity, notably, $(i)$ the fragment $\AAbarBbarEbar$, whose model checking problem is PSPACE-complete, and $(ii)$ the fragment $\HSforall$, including formulas of $\AAbarBE$ where only universal modalities are allowed and negation can be applied to propositional formulas only, whose model checking problem is coNP-complete.
%
%(i) the fragment $\HSforall$, including formulas of $\AAbarBE$ where only universal modalities are allowed and negation %can be applied to propositional formulas only, whose model checking problem is coNP-complete, and (ii) the fragments 
%$\AAbarBbarEbar$ and $\AAbar$, whose model checking is PSPACE-complete and in PSPACE, respectively.
%

\begin{figure}[t]
\centering
\scalebox{0.9}{
\newcommand{\cellThree}[3]{
\begin{tabular}{c|c}
\rule[-1ex]{0pt}{3.5ex}
\multirow{2}{*}{#1} & #2 \\ 
\hhline{~=}\rule[-1ex]{0pt}{3.5ex}
 & #3 
\end{tabular}}

\newcommand{\cellTwo}[2]{\begin{tabular}{c|c}
\rule[-1ex]{0pt}{3.5ex}
#1 & #2 \\
\end{tabular}}

\begin{tikzpicture}[->,>=stealth',shorten >=1pt,auto,semithick,main node/.style={rectangle,draw, inner sep=0pt}]  

\tikzstyle{gray node}=[fill=gray!30]

    \node [main node,gray node](0) at (-4,0) {\cellTwo{$\AAbarBbarEbar$}{PSPACE-complete}};
    \node [main node,gray node](1) at (3,0)  {\cellTwo{$\ABbar$}{PSPACE-complete}};
    \node [main node](2) at (-0.2,-1.4) {\cellThree{$\AAbar$}{\cellcolor{gray!30}{PSPACE}}{\cellcolor{gray!30}{coNP-hard}}};
    \node [main node,gray node](3) at (-4,-3) {\cellTwo{$\HSforall$}{coNP-complete}};
    \node [main node,gray node](4) at (3.5,-3) {\cellTwo{$\HSprop$}{coNP-complete}};
    
    \node [main node](5) at (-4,2) {\cellThree{$\AAbarBBbarEbar$}{\cellcolor{gray!30}{EXPSPACE}}{\cellcolor{gray!30}{PSPACE-hard}}};
    \node [main node](6) at (-4.5,4) {\cellThree{succinct $\AAbarBBbarEbar$}{\cellcolor{gray!30}{EXPSPACE}}{\cellcolor{gray!30}{NEXP-hard}}};
    
    \node [main node](7) at (2.5,2) {\cellThree{$\AAbarBE$}{nonELEMENTARY \cite{MMMPP15}}{PSPACE-hard \cite{MMMPP15}}};
    \node [main node](8) at (2.5,4) {\cellThree{$HS$}{nonELEMENTARY \cite{MMMPP15}}{PSPACE-hard \cite{MMMPP15}}};
    \node [main node](9) at (2,6) {\cellThree{succinct $\AAbarBE$}{nonELEMENTARY \cite{MMMPP15}}{EXPSPACE-hard \cite{MMMPP15}}};
    \node [main node](10) at (2,8) {\cellThree{succinct $HS$}{nonELEMENTARY \cite{MMMPP15}}{EXPSPACE-hard \cite{MMMPP15}}};
   
    \path
    (1) edge [swap] node {hardness} (0) 
    (0) edge [bend right,swap] node {upper-bound} (2.west)
        edge node {hardness} (5)
    (4) edge [bend right,swap] node {hardness }(2.east)
    (4.north west) edge [out=160,in=20] node {hardness} (3.north east)
    (3.south east) edge [swap,out=340,in=200] node {upper-bound} (4.south west)
    (7) edge [swap] node {hardness} (8)
    (9) edge [swap] node {hardness} (10);

\end{tikzpicture}}
\caption{Complexity of model checking for HS fragments.}\label{fig:overv}
\end{figure}
In Figure~\ref{fig:overv}, we summarize known (white boxes) and new (grey boxes) results about complexity of model checking for HS fragments.

%\begin{itemize}
%    \item $\HSforall$, including formulas of $\AAbarBE$ in which only universal modalities are allowed and negation can be applied to propositional formulas only; its model checking problem is coNP-complete;
%    \item $\AAbarBbarEbar$ and $\AAbar$, whose model checking is PSPACE-complete and in PSPACE, respectively.
%\end{itemize}

The main technical contributions of the paper can be summarized as follows.
\begin{itemize}
\item \emph{Track descriptors}. We start with some background knowledge about HS and Kripke structures, and then we show how the latter can be mapped into interval-based structures, called \emph{abstract interval models}, over which 
HS formulas are evaluated. Each track in a Kripke structure is interpreted as an interval, which becomes an (atomic) object of the domain of an abstract interval model. The labeling of an interval is defined on the basis of the states that compose it, according to the homogeneity assumption~\cite{Roe80}. Then, we introduce \emph{track descriptors}~\cite{DBLP:conf/time/MontanariMPP14}. A track descriptor is a tree-like structure providing information about a possibly infinite set of tracks (the number of admissible track descriptors for a given Kripke structure is  finite). Being associated with the same descriptor is indeed a sufficient condition for two tracks to be indistinguishable with respect to satisfiability of $\AAbarBBbarEbar$ formulas, provided that the nesting depth of $\hsB$ modality is less than or equal to the depth of the descriptor itself. Finally, we introduce the key notions of \emph{descriptor sequence for a track} and \emph{cluster}, and the relation of \emph{descriptor element indistinguishability}, which allow us to determine when two prefixes of some track are associated with the same descriptor, avoiding the expensive operation of explicitly constructing track descriptors.

\item \emph{A small model theorem}. The main result of the paper is a small model theorem, showing that we can 
restrict the verification of an $\AAbarBBbarEbar$ formula to a finite number of bounded-length \emph{track representatives}. A track representative is a track that can be analyzed in place of all---possibly infinitely many---tracks associated with its descriptor. 		
%Due to their finite length, the quantity of possible representatives is finite, and we 
We use track representatives to devise an EXPSPACE 
%representative-based 
model checking algorithm for $\AAbarBBbarEbar$. Descriptor element indistinguishability plays a fundamental role in the proof of the bound to the maximum length of representatives, and it allows us to show the completeness of the algorithm, which considers all the possible representatives. 
In addition, we prove that the model checking problem for $\AAbarBBbarEbar$ is PSPACE-hard, NEXP-hard if a \emph{succinct encoding} of formulas is used (it is worth noticing that the proposed algorithm requires exponential working space also in the latter case).

\item \emph{Well-behaved HS fragments}. We first show that the proposed model checking algorithm can verify formulas with a constant nesting depth of $\hsB$ modality by using polynomial working space. This allows us to conclude that 
the model checking problem for $\AAbarBbarEbar$ formulas (which lack modality $\hsB$) is in PSPACE.
Then, we prove that the model checking problem for $\ABbar$ is PSPACE-hard. PSPACE-completeness of $\AAbarBbarEbar$ (and $\ABbar$) immediately follows. Next, we deal with the fragment $\HSforall$. We first provide a coNP model checking algorithm for $\HSforall$, and then we show that model checking for the pure propositional fragment $\HSprop$ is coNP-hard. The two results together allow us to conclude that the model checking problem for both $\HSprop$ and $\HSforall$ is coNP-complete. In addition, upper and lower bounds to the complexity of the problem for $\AAbar$ (the logic of temporal neighbourhood) directly follow: since $\AAbar$ is a fragment of $\AAbarBbarEbar$ and $\HSprop$ is a fragment of $\AAbar$, complexity of model checking for $\AAbar$ is in between coNP and PSPACE.
\end{itemize}

\paragraph{Organization of the paper} In Section~\ref{sec:backgr}, we provide some background knowledge. Then, in
Section~\ref{sec:Bdescript}, we introduce track descriptors~\cite{DBLP:conf/time/MontanariMPP14} and, 
in Section~\ref{sec:clusters}, we formally define the key relation of indistinguishability over descriptor elements.
In Section~\ref{sec:representatives}, we describe an EXPSPACE model checking algorithm for $\AAbarBBbarEbar$ based
on track representatives. We also show how to obtain a PSPACE model checking algorithm for $\AAbarBbarEbar$ by
suitably tailoring the one for $\AAbarBBbarEbar$. In Section~\ref{subsec:AAbarBbarEbar}, we prove that model checking for $\AAbarBbarEbar$ is PSPACE-hard; PSPACE-completeness immediately follows. Moreover, we get for free a lower bound to the complexity of the model checking problem for $\AAbarBBbarEbar$, which turns out to be PSPACE-hard (in the appendix, we show that the problem is NEXP-hard if a succinct encoding of formulas is used). Finally, in Section~\ref{subsec:forallAAbarBE} we provide a coNP model checking algorithm for $\HSforall$ and then we show that the problem is actually coNP-complete. Conclusions give a short assessment of the work done and describe future research directions.

\section{Preliminaries}\label{sec:backgr}

\subsection{The interval temporal logic HS}
Interval-based approaches to temporal representation and reasoning have been successfully pursued in computer science and artificial intelligence. An interval algebra to reason about intervals and their relative order was first proposed by Allen~\cite{All83}. Then, a systematic logical study of ITLs was done by Halpern and Shoham, who introduced the 
logic HS featuring one modality for each Allen interval relation~\cite{HS91}, except for equality.
\begin{table}[tbp]
\centering
\caption{Allen's interval relations and corresponding HS modalities.}\label{allen}
\begin{tabular}{cclc}
\hline
Allen's relation & HS & Definition w.r.t. interval structures &  Example\\ 
\hline

 &   &   & \multirow{7}{*}{\begin{tikzpicture}[scale=0.96]
\path[use as bounding box] (-0.4,0.2) rectangle (3.4,-3.0);
\coordinate [label=left:\textcolor{red}{$x$}] (A0) at (0,0);
\coordinate [label=right:\textcolor{red}{$y$}] (B0) at (1.5,0);
\draw[red] (A0) -- (B0);
\fill [red] (A0) circle (2pt);
\fill [red] (B0) circle (2pt);

\coordinate [label=left:$v$] (A) at (1.5,-0.5);
\coordinate [label=right:$z$] (B) at (2.5,-0.5);
\draw[black] (A) -- (B);
\fill [black] (A) circle (2pt);
\fill [black] (B) circle (2pt);

\coordinate [label=left:$v$] (A) at (2,-1);
\coordinate [label=right:$z$] (B) at (3,-1);
\draw[black] (A) -- (B);
\fill [black] (A) circle (2pt);
\fill [black] (B) circle (2pt);

\coordinate [label=left:$v$] (A) at (0,-1.5);
\coordinate [label=right:$z$] (B) at (1,-1.5);
\draw[black] (A) -- (B);
\fill [black] (A) circle (2pt);
\fill [black] (B) circle (2pt);

\coordinate [label=left:$v$] (A) at (0.5,-2);
\coordinate [label=right:$z$] (B) at (1.5,-2);
\draw[black] (A) -- (B);
\fill [black] (A) circle (2pt);
\fill [black] (B) circle (2pt);

\coordinate [label=left:$v$] (A) at (0.5,-2.5);
\coordinate [label=right:$z$] (B) at (1,-2.5);
\draw[black] (A) -- (B);
\fill [black] (A) circle (2pt);
\fill [black] (B) circle (2pt);

\coordinate [label=left:$v$] (A) at (1.3,-3);
\coordinate [label=right:$z$] (B) at (2.3,-3);
\draw[black] (A) -- (B);
\fill [black] (A) circle (2pt);
\fill [black] (B) circle (2pt);

\coordinate (A1) at (0,-3);
\coordinate (B1) at (1.5,-3);
\draw[dotted] (A0) -- (A1);
\draw[dotted] (B0) -- (B1);
\end{tikzpicture}}\\ 

\emph{meets} & $\hsA$ & $[x,y]\mathpzc{R}_A[v,z]\iff y=v$ &\\ 

\emph{before} & $\hsL$ & $[x,y]\mathpzc{R}_L[v,z]\iff y<v$ &\\ 
 
\emph{started-by} & $\hsB$ & $[x,y]\mathpzc{R}_B[v,z]\iff x=v\wedge z<y$ &\\ 

\emph{finished-by} & $\hsE$ & $[x,y]\mathpzc{R}_E[v,z]\iff y=z\wedge x<v$ &\\ 

\emph{contains} & $\hsD$ & $[x,y]\mathpzc{R}_D[v,z]\iff x<v\wedge z<y$ &\\ 

\emph{overlaps} & $\hsO$ & $[x,y]\mathpzc{R}_O[v,z]\iff x<v<y<z$ &\\ 
\hline
\end{tabular}
\end{table}
Table~\ref{allen} depicts 6 of the 13 Allen's relations
together with the corresponding HS (existential) modalities. 
The other 7 are equality and the 6 inverse relations 
(given a binary relation $\mathpzc{R}$, the inverse relation $\overline{\mathpzc{R}}$ is such that $b \overline{\mathpzc{R}} a$ if and only if $a \mathpzc{R} b$). 

The language of HS features a set of proposition letters $\mathpzc{AP}$, the Boolean connectives $\neg$ and $\wedge$, and a temporal modality for each of the (non trivial) Allen's relations, namely, $\hsA$, $\hsL$, $\hsB$, $\hsE$, $\hsD$, $\hsO$, $\hsAt$, $\hsLt$, $\hsBt$, $\hsEt$, $\hsDt$ and $\hsOt$.
HS formulas are defined by the following grammar:
\begin{equation*}
    \psi ::= p \;\vert\; \neg\psi \;\vert\; \psi \wedge \psi \;\vert\; \langle X\rangle\psi \;\vert\; \langle \overline{X}\rangle\psi, \ \ \mbox{ with } p\in\mathpzc{AP},\; X\in\{A,L,B,E,D,O\}.
\end{equation*}
We will make use of the standard abbreviations of propositional logic, e.g., we will write $\psi \vee \phi$ for $\neg(\neg\psi \wedge \neg\phi)$, $\psi \rightarrow \phi$ for $\neg \psi \vee \phi$, and $\psi \leftrightarrow \phi$ for $\left(\psi \rightarrow \phi\right)\wedge\left(\phi \rightarrow \psi\right)$.
Moreover, for all $X$, dual universal modalities $[X]\psi$ and $[\overline{X}]\psi$ are defined as $\neg\langle X\rangle\neg\psi$ and $\neg\langle \overline{X} \rangle\neg\psi$, respectively. 

We will assume the \emph{strict semantics} of HS: only intervals consisting of at least two points are allowed. Under that assumption, HS modalities are \emph{mutually exclusive} and \emph{jointly exhaustive}, 
%only in the strict semantics, 
that is, exactly one of them holds between any two intervals. However, the strict semantics can easily be ``relaxed'' to include point intervals, and all results we are going to prove hold for the non-strict HS semantics as well.
All HS modalities can be expressed in terms of $\hsA$, $\hsB$, and $\hsE$, and the inverse modalities $\hsAt, \hsBt$, and $\hsEt$, as follows:
\begin{equation*}
\begin{array}{cc}
\hsL\psi\equiv\hsA\hsA\psi & \qquad \hsLt\psi\equiv\hsAt\hsAt\psi \\ 
\hsD\psi\equiv\hsB\hsE\psi\equiv \hsE\hsB\psi & \qquad \hsDt\psi\equiv\hsBt\hsEt\psi \equiv\hsEt\hsBt\psi \\ 
\hsO\psi\equiv\hsE\hsBt\psi & \qquad \hsOt\psi\equiv\hsB\hsEt\psi
\end{array}.
\end{equation*}

We denote by $\mathsf{X_1\cdots X_n}$ the fragment of HS that features modalities $\langle X_1\rangle,\cdots, \langle X_n\rangle $ only. 

HS can be viewed as a multi-modal logic with the 6 primitive modalities $\hsA$, $\hsB$, $\hsE$, $\hsAt$, $\hsBt$, and $\hsEt$. Accordingly, HS semantics can be defined over a multi-modal Kripke structure, called here an \emph{abstract interval model}, in which (strict) intervals are treated as atomic objects and Allen's relations as simple binary relations between pairs of them.

\begin{definition}[\cite{MMMPP15}]
An \emph{abstract interval model} is a tuple $\mathpzc{A}=(\mathpzc{AP},\mathbb{I},A_\mathbb{I},B_\mathbb{I},E_\mathbb{I},\sigma)$, 
where
$\mathpzc{AP}$ is a finite set of proposition letters, 
$\mathbb{I}$ is a possibly infinite set of atomic objects (worlds),
$A_\mathbb{I}$, $B_\mathbb{I}$, and $E_\mathbb{I}$ are three binary relations over $\mathbb{I}$, and
$\sigma:\mathbb{I}\mapsto 2^{\mathpzc{AP}}$ is a (total) labeling function which assigns a set of proposition letters to each world.
\end{definition}

Intuitively, in the interval setting, $\mathbb{I}$ is a set of intervals, $A_\mathbb{I}$, $B_\mathbb{I}$, and $E_\mathbb{I}$ are interpreted as Allen's interval relations $A$ (\emph{meets}), $B$
(\emph{started-by}), and $E$ (\emph{finished-by}), respectively, and $\sigma$ assigns to each interval the set of proposition letters that hold over it.

Given an abstract interval model $\mathpzc{A}=(\mathpzc{AP},\mathbb{I},A_\mathbb{I},B_\mathbb{I},E_\mathbb{I}, \sigma)$
and an interval $I\in\mathbb{I}$, truth of an HS formula over $I$ is defined by structural induction on the formula as follows:
\begin{itemize}
    \item $\mathpzc{A},I\models p$ if and only if $p\in \sigma(I)$, for any proposition letter $p\in\mathpzc{AP}$;
%    \item $\mathpzc{A},I\models \top$ and $\mathpzc{A},I\not\models \bot$
    \item $\mathpzc{A},I\models \neg\psi$ if and only if it is not true that $\mathpzc{A},I\models \psi$ (also denoted as $\mathpzc{A},I\not\models \psi$);
    \item $\mathpzc{A},I\models \psi \wedge \phi$ if and only if $\mathpzc{A},I\models \psi$ and $\mathpzc{A},I\models \phi$;
%        \item $\mathpzc{A},I\models \psi \vee \phi$ iff $\mathpzc{A},I\models \psi$ or $\mathpzc{A},I\models \phi$;
    \item $\mathpzc{A},I\models \langle X\rangle\psi$, for $X \in\{A,B,E\}$, if and only if there exists $J\in\mathbb{I}$ such that $I\, X_\mathbb{I}\, J$ and $\mathpzc{A},J\models \psi$;
    \item $\mathpzc{A},I\models \langle \overline{X}\rangle\psi$, for $\overline{X} \in\{\overline{A},\overline{B},\overline{E}\}$, if and only if there exists $J\in\mathbb{I}$ such that $J\, X_\mathbb{I}\, I$ and $\mathpzc{A},J\models \psi$.
\end{itemize}

\subsection{Kripke structures and abstract interval models}

In this section, we define a mapping 
from Kripke structures to abstract interval models that makes it possible to specify properties
of systems by means of HS formulas. 

\begin{definition}
A finite Kripke structure $\mathpzc{K}$ is a tuple $(\mathpzc{AP},W, \delta,\mu,w_0)$, where $\mathpzc{AP}$ is a set of proposition letters, $W$ is a finite set of states, 
$\delta\subseteq W\times W$ is a left-total relation between pairs of states,
$\mu:W\mapsto 2^\mathpzc{AP}$ is a total labelling function, and $w_0\in W$ is the initial state.
\end{definition}
For all $w\in W$, $\mu(w)$ is the set of proposition letters which hold at that state,
while $\delta$ is the transition relation which constrains the evolution of the system over time.

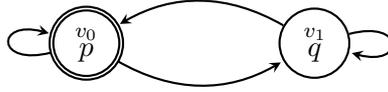
\begin{figure}[ht]
\centering
\begin{tikzpicture}[->,>=stealth,thick,shorten >=1pt,auto,node distance=3cm,every node/.style={circle,draw}]
    \node [style={double}](v0) {$\stackrel{v_0}{p}$};
    \node (v1) [right of=v0] {$\stackrel{v_1}{q}$};
    \draw (v0) to [bend right] (v1);
    \draw (v1) to [bend right] (v0);
    \draw (v0) to [loop left] (v0);
    \draw (v1) to [loop right] (v1);
\end{tikzpicture}%\vspace*{-0.1cm}
\caption{The Kripke structure $\mathpzc{K}_{2}$.}\label{KEquiv}
\end{figure}
Figure \ref{KEquiv} depicts a Kripke structure, 
$\mathpzc{K}_{2}$, with two states (the initial state is identified by a double circle).
Formally, $\mathpzc{K}_{2}$ is defined by the following quintuple:
\[(\{p,q\},\{v_0,v_1\},\{(v_0,v_0),(v_0,v_1),(v_1,v_0),(v_1,v_1)\},\mu,v_0),\]
where $\mu(v_0)=\{p\}$ and $\mu(v_1)=\{q\}$.

\begin{definition}
A track $\rho$ over a finite Kripke structure $\mathpzc{K}=(\mathpzc{AP},W,\delta,\mu,w_0)$ is a \emph{finite} sequence of states $v_0\cdots v_n$, with $n\geq 1$, such that for all $i\in\{0,\cdots ,n-1\}$, $(v_i,v_{i+1})\in \delta$.
\end{definition}

Let $\Trk_\mathpzc{K}$ be the (infinite) set of all tracks over a finite Kripke structure $\mathpzc{K}$. For any track $\rho=v_0\cdots v_n \in \Trk_\mathpzc{K}$, we define:
\begin{itemize}
\item $|\rho|=n+1$;
\item $\rho(i)=v_i$, for $0\leq i\leq |\rho|-1$;
\item $\states(\rho)=\{v_0,\cdots,v_n\}\subseteq W$;
\item $\intstates(\rho)=\{v_1,\cdots,v_{n-1}\}\subseteq W$;
\item $\fst(\rho)=v_0$ and $\lst(\rho)=v_n$;
\item $\rho(i,j)=v_i\cdots v_j$ is a subtrack of $\rho$, for $0\leq i < j\leq |\rho|-1$;
\item $\Pref(\rho)=\{\rho(0,i) \mid 1\leq i\leq |\rho|-2\}$ is the set of all proper prefixes of $\rho$. Note that $\Pref(\rho)=\emptyset$ if $|\rho|=2$;
\item $\Suff(\rho)=\{\rho(i,|\rho|-1) \mid 1\leq i\leq |\rho|-2\}$ is the set of all proper suffixes of $\rho$. Note that $\Suff(\rho)=\emptyset$ if $|\rho|=2$.
\end{itemize}
It is worth pointing out that the length of tracks, prefixes, and suffixes is greater than 1, as they will be mapped into strict intervals.
If $\fst(\rho)=w_0$ (the initial state of $\mathpzc{K}$), $\rho$ is said to be an \emph{initial track}. In the following, we will denote by $\rho\cdot \rho'$ the concatenation of the tracks $\rho$ and $\rho'$, assuming that $(\lst(\rho),\fst(\rho'))\in\delta$ hence $\rho\cdot \rho' \in \Trk_\mathpzc{K}$; moreover,
by $\rho^n$ we will denote the track obtained by concatenating $n$ copies of $\rho$.

An abstract interval model (over $\Trk_\mathpzc{K}$) can be naturally associated with a finite Kripke structure by interpreting every track as an interval bounded by its first and last states.
\begin{definition}[\cite{MMMPP15}]\label{def:inducedmodel}
The abstract interval model induced by a finite Kripke structure $\mathpzc{K}=(\mathpzc{AP},W,\allowbreak \delta,\mu,w_0)$ is the abstract interval model $\mathpzc{A}_\mathpzc{K}=(\mathpzc{AP},\mathbb{I},A_\mathbb{I},B_\mathbb{I},E_\mathbb{I},\sigma)$, where: 
\begin{itemize}
	\item $\mathbb{I}=\Trk_\mathpzc{K}$,
	\item $A_\mathbb{I}=\left\{(\rho,\rho')\in\mathbb{I}\times\mathbb{I}\mid \lst(\rho)=\fst(\rho')\right\}$,
	\item $B_\mathbb{I}=\left\{(\rho,\rho')\in\mathbb{I}\times\mathbb{I}\mid \rho'\in\Pref(\rho)\right\}$, 
	\item $E_\mathbb{I}=\left\{(\rho,\rho')\in\mathbb{I}\times\mathbb{I}\mid \rho'\in\Suff(\rho)\right\}$, and
	\item $\sigma:\mathbb{I}\mapsto 2^\mathpzc{AP}$ where $\sigma(\rho)=\bigcap_{w\in\states(\rho)}\mu(w)$, for all $\rho\in\mathbb{I}$.
\end{itemize}
\end{definition}

In Definition \ref{def:inducedmodel}, relations $A_\mathbb{I},B_\mathbb{I}$, and $E_\mathbb{I}$ are interpreted as Allen's interval relations \emph{meets}, \emph{started-by}, and \emph{finished-by}, respectively. Moreover, according to the definition of $\sigma$, a proposition letter $p\in\mathpzc{AP}$ holds over $\rho=v_0\cdots v_n$ if and only if it holds over all the states $v_0, \ldots , v_n$ of $\rho$. This conforms to the \emph{homogeneity principle}, according to which a proposition letter holds over an interval if and only if it holds over all of its subintervals.

\emph{Satisfiability} of an HS formula over a finite  Kripke structure can be given in terms of induced abstract interval models.
\begin{definition}%[Satisfiability of HS formulas over finite Kripke structures]
Let $\mathpzc{K}$ be a finite Kripke structure, $\rho$ be a track in $\Trk_\mathpzc{K}$, and
$\psi$ be an HS formula. We say that the pair $(\mathpzc{K},\rho)$ satisfies $\psi$, denoted by $\mathpzc{K},\rho\models \psi$, if and only if it holds that $\mathpzc{A}_\mathpzc{K},\rho\models \psi$.
\end{definition}

\begin{definition}
Let $\mathpzc{K}$ be a finite Kripke structure and $\psi$ be an HS formula. We say that
$\mathpzc{K}$ models $\psi$, denoted by $\mathpzc{K}\models \psi$, if and only if 
\emph{for all initial tracks} $\rho\in\Trk_\mathpzc{K}$, it holds that $\mathpzc{K},\rho\models \psi.$
\end{definition}
The \emph{model checking problem} for HS over finite Kripke structures is the problem of deciding whether $\mathpzc{K}\models \psi$.
Since Kripke structures feature an infinite number of tracks, the problem is not trivially decidable.

We end the section by providing some meaningful examples of properties of tracks and/or transition systems that can be expressed in HS.

\begin{example}\label{ex:ellk}
The formula $[B]\bot$ can be used to select all and only the tracks of length $2$. Given any $\rho$, with $|\rho|=2$, independently of $\mathpzc{K}$, it indeed holds that $\mathpzc{K},\rho\models [B]\bot$, because $\rho$ has no (strict) prefixes. On the other hand, it holds that $\mathpzc{K},\rho\models \hsB\top$ if (and only if) $|\rho|>2$.
Finally, let $\ell(k)$ be a shorthand for $[B]^{k-1}\bot \wedge \hsB^{k-2}\top$. 
It holds that $\mathpzc{K},\rho\models \ell(k)$ if and only if $|\rho|=k$.
\end{example}

\begin{example}
Let us consider the finite Kripke structure $\mathpzc{K}_{2}$ depicted in Figure~\ref{KEquiv}. 
The truth of the following statements can be easily checked:
\begin{itemize}
    \item $\mathpzc{K}_{2},(v_0v_1)^2\models \hsA q$;
    \item $\mathpzc{K}_{2},v_0v_1v_0\not\models \hsA q$;
    \item $\mathpzc{K}_{2},(v_0v_1)^2\models \hsAt p$;
    \item $\mathpzc{K}_{2},v_1v_0v_1\not\models \hsAt p$.
\end{itemize}
The above statements show that modalities $\hsA$ and $\hsAt$ can be used to distinguish between tracks that start or end at different states. In particular, note that $\hsA$ (resp., $\hsAt$) allows one to ``move'' to \emph{any} track branching on the right (resp., left) of the considered one, e.g., if $\rho=v_0v_1v_0$, then $\rho\, A_\mathbb{I}\, v_0v_0$, $\rho\, A_\mathbb{I}\, v_0v_1$, $\rho\, A_\mathbb{I}\, v_0v_0v_0$, $\rho\, A_\mathbb{I}\, v_0v_0v_1$, $\rho\, A_\mathbb{I}\, v_0v_1v_0v_1$, and so on.

Modalities $\hsB$ and $\hsE$ can be used to distinguish between tracks encompassing a different number of iterations of a given loop. This is the case, for instance, with the following statements:
\begin{itemize}    
    \item $\mathpzc{K}_{2},(v_1v_0)^3 v_1\models \hsB \big(\hsA p \wedge \hsB \left(\hsA p \wedge \hsB\hsA p\right)\big)$;
    \item $\mathpzc{K}_{2},(v_1v_0)^2 v_1\not\models \hsB \big(\hsA p \wedge \hsB \left(\hsA p \wedge \hsB\hsA p\right)\big)$.
\end{itemize}

Finally, HS makes it possible to distinguish between $\rho_1=v_0^3v_1v_0$ and $\rho_2=v_0v_1v_0^3$, which feature the same number of iterations of the same loops, but differ in the order of loop occurrences: $\mathpzc{K}_{2},\rho_1\models \hsB\left(\hsA q \wedge \hsB\hsA p\right)$ but $\mathpzc{K}_{2},\rho_2\not\models \hsB\left(\hsA q \wedge \hsB\hsA p\right)$.
\end{example}

\begin{example}\label{ExampleSched}
In Figure \ref{KSched}, we give an example of a finite Kripke structure $\mathpzc{K}_{Sched}$ that models the behaviour of a scheduler serving three processes which are continuously requesting the use of a common resource. The initial state %(denoted by a double circle) 
is $v_0$: no process is served in that state. In any other state $v_i$ and $\overline{v}_i$, with $i \in \{1,2,3\}$, the $i$-th process is served (this is denoted by the fact that $p_i$ holds in those 
states). For the sake of readability, edges are marked either by $r_i$, for $request(i)$, or by $u_i$, for $unlock(i)$. However, edge labels do not have a semantic value, i.e., they are neither part of the structure definition, nor proposition letters; they are simply used to ease reference to edges. 
Process $i$ is served in state $v_i$, then, after ``some time'', a transition $u_i$ from $v_i$ to $\overline{v}_i$ is taken; subsequently, process $i$ cannot be served again immediately, as $v_i$ is not directly reachable from $\overline{v}_i$ (the scheduler cannot serve the same process twice in two successive rounds). A transition $r_j$, with $j\neq i$, from $\overline{v}_i$ to $v_j$ is then taken and process $j$ is served. This structure can be easily generalised to a higher number of processes.

\begin{figure}[ht]
\centering
\begin{tikzpicture}[->,>=stealth',shorten >=1pt,auto,node distance=3cm,thick,main node/.style={circle,draw}]

  \node[main node,style={double}] (1) {$\stackrel{v_0}{\emptyset}$};
  \node[main node,fill=gray!35] (3) [below of=1] {$\stackrel{v_2}{p_2}$};
  \node[main node,fill=gray!50] (2) [left of=3] {$\stackrel{v_1}{p_1}$};
  \node[main node,fill=gray!20] (4) [right of=3] {$\stackrel{v_3}{p_3}$};
  \node[main node,fill=gray!50] (5) [below of=2] {$\stackrel{\overline{v_1}}{p_1}$};
  \node[main node,fill=gray!35] (6) [below of=3] {$\stackrel{\overline{v_2}}{p_2}$};
  \node[main node,fill=gray!20] (7) [below of=4] {$\stackrel{\overline{v_3}}{p_3}$};

  \path[every node/.style={font=\small}]
    (1) edge [bend right] node[left] {$r_1$} (2)
        edge node {$r_2$} (3)
        edge [bend left] node[right] {$r_3$} (4)
    (2) edge [bend right] node [left] {$u_1$} (5)
    (3) edge node {$u_2$} (6)
    (4) edge [bend left] node [right] {$u_3$} (7)
    (5) edge node[very near end,left] {$r_2$} (3)
    (5) edge [out=270,in=270,looseness=1.3] node [near start,swap] {$r_3$} (4)
    (6) edge node[very near end,right] {$r_1$} (2)
    (6) edge node[very near end,left] {$r_3$} (4)
    (7) edge [out=270,in=270,looseness=1.3] node [near start] {$r_1$} (2)
    (7) edge node[very near end,right] {$r_2$} (3)
    ;
\end{tikzpicture}
\vspace{-1.5cm}
\caption{The Kripke structure $\mathpzc{K}_{Sched}$.}\label{KSched}
\end{figure}
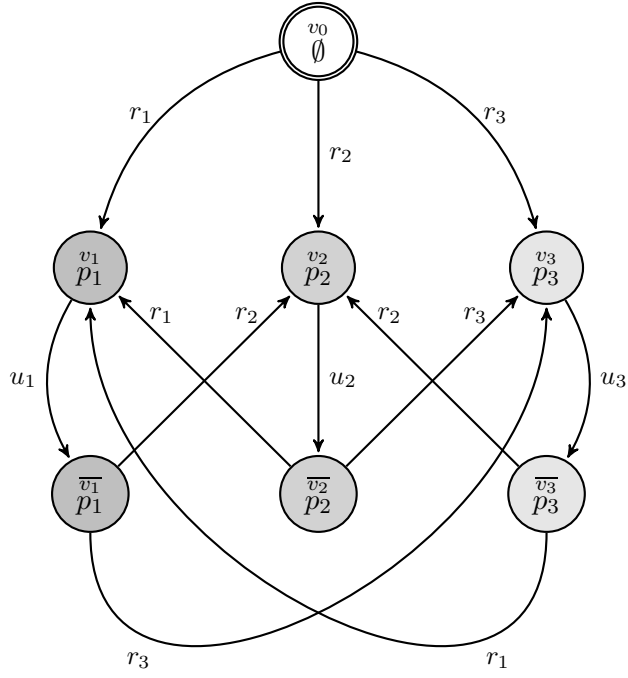

We show how some meaningful properties to check against 
%the Kripke structure 
$\mathpzc{K}_{Sched}$ 
%of Figure \ref{KSched} 
can be expressed in HS, and, in particular, by means of formulas of the fragment $\mathsf{\overline{A}E}$---a subfragment of the fragment $\AAbarEBbarEbar$, on which we will focus in the following. 
In all formulas, we force the validity of the considered property over all legal computation sub-intervals by using modality $[E]$ (all computation sub-intervals are suffixes of at least one initial track). 
%Moreover, we will make use of the shorthand $wit_{\geq 2}(\{p_1,p_2,p_3\})$ for the formula:
%\begin{equation*}
%(\hsD p_1\wedge \hsD p_2)\vee (\hsD p_1\wedge \hsD p_3)\vee (\hsD p_2\wedge \hsD p_3),
%\end{equation*}
%which states that there exist at least two sub-intervals such that $p_i$ holds over the former and $p_j$ over the latter, with $i,j\in\{1,2,3\}$ and $j\neq i$ (such a formula can be easily generalised to an arbitrary set of proposition letters and to any natural number $k$).
Truth of the following statements can be easily checked:
\begin{itemize}
    \item $\mathpzc{K}_{Sched}\models[E]\big(\hsE^4\top \rightarrow (\chi(p_1,p_2) \vee \chi(p_1,p_3) \vee \chi(p_2,p_3))\big)$,\\ with $\chi(p,q):=\hsE\hsAt p \wedge \hsE\hsAt q$;
    \item $\mathpzc{K}_{Sched}\not\models[E](\hsE^{10}\top \rightarrow \hsE\hsAt p_3)$;
    \item $\mathpzc{K}_{Sched}\not\models[E](\hsE^6 \rightarrow (\hsE\hsAt p_1 \wedge \hsE\hsAt p_2 \wedge \hsE\hsAt p_3))$.
\end{itemize}
The first formula requires that in any suffix of length at least 6 of an initial track, at least 2 proposition letters are witnessed. $\mathpzc{K}_{Sched}$ satisfies the formula since a process cannot be executed twice consecutively. 

The second formula requires that in any suffix of length at least 12 of an initial track, process 3 is executed at least once in some internal states. $\mathpzc{K}_{Sched}$ does not satisfy the formula since the scheduler, being unfair, can avoid executing a process ad libitum. 

The third formula requires that in any suffix of length at least 8 of an initial track, $p_1$, $p_2$, and $p_3$ are all witnessed. The only way to satisfy this property would be to constrain the scheduler to execute the three processes in a strictly periodic manner, which is not the case.
\end{example}

\section{The notion of $B_k$-descriptor}\label{sec:Bdescript}
For any finite Kripke structure $\mathpzc{K}$, one can find a corresponding induced abstract interval model $\mathpzc{A}_\mathpzc{K}$, featuring one interval for each track of $\mathpzc{K}$. As we already pointed out, since $\mathpzc{K}$ has loops (each state must have at least one successor, as the transition relation $\delta$ is left-total), the number of its tracks, and thus the number of intervals of $\mathpzc{A}_\mathpzc{K}$, is infinite.

In \cite{MMMPP15}, Molinari et al.\ showed that, given a bound $k$ on the structural complexity of HS formulas (that is, on the nesting depth of $\hsB$ and $\hsE$ modalities), it is
possible to obtain a \emph{finite} representation for $\mathpzc{A}_\mathpzc{K}$, which is equivalent to
$\mathpzc{A}_\mathpzc{K}$ with respect to satisfiability of HS formulas with structural complexity less than or equal to $k$.
By making use of such a representation, they prove that the model checking problem for (full) HS is decidable (with a non-elementary upper bound).

In this paper, we first restrict our attention to $\AAbarBBbarEbar$ and 
provide a model checking algorithm of lower complexity. All the results we are going to prove hold also for the fragment $\AAbarEBbarEbar$ by symmetry.
We start with the definition of some basic notions.

\begin{definition} 
\label{def:B-nesting}
Let $\psi$ be an $\AAbarBBbarEbar$ formula. The B-nesting depth of $\psi$, denoted by $\nestb(\psi)$, is defined by induction on the complexity of the formula as follows:
\begin{itemize}
        \item $\nestb(p)=0$, for any proposition letter $p\in\mathpzc{AP}$;
        \item $\nestb(\neg\psi)=\nestb(\psi)$;
        \item $\nestb(\psi\wedge\phi)=\max\{\nestb(\psi),\nestb(\phi)\}$;
        \item $\nestb(\hsB\psi)=1+\nestb(\psi)$;
        \item $\nestb(\hsX\psi)=\nestb(\psi)$, for $X\in\{A, \overline{A}, \overline{B}, \overline{E}\}$.
\end{itemize}
\end{definition}

Making use of Definition \ref{def:B-nesting},
we can introduce the relation(s) of $k$-equivalence over tracks.
\begin{definition}\label{def:k-equivalence}
Let $\mathpzc{K}$ be a finite Kripke structure, $\rho$ and $\rho'$ be two tracks in $\Trk_\mathpzc{K}$, and $k\in\mathbb{N}$. We say that $\rho$ and $\rho'$ are $k$-equivalent if and only if, for every $\AAbarBBbarEbar$ formula $\psi$ with $\nestb(\psi)=k$, $\mathpzc{K},\rho\models \psi$ if and only if $\mathpzc{K},\rho'\models \psi$.
\end{definition}
It can be easily proved that $k$-equivalence propagates downwards.
\begin{proposition}\label{propdown}
Let $\mathpzc{K}$ be a finite Kripke structure, $\rho$ and $\rho'$ be two tracks in $\Trk_\mathpzc{K}$, and $k\in\mathbb{N}$. If $\rho$ and $\rho'$ are $k$-equivalent, then they are $h$-equivalent, 
for all $0\leq h\leq k$.
\end{proposition}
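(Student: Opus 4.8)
The plan is to reduce $h$-equivalence to $k$-equivalence by a simple padding argument. The point to keep in mind is that Definition~\ref{def:k-equivalence} quantifies over $\AAbarBBbarEbar$ formulas whose B-nesting depth is \emph{exactly} the index, so what has to be shown is that agreement of $\rho$ and $\rho'$ on all formulas of B-nesting depth exactly $k$ forces agreement on all formulas of B-nesting depth exactly $h$, for every $h\le k$.

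First I would fix an arbitrary $\AAbarBBbarEbar$ formula $\psi$ with $\nestb(\psi)=h$, where $0\le h\le k$, and build from it a formula $\psi^\star$ of B-nesting depth exactly $k$ that is logically equivalent to $\psi$ on every track of every finite Kripke structure. A convenient choice is
\[
\psi^\star \;:=\; \psi \wedge \bigl(\hsB^k\top \vee \neg\hsB^k\top\bigr).
\]
By Definition~\ref{def:B-nesting} we have $\nestb(\hsB^k\top)=k$, hence $\nestb(\psi^\star)=\max\{h,k\}=k$; and since the second conjunct is a propositional tautology (true on every interval), $\mathpzc{K},\tau\models\psi^\star$ if and only if $\mathpzc{K},\tau\models\psi$, for every $\tau\in\Trk_\mathpzc{K}$. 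Next I would apply the hypothesis that $\rho$ and $\rho'$ are $k$-equivalent to the formula $\psi^\star$: since $\nestb(\psi^\star)=k$, this gives $\mathpzc{K},\rho\models\psi^\star$ iff $\mathpzc{K},\rho'\models\psi^\star$. Chaining this with the two equivalences $\mathpzc{K},\rho\models\psi \Leftrightarrow \mathpzc{K},\rho\models\psi^\star$ and $\mathpzc{K},\rho'\models\psi \Leftrightarrow \mathpzc{K},\rho'\models\psi^\star$ yields $\mathpzc{K},\rho\models\psi$ iff $\mathpzc{K},\rho'\models\psi$. As $\psi$ was an arbitrary $\AAbarBBbarEbar$ formula of B-nesting depth $h$, this is precisely $h$-equivalence of $\rho$ and $\rho'$.

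There is no genuinely hard step here; the only subtlety worth flagging is the one noted above, namely that $k$-equivalence is defined in terms of B-nesting depth \emph{exactly} $k$ rather than at most $k$, which is exactly why the tautological padding conjunct $\hsB^k\top \vee \neg\hsB^k\top$ is introduced — it lifts the depth of $\psi$ from $h$ up to $k$ without altering truth values. Once that observation is made, the proposition follows from a one-line chain of equivalences, and no induction on the structure of formulas is in fact needed.
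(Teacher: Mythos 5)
Your proof is correct and uses essentially the same padding idea as the paper's own proof: the paper conjoins $\psi$ with whichever of $\hsB^k\top$ or $\neg\hsB^k\top$ actually holds on $\rho$ (a two-case argument), whereas you conjoin with the tautology $\hsB^k\top \vee \neg\hsB^k\top$, which lifts the B-nesting depth to exactly $k$ while neatly avoiding the case split. Both constructions produce a depth-$k$ formula equivalent to $\psi$, so the hypothesis of $k$-equivalence applies and the conclusion follows.
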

\begin{proof}
Let us assume that $\mathpzc{K},\rho\models \psi$, with $0\leq \nestb(\psi)=h\leq k$. Consider the formula $\hsB^k\top$, whose B-nesting depth is equal to $k$. It holds that either $\mathpzc{K},\rho\models \hsB^k\top$ or $\mathpzc{K},\rho\models \neg\hsB^k\top$. In the first case, we have that $\mathpzc{K},\rho\models \hsB^k\top \wedge \psi$. Since $\nestb(\hsB^k\top \wedge \psi)=k$, from the hypothesis, it immediately follows that $\mathpzc{K},\rho'\models \hsB^k\top \wedge \psi$, and thus $\mathpzc{K},\rho'\models \psi$. The other case can be dealt with in a symmetric way.
\end{proof}

We are now ready to define the key notion of \emph{descriptor} for a track of a Kripke structure.

\begin{definition}[\cite{MMMPP15}] \label{def:trackdescr}
Let $\mathpzc{K}=(\mathpzc{AP},W,\delta,\mu,v_0)$ be a finite Kripke structure, $\rho \in \Trk_\mathpzc{K}$,
%$ be a track in $\Trk_\mathpzc{K}$, 
and $k\in\mathbb{N}$. The $B_k$-descriptor for $\rho$ is a labelled tree $\mathpzc{D}=(V,E,\lambda)$ of depth $k$, where $V$ is a finite set of vertices, $E\subseteq V\times V$ is a set of edges, 
and $\lambda:V\mapsto W\times 2^W\times W$ is a node labelling function, inductively defined as follows:
    \begin{itemize}
        \item for $k=0$, the $B_k$-descriptor for $\rho$ is the tree                   $\mathpzc{D} = (\{\Root(\mathpzc{D})\},\emptyset,\lambda)$, where 
%        \begin{equation*}
            $\lambda(\Root(\mathpzc{D}))=(\fst(\rho),\intstates(\rho), \lst(\rho));$
%        \end{equation*}                
        
        \item for $k>0$, the $B_k$-descriptor for $\rho$ is the tree                   $\mathpzc{D} = (V,E,\lambda)$, where 
%        \begin{equation*}
            $\lambda(\Root(\mathpzc{D}))=\allowbreak (\fst(\rho),\allowbreak \intstates(\rho),\lst(\rho)),$
%        \end{equation*}                
        which satisfies the following conditions:
            \begin{enumerate}
                \item for each prefix $\rho'$ of $\rho$, there exists $v\in V$ such that $(\Root(\mathpzc{D}),v)\in E$ and the subtree rooted in $v$ is the $B_{k-1}$-descriptor  for $\rho'$;
                \item for each vertex $v\in V$ such that $(\Root(\mathpzc{D}),v)\in E$, there exists a prefix $\rho'$ of $\rho$ such that the subtree rooted in $v$ is the $B_{k-1}$-descriptor for $\rho'$;
                \item\label{noiso} for all pairs of edges $(\Root(\mathpzc{D}),v'), (\Root(\mathpzc{D}),v'')\in E$, if the subtree rooted in $v'$ is isomorphic to the subtree rooted in $v''$, then $v'=v''$ \footnote{Here and in the following, we write subtree for maximal subtree. Moreover, isomorphism between descriptors accounts for node labels, as well (not only for the structure of descriptors).}.
            \end{enumerate}
    \end{itemize}
\end{definition}

Condition \ref{noiso} of Definition \ref{def:trackdescr} simply states that no two subtrees whose roots are siblings can be isomorphic.
A $B_0$-descriptor $\mathpzc{D}$ for a track consists of its root only, which is denoted by $\Root(\mathpzc{D})$. 
A label of a node will be referred to as a \emph{descriptor element}: the notion of descriptor element bears analogies with an abstraction technique for discrete time Duration Calculus proposed by Hansen et al.\ in~\cite{HPB14}, which, on its turn, is connected to Parikh images~\cite{Par66} (a descriptor element can be seen as a qualitative analogue of this). 

Basically, for any $k \geq 0$, the label of the root of the $B_k$-descriptor $\mathpzc{D}$ for
$\rho$ is the triple $(\fst(\rho),\intstates(\rho),\lst(\rho))$. 
%Then 
Each prefix $\rho'$ of $\rho$ is associated with some subtree whose root is labelled with $(\fst(\rho'),\intstates(\rho'),\lst(\rho'))$ and is a child of the root of $\mathpzc{D}$.
Such a construction is then iteratively applied to the children of the root until either depth $k$ is reached or a track of length 2 is being considered on a node.
 
Hereafter equality between descriptors is considered \emph{up to isomorphism}.

As an example, in Figure \ref{removeisom} we show the $B_2$-descriptor for the track $\rho = v_0v_1v_0v_0v_0v_0v_1$ of $\mathpzc{K}_{2}$ (Figure \ref{KEquiv}). It is worth noting that there exist two distinct prefixes of $\rho$, that is, the tracks $\rho'=v_0v_1v_0v_0v_0v_0$ and $\rho''=v_0v_1v_0v_0v_0$, which have the same $B_1$-descriptor. Since, according to Definition \ref{def:trackdescr}, no tree can occur more than once as a subtree of the same node (in this example, the root), in the $B_2$-descriptor for $\rho$, prefixes $\rho'$ and $\rho''$ are represented by the same tree (the first subtree of the root on the left). This shows that, in general, the root of a descriptor for a track with $h$ proper prefixes does not necessarily have $h$ children.
\begin{figure}[htbp] 
\centering
\resizebox{\linewidth}{!}{
\begin{tikzpicture}[level distance=15mm,every node/.style={fill=gray!20}]
\Tree [.$(v_0,\{v_0,v_1\},v_1)$
	[.$(v_0,\{v_0,v_1\},v_0)$
		$(v_0,\{v_0,v_1\},v_0)$
		$(v_0,\{v_1\},v_0)$
		$(v_0,\emptyset,v_1)$
]	[.$(v_0,\{v_0,v_1\},v_0)$
		$(v_0,\{v_1\},v_0)$
		$(v_0,\emptyset,v_1)$
]	[.$(v_0,\{v_1\},v_0)$
		$(v_0,\emptyset,v_1)$
]	$(v_0,\emptyset,v_1)$
]
\end{tikzpicture}}
\caption{The $B_2$-descriptor for the track $v_0v_1v_0v_0v_0v_0v_1$ of $\mathpzc{K}_{2}$.}\label{removeisom}
\end{figure}
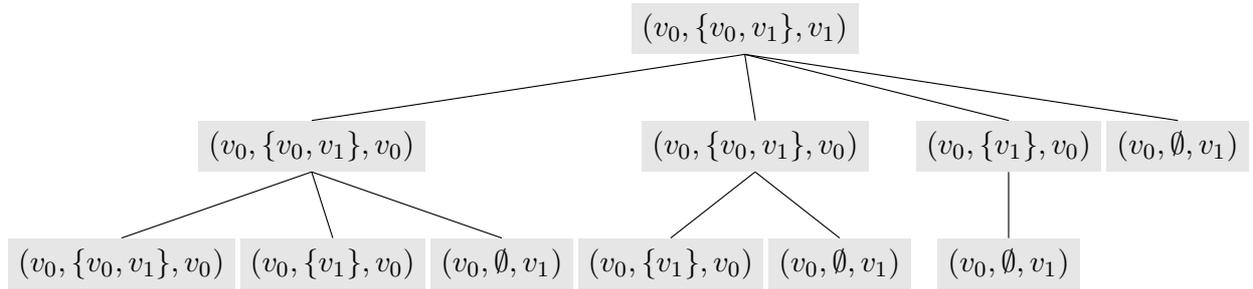

$B$-descriptors do not convey, in general, enough information to determine which track they were built from; however, they can be used to determine which $\AAbarBBbarEbar$ formulas are satisfied by the track from which they were built.

In \cite{MMMPP15}, the authors prove that, for a finite Kripke structure $\mathpzc{K}$, there exists a 
\emph{finite number} (non-elementary w.r.t. $|W|$ and $k$) of possible $B_k$-descriptors. Moreover, the number of nodes of a descriptor has a non-elementary upper bound as well.
Since the number of tracks of $\mathpzc{K}$ is infinite, and for any $k\in\mathbb{N}$ the set of $B_k$-descriptors for its tracks is finite, at least one $B_k$-descriptor must be the $B_k$-descriptor of \emph{infinitely many} tracks. Thus, $B_k$-descriptors naturally induce an equivalence relation of finite index over the set of tracks of a finite Kripke structure (\emph{$k$-descriptor equivalence relation}).

\begin{definition}
Let $\mathpzc{K}$ be a finite Kripke structure, $\rho,\rho' \in \Trk_\mathpzc{K}$,
and $k \in \mathbb{N}$. We say that $\rho$ and $\rho'$ are \emph{$k$-descriptor equivalent} (denoted as $\rho\sim_k\rho'$)
if and only if the $B_k$-descriptors for $\rho$ and $\rho'$ coincide.
\end{definition}

%The following lemma holds.
\begin{lemma}\label{symmextlemma}
Let $k\in\mathbb{N}$, $\mathpzc{K}=(\mathpzc{AP},W,\delta,\mu,v_0)$ be a finite Kripke structure and $\rho_1$, $\rho_1'$, $\rho_2$, $\rho_2'$ be tracks in $\Trk_\mathpzc{K}$ such that
$\left(\lst(\rho_1),\fst(\rho_1')\right)\in\delta$, $\left(\lst(\rho_2),\fst(\rho_2')\right)\in\delta$,
$\rho_1\sim_k\rho_2$ and $\rho_1'\sim_k\rho_2'$.
Then $\rho_1\cdot \rho_1'\sim_k\rho_2\cdot\rho_2'$.
\end{lemma}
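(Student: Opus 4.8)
The plan is to prove the statement by induction on $k$, exploiting the recursive structure of $B_k$-descriptors (Definition \ref{def:trackdescr}). Throughout, I would use the obvious fact that the label of the root of the $B_k$-descriptor for a concatenation $\rho\cdot\rho'$ is determined by the root labels of the $B_k$-descriptors for $\rho$ and $\rho'$: indeed $\fst(\rho\cdot\rho')=\fst(\rho)$, $\lst(\rho\cdot\rho')=\lst(\rho')$, and $\intstates(\rho\cdot\rho')=\intstates(\rho)\cup\{\lst(\rho)\}\cup\intstates(\rho')\cup\{\fst(\rho')\}$, where $\lst(\rho)$ and $\fst(\rho')$ can themselves be read off the root labels. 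Hence $\rho_1\sim_k\rho_2$ and $\rho_1'\sim_k\rho_2'$ already give that $\rho_1\cdot\rho_1'$ and $\rho_2\cdot\rho_2'$ have the same root label; this settles the base case $k=0$ and also handles the root label in the inductive step.

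For the inductive step ($k>0$), assuming the claim for $k-1$, I would show that the multiset of (isomorphism types of) subtrees hanging off the root of the $B_k$-descriptor for $\rho_1\cdot\rho_1'$ coincides with that for $\rho_2\cdot\rho_2'$; by condition \ref{noiso} of Definition \ref{def:trackdescr} (no two sibling subtrees are isomorphic) and the fact that the subtrees are exactly the $B_{k-1}$-descriptors of the proper prefixes, this suffices. The proper prefixes of $\rho_1\cdot\rho_1'$ split into two groups: (a) prefixes of the form $\sigma$ with $\sigma\in\Pref(\rho_1)$ or $\sigma=\rho_1$ — equivalently the proper prefixes of $\rho_1$ together with $\rho_1$ itself, \emph{except} we must be careful that $\rho_1$ is a prefix of $\rho_1\cdot\rho_1'$ only when $|\rho_1'|\geq 2$, which holds since tracks have length $\geq 2$; and (b) prefixes of the form $\rho_1\cdot\tau$ with $\tau\in\Pref(\rho_1')$. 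For group (a): the proper prefixes of $\rho_1$ are matched, via $\rho_1\sim_k\rho_2$ hence $\rho_1\sim_{k-1}\rho_2$ (Definition, plus the easy observation that $\sim_k$ implies $\sim_{k-1}$, analogous to Proposition \ref{propdown}), by the fact that $\rho_1$ and $\rho_2$ have the same $B_k$-descriptor, so their sets of child $B_{k-1}$-descriptors coincide; and the whole-of-$\rho_1$ prefix contributes the $B_{k-1}$-descriptor of $\rho_1$, which equals that of $\rho_2$. For group (b): for each $\tau\in\Pref(\rho_1')$ I apply the inductive hypothesis to the four tracks $\rho_1,\tau,\rho_2,\tau$ (using $\rho_1\sim_{k-1}\rho_2$ and $\tau\sim_{k-1}\tau$ trivially) to get $\rho_1\cdot\tau\sim_{k-1}\rho_2\cdot\tau$; and the prefixes $\tau$ of $\rho_1'$ are themselves matched with prefixes of $\rho_2'$ via $\rho_1'\sim_k\rho_2'$, again combined with the inductive hypothesis. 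Collecting these, every $B_{k-1}$-descriptor appearing below the root on the $\rho_1\cdot\rho_1'$ side also appears on the $\rho_2\cdot\rho_2'$ side and vice versa, and passing to the set (which is what the descriptor records, by condition \ref{noiso}) finishes the step.

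The main obstacle I anticipate is bookkeeping rather than conceptual: one has to argue cleanly that the matching of prefixes is "compatible" across the two groups, i.e.\ that a prefix of $\rho_1$ and a prefix of the form $\rho_1\cdot\tau$ cannot accidentally collapse in a way that is not mirrored on the $\rho_2,\rho_2'$ side. This is handled by noting that the descriptor only stores the \emph{set} of $B_{k-1}$-subtrees, so collapses are allowed and harmless as long as the two sets are equal — which is exactly what the inductive hypothesis plus $\rho_1\sim_k\rho_2$, $\rho_1'\sim_k\rho_2'$ deliver. A minor subtlety worth flagging explicitly is the length condition: $\rho_1$ is a \emph{proper} prefix of $\rho_1\cdot\rho_1'$ precisely because $|\rho_1'|\geq 2$ forces $|\rho_1\cdot\rho_1'|\geq|\rho_1|+1$, and similarly $\Pref(\rho_1\cdot\rho_1')$ is exhausted by groups (a) and (b); I would dispatch these observations up front so the induction reads smoothly.
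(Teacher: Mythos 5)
Your overall strategy is the same as the paper's: induction on $k$, with the base case handled by composing root labels and the inductive step by matching the $B_{k-1}$-descriptors of the proper prefixes of the two concatenations. However, there is a genuine gap in your case split. You claim that $\Pref(\rho_1\cdot\rho_1')$ is exhausted by (a) the proper prefixes of $\rho_1$ together with $\rho_1$ itself and (b) the tracks $\rho_1\cdot\tau$ with $\tau\in\Pref(\rho_1')$. This is false: by the paper's conventions $\Pref(\rho_1')$ contains only prefixes of length at least $2$, so the proper prefix of $\rho_1\cdot\rho_1'$ of length $|\rho_1|+1$, namely $\rho_1\cdot\fst(\rho_1')$, falls into neither group (when $|\rho_1'|=2$, group (b) is even empty, yet this prefix is still present). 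Moreover, this case cannot be absorbed into your use of the inductive hypothesis, because the hypothesis concatenates two \emph{tracks}, and a single state is not a track.

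The paper handles exactly this case with a separate sub-argument: writing $v_{in}'=\fst(\rho_1')=\fst(\rho_2')$ (the equality holds because $\rho_1'\sim_k\rho_2'$ forces equal root labels), it observes that the $B_{k-1}$-descriptors of $\rho_1\cdot v_{in}'$ and $\rho_2\cdot v_{in}'$ have the same root label, and that their depth-$(k-2)$ subtrees are those appearing in the $B_{k-1}$-descriptors of $\rho_1$ and $\rho_2$ (which coincide since $\rho_1\sim_{k-1}\rho_2$), possibly augmented by the $B_{k-2}$-descriptor of $\rho_1$, respectively of $\rho_2$, which again coincide. You would need to add this short argument. The rest of your plan --- including the observation that the descriptor records only the \emph{set} of sibling subtrees, so collapses of prefixes are harmless, and the use of the inductive hypothesis at level $k-1$ on $\rho_1,\tau,\rho_2,\tau'$ for matched proper prefixes $\tau,\tau'$ of $\rho_1',\rho_2'$ --- matches the paper's proof.
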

The proof is reported in \ref{symmextlemmaProof}. The next proposition immediately follows from Lemma~\ref{symmextlemma}.
\begin{proposition}[Left and right extensions]\label{extBk}
Let $\mathpzc{K}=(\mathpzc{AP},W,\delta,\mu,v_0)$ be a finite Kripke structure, $\rho, \rho'$ be two tracks in $\Trk_\mathpzc{K}$ such that $\rho\sim_k\rho'$, and $\overline{\rho}\in \Trk_\mathpzc{K}$. 
If $\left(\lst(\rho),\fst(\overline{\rho})\right)\in\delta$, then $\rho\cdot \overline{\rho}\sim_k\rho'\cdot\overline{\rho}$, and if 
%If $\tilde{\rho}\in \Trk_\mathpzc{K}$ is such that 
$\left(\lst(\overline{\rho}),\fst(\rho)\right)\in\delta$, then $\overline{\rho}\cdot \rho\sim_k\overline{\rho}\cdot\rho'$.
\end{proposition}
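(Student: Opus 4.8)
The plan is to obtain Proposition~\ref{extBk} as an immediate corollary of Lemma~\ref{symmextlemma}, the only extra ingredients being two elementary properties of the relation $\sim_k$: its reflexivity and the fact that it forces the two tracks to share their first and last states.

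First I would record these two facts. \emph{Reflexivity}: for every track $\overline{\rho}\in\Trk_\mathpzc{K}$ we have $\overline{\rho}\sim_k\overline{\rho}$, since a track trivially has the same $B_k$-descriptor as itself. \emph{Endpoint agreement}: if $\rho\sim_k\rho'$ then $\fst(\rho)=\fst(\rho')$, $\intstates(\rho)=\intstates(\rho')$, and $\lst(\rho)=\lst(\rho')$; indeed, by Definition~\ref{def:trackdescr} the root of the $B_k$-descriptor of a track is labelled with the triple $(\fst,\intstates,\lst)$, so coinciding descriptors have coinciding root labels. Endpoint agreement is exactly what lets us transfer a compatibility condition such as $(\lst(\rho),\fst(\overline{\rho}))\in\delta$ from $\rho$ to $\rho'$, and thereby guarantee that $\rho'\cdot\overline{\rho}$ and $\overline{\rho}\cdot\rho'$ are legal tracks of $\mathpzc{K}$ in the first place.

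For the right extension I would invoke Lemma~\ref{symmextlemma} with $\rho_1:=\rho$, $\rho_2:=\rho'$, and $\rho_1':=\rho_2':=\overline{\rho}$: the hypothesis $\rho_1\sim_k\rho_2$ is the assumed $\rho\sim_k\rho'$; $\rho_1'\sim_k\rho_2'$ holds by reflexivity; $(\lst(\rho_1),\fst(\rho_1'))\in\delta$ is the assumption $(\lst(\rho),\fst(\overline{\rho}))\in\delta$; and $(\lst(\rho_2),\fst(\rho_2'))\in\delta$ follows from that same assumption together with $\lst(\rho')=\lst(\rho)$ given by endpoint agreement. The lemma then yields $\rho\cdot\overline{\rho}\sim_k\rho'\cdot\overline{\rho}$. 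The left extension is handled symmetrically, applying Lemma~\ref{symmextlemma} with $\rho_1:=\rho_2:=\overline{\rho}$, $\rho_1':=\rho$, $\rho_2':=\rho'$, this time using reflexivity for $\rho_1\sim_k\rho_2$ and $\fst(\rho)=\fst(\rho')$ to obtain the compatibility condition for the second pair.

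I do not expect any genuine obstacle here: all of the real content — a simultaneous induction on $k$ showing that the $B_k$-descriptor of a concatenation is determined by the $B_k$-descriptors of its two factors — is already packaged in Lemma~\ref{symmextlemma}. The one point that must not be overlooked is the well-definedness of the concatenations on the right-hand side of the claimed equivalences, which is precisely why endpoint agreement has to be spelled out rather than taken for granted.
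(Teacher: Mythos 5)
Your proposal is correct and follows exactly the route the paper intends: the paper derives Proposition~\ref{extBk} as an immediate consequence of Lemma~\ref{symmextlemma}, and your instantiation of the lemma (using reflexivity of $\sim_k$ and the agreement of endpoints forced by coinciding root labels of the $B_k$-descriptors) supplies precisely the details the paper leaves implicit.
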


The next theorem proves that, for any pair of tracks $\rho,\rho'\in\Trk_\mathpzc{K}$, if $\rho\sim_k\rho'$, then $\rho$ and $\rho'$ are $k$-equivalent (see Definition \ref{def:k-equivalence}). 
\begin{theorem}[\cite{MMMPP15}]\label{satPresB}
Let $\mathpzc{K}$ be a finite Kripke structure, $\rho$ and $\rho'$ be two tracks in $\Trk_\mathpzc{K}$, and $\psi$ be a formula of $\AAbarBBbarEbar$ with $\nestb(\psi)=k$. If $\rho\sim_k\rho'$, then 
$\mathpzc{K},\rho\models\psi\iff \mathpzc{K},\rho'\models\psi$.
\end{theorem}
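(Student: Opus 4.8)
The plan is to prove, more generally, that for every $\AAbarBBbarEbar$ formula $\psi$ and every $k \geq \nestb(\psi)$, if $\rho \sim_k \rho'$ then $\mathpzc{K},\rho\models\psi \iff \mathpzc{K},\rho'\models\psi$, proceeding by structural induction on $\psi$. Allowing an arbitrary $k\geq\nestb(\psi)$ (rather than only $k=\nestb(\psi)$) is convenient because the Boolean cases may combine subformulas of smaller $B$-nesting depth; alternatively, one first observes that truncating the bottom level of a $B_k$-descriptor and re-merging the sibling subtrees that thereby become isomorphic produces the $B_{k-1}$-descriptor, so that $\rho\sim_k\rho'$ entails $\rho\sim_h\rho'$ for all $h\leq k$, and one may then argue only with $k=\nestb(\psi)$. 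Throughout I will use that $\rho\sim_k\rho'$ means the $B_k$-descriptors of $\rho$ and $\rho'$ coincide (up to isomorphism, node labels included), so in particular their roots carry the same label $(\fst(\cdot),\intstates(\cdot),\lst(\cdot))$; hence $\fst(\rho)=\fst(\rho')$, $\intstates(\rho)=\intstates(\rho')$, $\lst(\rho)=\lst(\rho')$, and therefore $\states(\rho)=\states(\rho')$. The propositional base case is then immediate: $\rho$ and $\rho'$ traverse the same states, so by homogeneity they satisfy the same proposition letters. The cases $\psi=\neg\phi$ and $\psi=\phi_1\wedge\phi_2$ follow at once from the inductive hypothesis, since $\nestb(\phi),\nestb(\phi_i)\leq\nestb(\psi)\leq k$.

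Next I would dispatch the modalities that do not decrease the $B$-nesting depth. For $\psi=\hsA\phi$: the set of witnesses for $\hsA$ at $\rho$, namely $\{J\in\Trk_\mathpzc{K}\mid \lst(\rho)=\fst(J)\}$, depends only on $\lst(\rho)$, which equals $\lst(\rho')$; hence the witness sets for $\rho$ and for $\rho'$ are literally the same, and no appeal to the inductive hypothesis is even needed. The case $\psi=\hsAt\phi$ is symmetric, using $\fst(\rho)=\fst(\rho')$. For $\psi=\hsBt\phi$: if $\mathpzc{K},\rho\models\hsBt\phi$, then some proper right-extension $\rho\cdot\overline{\rho}$ of $\rho$ (i.e.\ with $\rho\in\Pref(\rho\cdot\overline{\rho})$) satisfies $\phi$; since $\lst(\rho')=\lst(\rho)$ the track $\rho'\cdot\overline{\rho}$ is well-formed and is a proper right-extension of $\rho'$, and Proposition~\ref{extBk} gives $\rho\cdot\overline{\rho}\sim_k\rho'\cdot\overline{\rho}$; as $\nestb(\phi)=\nestb(\hsBt\phi)\leq k$, the inductive hypothesis yields $\mathpzc{K},\rho'\cdot\overline{\rho}\models\phi$, whence $\mathpzc{K},\rho'\models\hsBt\phi$. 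The converse direction, and the case $\psi=\hsEt\phi$ (left extensions in Proposition~\ref{extBk}, using $\fst(\rho)=\fst(\rho')$), are handled symmetrically.

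The heart of the argument is $\psi=\hsB\phi$, the only case in which the $B$-nesting depth strictly decreases. Here $\nestb(\psi)=1+\nestb(\phi)$, so $k\geq 1$ and $k-1\geq\nestb(\phi)$, and $B_{k-1}$-descriptors are meaningful. If $|\rho|=2$ then $\Pref(\rho)=\emptyset$ and the root of the $B_k$-descriptor of $\rho$ has no children, whereas a track of length $>2$ has a nonempty set of prefixes and hence a root with at least one child; since the descriptors of $\rho$ and $\rho'$ coincide, $|\rho|=2$ iff $|\rho'|=2$, and in that case both sides of the equivalence are false. Otherwise, assume $\mathpzc{K},\rho\models\hsB\phi$ with witness $\eta\in\Pref(\rho)$, $\mathpzc{K},\eta\models\phi$. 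By clause~1 of Definition~\ref{def:trackdescr} the $B_{k-1}$-descriptor of $\eta$ occurs as the subtree rooted at some child of the root of the $B_k$-descriptor of $\rho$; since this descriptor equals the one of $\rho'$, that same labelled subtree hangs off the root of the $B_k$-descriptor of $\rho'$, and by clause~2 of Definition~\ref{def:trackdescr} it is the $B_{k-1}$-descriptor of some prefix $\eta'\in\Pref(\rho')$; thus $\eta\sim_{k-1}\eta'$. The inductive hypothesis, applied to $\phi$ at level $k-1\geq\nestb(\phi)$, gives $\mathpzc{K},\eta'\models\phi$, hence $\mathpzc{K},\rho'\models\hsB\phi$; the converse is symmetric. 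I expect this last case to be the only real obstacle: one must transport a prefix-witness across the descriptor equality using its recursive definition, being mindful that clause~3 of Definition~\ref{def:trackdescr} collapses isomorphic sibling subtrees, so the induced correspondence $\eta\mapsto\eta'$ is in general \emph{not} a bijection between $\Pref(\rho)$ and $\Pref(\rho')$ — but since in both directions we only need the existence of a matching prefix on the other side, and clauses~1--2 guarantee exactly that two-sided existence, the collapsing is harmless.
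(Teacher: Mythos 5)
Your proof is correct and complete: the paper itself defers the proof of Theorem~\ref{satPresB} to \cite{MMMPP15}, but your structural induction --- strengthening the statement to all $k\geq\nestb(\psi)$, reading $\fst$, $\intstates$, and $\lst$ off the root label for the propositional, $\hsA$, and $\hsAt$ cases, invoking Proposition~\ref{extBk} for $\hsBt$ and $\hsEt$, and transporting a prefix witness at level $k-1$ via clauses~1--2 of Definition~\ref{def:trackdescr} for $\hsB$ --- is exactly the argument that the machinery developed in Section~\ref{sec:Bdescript} is designed to support, and it covers every operator of the fragment. Your closing remark that clause~3's collapsing of isomorphic sibling subtrees makes the induced prefix correspondence non-bijective but harmless (since only existence of a matching prefix is needed in each direction) is precisely the right point of care.
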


Since the set of $B_k$-descriptors for the tracks of a finite Kripke structure $\mathpzc{K}$ is finite, i.e., the equivalence relation $\sim_k$ has a finite index, there always exists a finite number of $B_k$-descriptors that ``satisfy'' an $\AAbarBBbarEbar$ formula $\psi$ with $\nestb(\psi)= k$ (this can be formally proved by a quotient construction \cite{MMMPP15}). 

\section{Clusters and descriptor element indistinguishability}\label{sec:clusters}

A $B_k$-descriptor provides a finite encoding for a possibly infinite set of tracks (the tracks associated with that descriptor). Unfortunately, the representation of $B_k$-descriptors as trees labelled over descriptor elements is highly redundant. For instance, given any pair of subtrees rooted in some children of the root of a descriptor, it is always the case that one of them is a subtree of the other:
the two subtrees are associated with two (different) prefixes of a track and one of them is necessarily a prefix of the other. In practice, the size of the tree representation of $B_k$-descriptors prevents their direct use in model checking algorithms, and makes it difficult to determine the intrinsic complexity of $B_k$-descriptors.

In this section, we devise a more compact representation of $B_k$-descriptors. 
Each class of the $k$-descriptor equivalence relation is a set of $k$-equivalent tracks. For any such class, we select (at least) one track representative whose length is (exponentially) bounded in both the size of $W$ (the set of states of the Kripke structure) and $k$. 
In order to determine such a bound, we consider suitable ordered sequences (possibly with repetitions) of descriptor elements of a $B_k$-descriptor. Let the \emph{descriptor sequence} for a track be the ordered sequence of descriptor elements associated with its prefixes. It can be easily checked that in a descriptor sequence descriptor elements can be repeated. We devise a criterion to avoid such repetitions whenever they cannot be distinguished by an $\AAbarBBbarEbar$ formula of $B$-nesting depth up to $k$.
\begin{definition}
Let $\rho=v_0v_1 \cdots v_n$ be a track of a finite Kripke structure. The descriptor sequence $\rho_{ds}$ for $\rho$ is $d_0 \cdots d_{n-1}$, where $d_i = \rho_{ds}(i)=(v_0, \intstates(v_0\cdots v_{i+1}),v_{i+1})$, for $i\in\{0,\ldots,n-1\}$. 
We denote by $DElm(\rho_{ds})$ the set of descriptor elements occurring in $\rho_{ds}$.
\end{definition}

\begin{figure}[t]
\centering
\begin{tikzpicture}[->,>=stealth',shorten >=1pt,auto,node distance=2cm,thick,main node/.style={circle,draw}]
    \node[main node,style={double}] (0) {$v_0$};
    \node[main node] (1) [above right of=0] {$v_1$};
    \node[main node] (2) [right of=0] {$v_2$};
    \node[main node] (3) [right of=1] {$v_3$};
  \path[every node/.style={font=\small}]
    (0) edge [loop left] (0)
    (0) edge (1)
        edge (2)
    (1) edge (2)
        edge (3)
    (2) edge [bend left] (1)
        edge [bend left] (3)
    (3) edge [loop right] (3)
    (3) edge (2)
    (2) edge [loop right] (2)
    ;
\end{tikzpicture}
\caption{An example of finite Kripke structure.}\label{ExK1}
\end{figure}
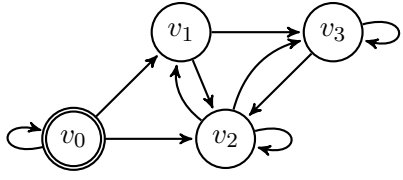

As an example, let us consider the finite Kripke structure of Figure \ref{ExK1} and the track $\rho = v_0v_0v_0v_1v_2v_1v_2v_3v_3v_2v_2$. The descriptor sequence for $\rho$ is:
\begin{multline}
	\rho_{ds}=(v_0,\emptyset , v_0)\boxed{(v_0,\{v_0\},v_0)}(v_0,\{v_0\},v_1)(v_0,\{v_0,v_1\},v_2)
	 \boxed{(v_0,\Gamma,v_1)(v_0,\Gamma,v_2)}\\ (v_0,\Gamma,v_3) \boxed{(v_0,\Delta,v_3)(v_0,\Delta,v_2)(v_0,\Delta,v_2)},
	\tag{*}\label{dsexample}
\end{multline}
where $\Gamma\!=\!\{v_0,v_1,v_2\}$, $\Delta\!=\!\{v_0,v_1,v_2,v_3\}$, and 
$DElm(\rho_{ds})$ is the set $\{(v_0,\emptyset , v_0),(v_0,\{v_0\},v_0),\allowbreak
(v_0,\{v_0\},v_1),\, (v_0,\{v_0,v_1\},v_2),\,
	(v_0,\Gamma,v_1),\, (v_0,\Gamma,v_2),\, (v_0,\Gamma,v_3),\, (v_0,\Delta,v_2),\, (v_0,\Delta,v_3)\}$.
The meaning of boxes in (\ref{dsexample}) will be clear later.

To express the relationships between descriptor elements occurring in a descriptor sequence,
we introduce a binary relation $\Rt$. Intuitively, given two descriptor elements $d'$ and $d''$
of a descriptor sequence, the relation $d'\Rt d''$ holds if $d'$ and $d''$ are the descriptor elements of two 
tracks $\rho'$ and $\rho''$, respectively, and  $\rho'$ is a prefix of $\rho''$.
\begin{definition}
Let $\rho_{ds}$ be the descriptor sequence for a track $\rho$ and let
%Let $\rho'$ and $\rho''$ be two proper prefixes\footnote{serve fare riferimento alle tracce?} of a track $\rho$ and let 
$d'=(v_{in},S',v_{fin}')$ and $d''=(v_{in},S'',v_{fin}'')$ be two descriptor elements in $\rho_{ds}$. It holds that
$d'\Rt d''$ if and only if $S'\cup\{v_{fin}'\}\subseteq S''$.
\end{definition}
 
Note that the relation $\Rt$ is transitive. In fact for all 
descriptor elements $d'\!=\!(v_{in},S',v_{fin}')$, $d''=(v_{in},S'',v_{fin}'')$ and $d'''=(v_{in},S''',v_{fin}''')$, if $d'\Rt d''$ and $d''\Rt d'''$, then $S'\cup\{v_{fin}'\}\subseteq S''$ and $S''\cup\{v_{fin}''\}\subseteq S'''$; it 
follows that $S'\cup\{v_{fin}'\} \subseteq S'''$, and thus $d'\Rt d'''$. 
The relation $\Rt$ is neither an equivalence relation nor a quasiorder, since $\Rt$ is neither 
reflexive (e.g., $(v_0,\{v_0\},v_1)\notRt (v_0,\{v_0\},v_1)$), nor symmetric 
(e.g., $(v_0,\{v_0\},v_1)\Rt (v_0,\{v_0,v_1\},v_1)$ and $(v_0, \{v_0,v_1\},v_1)\notRt 
(v_0,\{v_0\},v_1)$), nor antisymmetric (e.g., $(v_0,\{v_1,v_2\},v_1)\Rt (v_0,\{v_1,v_2\},
v_2)$ and $(v_0,\{v_1,v_2\},v_2)\Rt (v_0,\{v_1,v_2\},v_1)$, but the two elements are distinct).

It can be easily shown that $\Rt$ 
pairs descriptor elements of increasing prefixes of 
a track.
\begin{proposition}
Let $\rho_{ds}$ be the descriptor sequence for the track $\rho=v_0v_1\cdots v_n$. Then, \linebreak $\rho_{ds}(i)\Rt \rho_{ds}(j)$, for all $0\leq i<j<n$.
\end{proposition}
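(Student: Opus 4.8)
The plan is to unwind the definitions of the descriptor sequence and of the relation $\Rt$, and then reduce the claim to the elementary monotonicity of $\intstates$ along the prefixes of a fixed track. Recall that, by definition, $\rho_{ds}(i) = (v_0,\intstates(v_0\cdots v_{i+1}),v_{i+1})$ for $i\in\{0,\ldots,n-1\}$, so every descriptor element in $\rho_{ds}$ shares the same first component $v_0$. Hence, for $0\le i<j<n$, writing $d'=\rho_{ds}(i)=(v_0,S',v_{i+1})$ and $d''=\rho_{ds}(j)=(v_0,S'',v_{j+1})$ with $S'=\intstates(v_0\cdots v_{i+1})$ and $S''=\intstates(v_0\cdots v_{j+1})$, the assertion $d'\Rt d''$ is by definition exactly the set inclusion $S'\cup\{v_{i+1}\}\subseteq S''$.

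First I would observe $S'\cup\{v_{i+1}\}=\intstates(v_0\cdots v_{i+1})\cup\{v_{i+1}\}=\{v_1,\ldots,v_i,v_{i+1}\}$, i.e. the set of states strictly between the endpoints of $v_0\cdots v_{i+1}$ together with its last state; this is just $\{v_1,\ldots,v_{i+1}\}$. Next, since $i<j<n$ we have $i+1\le j$, so $\{v_1,\ldots,v_{i+1}\}\subseteq\{v_1,\ldots,v_{j-1}\}$ (the condition $i+1\le j-1$, equivalently $i\le j-2$; if instead $i=j-1$ then $i+1=j$ and we need $v_j\in\intstates(v_0\cdots v_{j+1})=\{v_1,\ldots,v_j\}$, which holds as $j\ge 1$). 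Finally $\{v_1,\ldots,v_{j-1}\}\subseteq\{v_1,\ldots,v_{j-1}\}=\intstates(v_0\cdots v_{j+1})=S''$, where the equality uses $j<n$ so that $v_0\cdots v_{j+1}$ is a genuine subtrack of $\rho$ whose internal states are exactly $v_1,\ldots,v_j$; more carefully, $\intstates(v_0\cdots v_{j+1})=\{v_1,\ldots,v_j\}\supseteq\{v_1,\ldots,v_{i+1}\}$ since $i+1\le j$. Chaining these inclusions gives $S'\cup\{v_{i+1}\}\subseteq S''$, which is precisely $\rho_{ds}(i)\Rt\rho_{ds}(j)$.

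There is essentially no obstacle here: the statement is a direct bookkeeping consequence of the fact that the internal-states set of a prefix grows monotonically as the prefix lengthens, and that a state which was the endpoint of a shorter prefix becomes an internal state of any strictly longer one. The only point requiring a word of care is the boundary case $j=i+1$, where one must note that the last state $v_{i+1}$ of the shorter track is absorbed into the internal states of the longer one precisely because $v_{j+1}$ (not $v_{i+1}$) is the new endpoint; this is why the hypothesis is $i<j$ (so the two descriptor elements are genuinely distinct in position) and why the stated range is $0\le i<j<n$ (so that $v_0\cdots v_{j+1}$ is still a subtrack of $\rho$). I would therefore present the argument as two or three lines of set manipulation, with an inline remark dispatching the $j=i+1$ case, and no appeal to any of the earlier lemmas is needed.
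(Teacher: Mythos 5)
Your proof is correct and is exactly the direct verification the paper has in mind (the paper omits the proof entirely, merely remarking that the property ``can be easily shown''): unwinding the definitions reduces the claim to $\{v_1,\ldots,v_{i+1}\}\subseteq\intstates(v_0\cdots v_{j+1})=\{v_1,\ldots,v_j\}$, which holds since $i+1\le j$. The only blemish is the intermediate line where you write $\intstates(v_0\cdots v_{j+1})$ as $\{v_1,\ldots,v_{j-1}\}$ before self-correcting to $\{v_1,\ldots,v_j\}$; the final, careful version of the inclusion is the right one.
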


We now partition descriptor elements into two different types.
\begin{definition}
A descriptor element $(v_{in},S,v_{fin})$ is a Type-1 descriptor element if $v_{fin}\notin S$,
while it is a Type-2 descriptor element if  $v_{fin}\in S$. 
\end{definition}
It can be easily checked that a descriptor element $d=(v_{in},S,v_{fin}$) is Type-1 if and only if $\Rt$ is not reflexive for  $d$. In fact, if $d\notRt d$, then $S\cup\{v_{fin}\}\not\subseteq S$, and thus $v_{fin}\notin S$. Conversely, if $v_{fin}\notin S$, then $d\notRt d$. It follows that a Type-1 descriptor element cannot occur more than once in a descriptor sequence. On the other hand, Type-2 descriptor elements may occur multiple times, and if a descriptor element occurs more than once in a descriptor sequence, then it is necessarily of Type-2.
\begin{proposition}
If both $d'\Rt d''$ and $d''\Rt d'$, for $d' = (v_{in},S',v'_{fin})$ and $d'' = (v_{in},S'',v''_{fin})$, then $v_{fin}'\in S'$, $v_{fin}''\in S''$, and  $S'=S''$; thus, both $d'$ and $d''$ are Type-2 descriptor elements. 
\end{proposition}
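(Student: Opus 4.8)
The plan is to unfold the definition of the relation $\Rt$ on both hypotheses and then read off the three claimed facts by elementary set inclusions. Recall that, by definition, $d'\Rt d''$ means $S'\cup\{v_{fin}'\}\subseteq S''$, and similarly $d''\Rt d'$ means $S''\cup\{v_{fin}''\}\subseteq S'$.

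First I would extract from $d'\Rt d''$ the two consequences $S'\subseteq S''$ and $v_{fin}'\in S''$, and from $d''\Rt d'$ the two consequences $S''\subseteq S'$ and $v_{fin}''\in S'$. Combining the two inclusions $S'\subseteq S''$ and $S''\subseteq S'$ yields $S'=S''$, which is the third claim. Substituting this equality into $v_{fin}'\in S''$ gives $v_{fin}'\in S'$, and substituting it into $v_{fin}''\in S'$ gives $v_{fin}''\in S''$, establishing the first two claims.

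Finally, by the definition of Type-2 descriptor elements, a descriptor element $(v_{in},S,v_{fin})$ is of Type-2 precisely when $v_{fin}\in S$; hence from $v_{fin}'\in S'$ we conclude $d'$ is Type-2, and from $v_{fin}''\in S''$ we conclude $d''$ is Type-2. Since every step is a one-line set-theoretic manipulation, there is no real obstacle here; the only thing to be careful about is keeping track of which inclusion comes from which hypothesis so that the directions of the subset relations are combined correctly. This argument also dovetails with the preceding remark that a descriptor element occurring more than once in a descriptor sequence must be of Type-2, since $\Rt$ holding in both directions between two occurrences of the same element forces reflexivity of $\Rt$ at that element.
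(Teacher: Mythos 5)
Your argument is correct and is exactly the intended one: the paper states this proposition without proof, as an immediate consequence of unfolding the definition of $\Rt$ ($S'\cup\{v_{fin}'\}\subseteq S''$ and $S''\cup\{v_{fin}''\}\subseteq S'$), combining the two inclusions to get $S'=S''$, and then reading off membership of the final states. Nothing is missing; your closing remark correctly ties the statement to the earlier observation that Type-1 elements are precisely those on which $\Rt$ fails to be reflexive.
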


We are now ready to give a general characterization of the descriptor sequence $\rho_{ds}$ 
for a track $\rho$: $\rho_{ds}$ is composed of some (maximal) subsequences, consisting 
of occurrences of Type-2 descriptor elements on which $\Rt$ is symmetric, separated by 
occurrences of Type-1 descriptor elements. 
This can be formalized by means of the following notion of cluster.
\begin{definition}
A cluster $\mathpzc{C}$ of (Type-2) descriptor elements is a maximal set of descriptor elements $\{d_1,\ldots , d_s\}\subseteq DElm(\rho_{ds})$ such that  $d_i\Rt d_j$ and $d_j\Rt d_i$ for all $i,j\in\{1,\ldots , s\}$.
\end{definition}
Thanks to maximality, clusters are pairwise disjoint: if $\mathpzc{C}$ and $\mathpzc{C}'$ are distinct clusters, $d\in\mathpzc{C}$ and $d'\in\mathpzc{C}'$, either $d\Rt d'$ and $d'\notRt d$, or $d'\Rt d$ and $d\notRt d'$.

It can be easily checked that
    the descriptor elements of a cluster $\mathpzc{C}$ are contiguous in $\rho_{ds}$ (in other words, they
    form a subsequence of $\rho_{ds}$), that is, 
    occurrences of descriptor elements of $\mathpzc{C}$ are never shuffled with 
    occurrences of descriptor elements not belonging to $\mathpzc{C}$.
\begin{definition}
Let $\rho_{ds}$ be a descriptor sequence and $\mathpzc{C}$ be one of its clusters. The subsequence of $\rho_{ds}$ associated with $\mathpzc{C}$ is the subsequence $\rho_{ds}(i,j)$, with $i\leq j < |\rho_{ds}|$, including all and only the occurrences of the descriptor elements in $\mathpzc{C}$. 
\end{definition}    
Note that two subsequences associated with two distinct clusters $\mathpzc{C}$ and $\mathpzc{C}'$ in a descriptor sequence must be separated by at least one occurrence of a Type-1 descriptor element. 
For instance, 
with reference to the descriptor sequence (\ref{dsexample}) for the track $\rho=v_0v_0v_0v_1v_2v_1v_2v_3v_3v_2v_2$ of the Kripke structure in Figure~\ref{ExK1}, the subsequences associated with clusters are enclosed in boxes.

While $\Rt$ allows us to order any pair of Type-1 descriptor elements, as well as any Type-1 descriptor element with respect to a Type-2 one, it does not give us any means to order Type-2 descriptor elements belonging to the same cluster. This, together with the fact that Type-2 elements may have multiple occurrences in a descriptor sequence, implies that we need to somehow limit the 
number of occurrences of Type-2 elements in order to determine a bound on the length of track representatives of $B_{k}$-descriptors. 

To this end, we introduce an equivalence relation that allows us to put together indistinguishable
occurrences of the same descriptor element in a descriptor sequence, that is, 
to detect those occurrences which are associated with prefixes of the track
with the same $B_{k}$-descriptor. 
The idea is that a track representative for a $B_{k}$-descriptor should not feature indistinguishable occurrences of the same descriptor element.

\begin{definition}\label{def:k-indist}
Let  $\rho_{ds}$ be a descriptor sequence and $k \geq 1$. We say that $\rho_{ds}(i)$ and $\rho_{ds}(j)$, with $0 \leq i<j<|\rho_{ds}|$, are $k$-indistinguishable if (and only if) they are occurrences \emph{of the same descriptor element} $d$ and:
\begin{itemize}
\item (for $k = 1$) $DElm(\rho_{ds}(0, i-1))=DElm(\rho_{ds}(0, j-1))$;
\item (for $k \geq 2$) for all $i \leq \ell \leq j-1$, there exists $0\leq \ell'\leq i-1$ such that $\rho_{ds}(\ell)$ and $\rho_{ds}(\ell')$ are $(k-1)$-indistinguishable.
\end{itemize}
\end{definition}
From Definition \ref{def:k-indist}, it 
follows that two indistinguishable occurrences $\rho_{ds}(i)$ and $\rho_{ds}(j)$ of the same descriptor element necessarily belong to the same subsequence of $\rho_{ds}$ associated with a cluster. 

In general, it is always the case that $DElm(\rho_{ds}(0, i-1))\subseteq DElm(\rho_{ds}(0, j-1))$ for $i<j$. Moreover, note that the two first occurrences of a descriptor element, say $\rho_{ds}(i)$ and $\rho_{ds}(j)$, with $i<j$, are never 1-indistinguishable as a consequence of the fact that \mbox{1-indistinguishability} requires that $DElm(\rho_{ds}(0, i-1))= DElm(\rho_{ds}(0, j-1))$.
 
Proposition \ref{prop:property1} and \ref{propA} state some basic properties of the $k$-indistinguishability relation.
\begin{proposition} \label{prop:property1}
Let $k\geq 2$ and $\rho_{ds}(i)$ and $\rho_{ds}(j)$, with $0 \leq i<j<|\rho_{ds}|$, be two \mbox{$k$-indistinguishable} occurrences of the same descriptor element in a descriptor sequence
$\rho_{ds}$. Then, $\rho_{ds}(i)$ and $\rho_{ds}(j)$ are also $(k-1)$-indistinguishable. 
\end{proposition}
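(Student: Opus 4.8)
The plan is to argue by induction on $k$, treating $k=2$ as the base case and reducing every larger $k$ to the previous one.

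For the base case $k=2$, I would start from two $2$-indistinguishable occurrences $\rho_{ds}(i)$ and $\rho_{ds}(j)$ of a common descriptor element $d$ and show that they already satisfy the (stronger-looking) condition defining $1$-indistinguishability, namely $DElm(\rho_{ds}(0,i-1))=DElm(\rho_{ds}(0,j-1))$. By Definition~\ref{def:k-indist}, for every index $\ell$ with $i\leq \ell\leq j-1$ there is an index $\ell'$ with $0\leq \ell'\leq i-1$ such that $\rho_{ds}(\ell)$ and $\rho_{ds}(\ell')$ are $1$-indistinguishable, hence in particular occurrences of the same descriptor element. This yields the inclusion $DElm(\rho_{ds}(i,j-1))\subseteq DElm(\rho_{ds}(0,i-1))$. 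Combining it with the trivial inclusion $DElm(\rho_{ds}(0,i-1))\subseteq DElm(\rho_{ds}(0,j-1))$ and with the decomposition $DElm(\rho_{ds}(0,j-1))=DElm(\rho_{ds}(0,i-1))\cup DElm(\rho_{ds}(i,j-1))$ gives the desired equality; since $\rho_{ds}(i)$ and $\rho_{ds}(j)$ are occurrences of $d$ by hypothesis, $1$-indistinguishability follows.

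For the inductive step, take $k\geq 3$ and assume the statement for $k-1$. Given $k$-indistinguishable occurrences $\rho_{ds}(i)$ and $\rho_{ds}(j)$, Definition~\ref{def:k-indist} provides, for each $\ell$ with $i\leq \ell\leq j-1$, some $\ell'$ with $0\leq \ell'\leq i-1$ such that $\rho_{ds}(\ell)$ and $\rho_{ds}(\ell')$ are $(k-1)$-indistinguishable; by the induction hypothesis they are then also $(k-2)$-indistinguishable. Since $k-1\geq 2$, this quantified statement, together with the inherited fact that $\rho_{ds}(i)$ and $\rho_{ds}(j)$ are occurrences of the same descriptor element, is precisely what $(k-1)$-indistinguishability requires. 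The whole argument is routine; the only spot needing a little care is the base case, where one must be precise about splitting $DElm(\rho_{ds}(0,j-1))$ into the contribution of the prefix up to position $i-1$ and that of the stretch from $i$ to $j-1$, so no genuine obstacle is anticipated.
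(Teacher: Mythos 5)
Your proposal is correct and follows essentially the same route as the paper's own proof: the base case uses the fact that $2$-indistinguishability forces every descriptor element occurring in $\rho_{ds}(i,j-1)$ to have an occurrence before position $i$, yielding $DElm(\rho_{ds}(0,i-1))=DElm(\rho_{ds}(0,j-1))$, and the inductive step applies the induction hypothesis to downgrade the witnessing $(k-1)$-indistinguishable pairs to $(k-2)$-indistinguishable ones. Your spelled-out set decomposition in the base case is just a more explicit rendering of the argument the paper states in one line.
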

\begin{proof}
The proof is by induction on $k \geq 2$.\\
%
%\begin{itemize}
%\item 
\emph{Base case} ($k= 2$). Let $\rho_{ds}(i)$ and $\rho_{ds}(j)$ be two $2$-indistinguishable occurrences of a descriptor element $d$. By definition, for any $\rho_{ds}(i')$, with $i\leq i'<j$, an occurrence of the descriptor element $d'=\rho_{ds}(i')$ must exist before position $i$, and thus $DElm(\rho_{ds}(0,i-1))=DElm(\rho_{ds}(0,j-1))$. It immediately follows that $\rho_{ds}(i)$ and $\rho_{ds}(j)$ are $1$-indistinguishable.\\
%
%    \item 
\emph{Inductive step} ($k\geq 3$). By definition, for all $i\leq \ell \leq j-1$, there exists $0\leq \ell' \leq  i-1$ such that $\rho_{ds}(\ell)$ and $\rho_{ds}(\ell')$ are $(k-1)$-indistinguishable. By the inductive hypothesis, $\rho_{ds}(\ell)$ and $\rho_{ds}(\ell')$ are $(k-2)$-indistinguishable, which implies that $\rho_{ds}(i)$ and $\rho_{ds}(j)$ are $(k-1)$-indistinguishable.
%\end{itemize}
\end{proof}

\begin{proposition}\label{propA}
Let $k\geq 1$ and $\rho_{ds}(i)$ and $\rho_{ds}(m)$, with $0\! \leq\! i\!<\!m\!<\!|\rho_{ds}|$, be two $k$-indistinguishable occurrences of the same descriptor element in a descriptor sequence $\rho_{ds}$.
If $\rho_{ds}(j)=\rho_{ds}(m)$, for some $i<j<m$, then $\rho_{ds}(j)$ and $\rho_{ds}(m)$ are also $k$-indistinguishable.
\end{proposition}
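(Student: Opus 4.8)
The plan is to unfold Definition~\ref{def:k-indist} and argue by a case distinction on $k$. No appeal to an inductive hypothesis is in fact needed (in contrast with Proposition~\ref{prop:property1}): the $k$-indistinguishability of $\rho_{ds}(i)$ and $\rho_{ds}(m)$ transfers almost verbatim to the pair $\rho_{ds}(j)$, $\rho_{ds}(m)$. Two harmless preliminaries come first. Since $i<j<m$ and $i\geq 0$, we have $j\geq 1$, so $\rho_{ds}(0,j-1)$ is well-defined; and we may assume $i\geq 1$, for if $i=0$ the hypothesis that $\rho_{ds}(i)$ and $\rho_{ds}(m)$ are $k$-indistinguishable cannot hold (for $k=1$ it would force $\emptyset=DElm(\rho_{ds}(0,m-1))$, impossible as $m\geq 1$; for $k\geq 2$ it would require a witness in the empty range $[0,-1]$ for each $\ell$ in the non-empty range $[0,m-1]$), so $\rho_{ds}(0,i-1)$ is meaningful too. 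Finally, $\rho_{ds}(j)$ and $\rho_{ds}(m)$ are occurrences of the same descriptor element by hypothesis, so only the numerical clause of $k$-indistinguishability needs checking.

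For $k=1$, the hypothesis gives $DElm(\rho_{ds}(0,i-1))=DElm(\rho_{ds}(0,m-1))$. By the monotonicity of the prefix map $a\mapsto DElm(\rho_{ds}(0,a))$ already noted in the text, $i<j<m$ yields $DElm(\rho_{ds}(0,i-1))\subseteq DElm(\rho_{ds}(0,j-1))\subseteq DElm(\rho_{ds}(0,m-1))$; as the two outer terms coincide, so do all three, whence $DElm(\rho_{ds}(0,j-1))=DElm(\rho_{ds}(0,m-1))$, which is precisely $1$-indistinguishability of $\rho_{ds}(j)$ and $\rho_{ds}(m)$.

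For $k\geq 2$, I must show that for every $\ell$ with $j\leq\ell\leq m-1$ there is $\ell'$ with $0\leq\ell'\leq j-1$ such that $\rho_{ds}(\ell)$ and $\rho_{ds}(\ell')$ are $(k-1)$-indistinguishable. Fix such an $\ell$. Since $i<j\leq\ell\leq m-1$, the $k$-indistinguishability of $\rho_{ds}(i)$ and $\rho_{ds}(m)$ supplies an $\ell'$ with $0\leq\ell'\leq i-1$ for which $\rho_{ds}(\ell)$ and $\rho_{ds}(\ell')$ are $(k-1)$-indistinguishable; since $i-1\leq j-1$, this same $\ell'$ lies in $[0,j-1]$ and is the desired witness. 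Hence $\rho_{ds}(j)$ and $\rho_{ds}(m)$ are $k$-indistinguishable. The whole argument is essentially index bookkeeping; the only non-formal ingredients are the invocation of monotonicity of $DElm$ on prefixes in the base case and the short argument ruling out $i=0$, and I anticipate no real obstacle beyond keeping these straight.
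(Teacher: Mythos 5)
Your proof is correct and follows essentially the same route as the paper's: the sandwich argument via monotonicity of $DElm$ on prefixes for $k=1$, and for $k\geq 2$ the observation that the witnesses $\ell'\leq i-1$ supplied by the hypothesis also satisfy $\ell'\leq j-1$. The extra bookkeeping (ruling out $i=0$, noting that $\rho_{ds}(j)$ and $\rho_{ds}(m)$ are the same descriptor element) is sound but not needed beyond what the paper already takes for granted.
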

\begin{proof}
For $k=1$, we have $DElm(\rho_{ds}(0,i-1))=DElm(\rho_{ds}(0,m-1))$; moreover, $DElm(\rho_{ds}(0,i-1))\subseteq DElm(\rho_{ds}(0,j-1))\subseteq DElm(\rho_{ds}(0,m-1))$. Thus $DElm(\rho_{ds}(0,i-1))=DElm(\rho_{ds}(0, m-1))= DElm(\rho_{ds}(0,j-1))$,
proving the property.

If $k\geq 2$, all occurrences $\rho_{ds}(i')$, with $i\leq i'<m$, are $(k-1)$-indistinguishable from some occurrence of the same descriptor element before $i$, by hypothesis. In particular, this is true for all occurrences $\rho_{ds}(j')$, with $j\leq j'<m$. The thesis trivially follows.
\end{proof}

\begin{example}\label{Example:DesEqFig}
In Figure~\ref{DesEqFig}, we give some examples of $k$-indistinguishability relations, for $k \in \{1,2,3\}$, considering the track $\rho=v_0v_1v_2v_3 v_3v_2v_3v_3 v_2v_3v_2v_3 v_3v_2v_3v_2 v_1v_3v_2v_3 v_2v_1v_2v_1 v_3\allowbreak v_2v_2v_3v_2$ of the finite Kripke structure depicted in Figure \ref{ExK1}. The track $\rho$ generates the descriptor sequence $\rho_{ds}=(v_0,\emptyset,v_1) (v_0,\{v_1\},v_2) (v_0,\{v_1,v_2\},v_3)abaababaababcababcbcabbab$, where $a$, $b$, and $c$ stand for $(v_0,\{v_1,v_2,v_3\},v_3)$, $(v_0,\{v_1,v_2,v_3\},v_2)$, and  $(v_0,\{v_1,v_2,v_3\},v_1)$, respectively. The figure shows the subsequence $\rho_{ds}(3, |\rho_{ds}|-1)$ associated with the cluster $\mathpzc{C}=\{a,b,c\}$. Pairs of $k$-indistinguishable consecutive occurrences of descriptor elements are connected by a rounded edge labelled  by $k$. Edges labelled by $\times$ link occurrences which are not $1$-indistinguishable. The values of all missing edges can easily be derived using the property stated by Corollary~\ref{propC} below. 
The meaning of numerical strings at the bottom of the figure will be clear later.
\end{example}

\begin{figure}[tbp]
\centering
\resizebox{\textwidth}{!}{\begin{tikzpicture}[node distance=3mm]
\begin{scope}[inner sep=1pt]
\node (1) at (0,0) {$a$};
\node (2) [right=of 1.south east,anchor=south west] {$b$};
\node (3) [right=of 2.south east,anchor=south west] {$a$};
\node (4) [right=of 3.south east,anchor=south west] {$a$};
\node (5) [right=of 4.south east,anchor=south west] {$b$};
\node (6) [right=of 5.south east,anchor=south west] {$a$};
\node (7) [right=of 6.south east,anchor=south west] {$b$};
\node (8) [right=of 7.south east,anchor=south west] {$a$};
\node (9) [right=of 8.south east,anchor=south west] {$a$};
\node (10) [right=of 9.south east,anchor=south west] {$b$};
\node (11) [right=of 10.south east,anchor=south west] {$a$};
\node (12) [right=of 11.south east,anchor=south west] {$b$};
\node (13) [right=of 12.south east,anchor=south west] {$c$};
\node (14) [right=of 13.south east,anchor=south west] {$a$};
\node (15) [right=of 14.south east,anchor=south west] {$b$};
\node (16) [right=of 15.south east,anchor=south west] {$a$};
\node (17) [right=of 16.south east,anchor=south west] {$b$};
\node (18) [right=of 17.south east,anchor=south west] {$c$};
\node (19) [right=of 18.south east,anchor=south west] {$b$};
\node (20) [right=of 19.south east,anchor=south west] {$c$};
\node (21) [right=of 20.south east,anchor=south west] {$a$};
\node (22) [right=of 21.south east,anchor=south west] {$b$};
\node (23) [right=of 22.south east,anchor=south west] {$b$};
\node (24) [right=of 23.south east,anchor=south west] {$a$};
\node (25) [right=of 24.south east,anchor=south west] {$b$};
\end{scope}

\begin{scope}[font=\footnotesize]
\path   (1) edge [bend left,out=45,in=135] node [above] {$\times$} (3)
        (2) edge [bend right] node [below] {$\times$} (5)
        (3) edge [bend left] node [above] {1} (4)
        (4) edge [bend left,out=45,in=135] node [above] {1} (6)
        (6) edge [bend left,out=45,in=135] node [above] {2} (8)
        (8) edge [bend left] node [above] {3} (9)
        (9) edge [bend left,out=45,in=135] node [above] {3} (11)
        (5) edge [bend right] node [below] {1} (7)
        (7) edge [bend right] node [below] {2} (10)
        (10) edge [bend right] node [below] {3} (12)
        
        (11) edge [bend left] node [above] {$\times$} (14)
        (12) edge [bend right] node [below] {$\times$} (15)
        (14) edge [bend left,out=45,in=135] node [above] {1} (16)
        (16) edge [out=60,in=120] node [above] {1} (21)
        (21) edge [bend left] node [above] {2} (24)
        (15) edge [bend right] node [below] {1} (17)
        (17) edge [bend right] node [below] {1} (19)
        (19) edge [bend right] node [below] {2} (22)
        (22) edge [bend right] node [below] {3} (23)
        (23) edge [bend right] node [below] {3} (25)
        (13) edge [out=60,in=120] node [above] {$\times$} (18)
        (18) edge [bend left,out=45,in=135] node [above] {1} (20);
\end{scope}

\begin{scope}[inner sep=1pt,font=\ttfamily\scriptsize]
\node [below=1cm of 1] {210000};
\node [below=0.6cm of 2] {120000};
\node [below=1cm of 3] {111000};
\node [below=0.6cm of 4] {110100};
\node [below=1cm of 5] {102000};
\node [below=0.6cm of 6] {101100};
\node [below=1cm of 7] {100200};
\node [below=0.6cm of 8] {100110};
\node [below=1cm of 9] {100101};
\node [below=0.6cm of 10] {100020};
\node [below=1cm of 11] {100011};
\node [below=0.6cm of 12] {100002};
\node [below=1cm of 13] {030000};
\node [below=0.6cm of 14] {021000};
\node [below=1cm of 15] {012000};
\node [below=0.6cm of 16] {011100};
\node [below=1cm of 17] {010200};
\node [below=0.6cm of 18] {003000};
\node [below=1cm of 19] {002100};
\node [below=0.6cm of 20] {001200};
\node [below=1cm of 21] {000300};
\node [below=0.6cm of 22] {000210};
\node [below=1cm of 23] {000201};
\node [below=0.6cm of 24] {000120};
\node [below=1cm of 25] {000111};
\end{scope}

\end{tikzpicture}}
\caption{Examples of $k$-indistinguishability relations.
%for the track $\rho=v_0v_1v_2v_3 v_3v_2v_3v_3 v_2v_3v_2v_3 v_3v_2v_3v_2 v_1v_3v_2v_3 v_2v_1v_2v_1 v_3v_2v_2v_3v_2$ of the finite Kripke structure depicted in Figure \ref{ExK1} 
}\label{DesEqFig}
\end{figure}
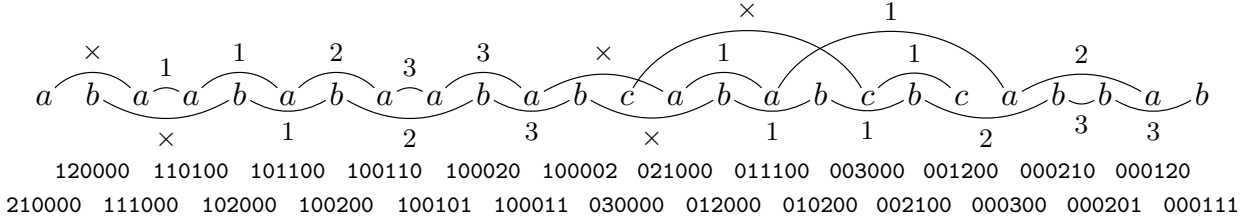

The next theorem establishes a fundamental connection between  
\mbox{$k$-indistinguishability} of descriptor elements and $k$-descriptor equivalence of tracks.

\begin{theorem}\label{teokequiv}
Let $\rho_{ds}$ be the descriptor sequence for a track $\rho$. Two occurrences $\rho_{ds}(i)$ and $\rho_{ds}(j)$, with $0\leq i <j<|\rho_{ds}|$, of the same descriptor element are $k$-indistinguishable if and only if $\rho(0, i+1)\sim_k\rho(0,j+1)$.
\end{theorem}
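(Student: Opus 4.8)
The plan is to prove both directions by induction on $k$, closely mirroring the recursive structure of both Definition~\ref{def:k-indist} ($k$-indistinguishability) and Definition~\ref{def:trackdescr} ($B_k$-descriptors). The key observation linking the two sides is that the $B_k$-descriptor of a track $\rho(0,i+1)$ is determined by: (a) its root descriptor element, which is exactly $\rho_{ds}(i)$, and (b) the multiset of $B_{k-1}$-descriptors of its proper prefixes $\rho(0,\ell+1)$ for $\ell < i$, which — since equality of descriptors is taken up to isomorphism and siblings are deduplicated (Condition~\ref{noiso}) — is really just the \emph{set} of $B_{k-1}$-descriptors appearing among $\{\rho(0,\ell+1) : 0 \le \ell \le i-1\}$. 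So $\rho(0,i+1)\sim_k\rho(0,j+1)$ iff $\rho_{ds}(i)=\rho_{ds}(j)$ and the two prefixes induce the same set of $B_{k-1}$-descriptors below the root.

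\textbf{Base case $k=1$.} Here the $B_1$-descriptor of $\rho(0,i+1)$ is the root $\rho_{ds}(i)$ together with the set of $B_0$-descriptors of its prefixes; but a $B_0$-descriptor is just a single descriptor element, so this set is precisely $DElm(\rho_{ds}(0,i-1))$. Hence $\rho(0,i+1)\sim_1\rho(0,j+1)$ iff $\rho_{ds}(i)=\rho_{ds}(j)$ and $DElm(\rho_{ds}(0,i-1))=DElm(\rho_{ds}(0,j-1))$, which is exactly the definition of $1$-indistinguishability (the hypothesis that both are occurrences of the same descriptor element handles the first conjunct).

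\textbf{Inductive step $k\ge 2$.} Assume the theorem holds for $k-1$. For the ``only if'' direction, suppose $\rho_{ds}(i)$ and $\rho_{ds}(j)$ are $k$-indistinguishable. Then for every $\ell$ with $i\le\ell\le j-1$ there is $\ell'\le i-1$ with $\rho_{ds}(\ell)$ and $\rho_{ds}(\ell')$ being $(k-1)$-indistinguishable (in particular occurrences of the same descriptor element), so by the inductive hypothesis $\rho(0,\ell+1)\sim_{k-1}\rho(0,\ell'+1)$. Combined with the obvious fact that every prefix-position $\ell''\le i-1$ also lies in $[0,j-1]$ (and using Proposition~\ref{propA} together with $DElm$-monotonicity to handle repeated elements cleanly), this shows the set of $B_{k-1}$-descriptors arising from prefixes of $\rho(0,i+1)$ equals that arising from prefixes of $\rho(0,j+1)$; since also $\rho_{ds}(i)=\rho_{ds}(j)$ gives equal roots, the two $B_k$-descriptors coincide, i.e.\ $\rho(0,i+1)\sim_k\rho(0,j+1)$. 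Conversely, if $\rho(0,i+1)\sim_k\rho(0,j+1)$ then the root elements agree and the two sets of child $B_{k-1}$-descriptors agree; for each $\ell\in[i,j-1]$, the prefix $\rho(0,\ell+1)$ contributes a $B_{k-1}$-descriptor that, being in the common set, must also be contributed by some prefix $\rho(0,\ell'+1)$ with $\ell'\le i-1$, whence $\rho(0,\ell+1)\sim_{k-1}\rho(0,\ell'+1)$, and by the inductive hypothesis $\rho_{ds}(\ell)$ and $\rho_{ds}(\ell')$ are $(k-1)$-indistinguishable — exactly what is needed for $k$-indistinguishability of $\rho_{ds}(i),\rho_{ds}(j)$.

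\textbf{Main obstacle.} The delicate point is the precise bookkeeping in the inductive step: I must argue that ``$\rho_{ds}(i)=\rho_{ds}(j)$ plus a matching condition on child descriptors'' is genuinely equivalent to $\sim_k$, which requires carefully unwinding Definition~\ref{def:trackdescr} — in particular that the deduplication of isomorphic sibling subtrees (Condition~\ref{noiso}) means a $B_k$-descriptor is faithfully encoded by (root element, set of child $B_{k-1}$-descriptors), and that a prefix of $\rho(0,j+1)$ sitting at a position $\ell\ge i$ never produces a $B_{k-1}$-descriptor that fails to already occur among the positions $\le i-1$ — this last fact is exactly where $(k-1)$-indistinguishability of $\rho_{ds}(\ell)$ with some earlier occurrence is both used and produced, so the induction threads through cleanly provided one states the ``(root element, set of child descriptors)'' characterization as an explicit auxiliary lemma first. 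I would extract that characterization as a short preliminary observation before running the induction.
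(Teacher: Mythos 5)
Your proof is correct and follows essentially the same route as the paper's: induction on $k$, with the base case reduced to the equality $DElm(\rho_{ds}(0,i-1))=DElm(\rho_{ds}(0,j-1))$ and the inductive step matching each position $\ell\in[i,j-1]$ with an earlier position via $(k-1)$-indistinguishability and the ``(root element, set of child $B_{k-1}$-descriptors)'' reading of Definition~\ref{def:trackdescr}. The only cosmetic differences are that the paper runs two separate inductions and proves the right-to-left direction by contrapositive (exhibiting a depth-$(k-1)$ subtree present in one descriptor but not the other), whereas you prove it directly; both are sound.
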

\begin{proof}
%\begin{itemize}
%    \item 
Let us assume that $\rho_{ds}(i)$ and $\rho_{ds}(j)$, with $i<j$, are $k$-indistinguishable. We prove by induction on $k \geq 1$ that $\rho(0, i+1)$ and $\rho(0,j+1)$ are associated with the same $B_k$-descriptor.
%    \begin{itemize}

%\item 
\emph{Base case} ($k=1$). Since $\rho_{ds}(i)$ and $\rho_{ds}(j)$ are occurrences of the same descriptor element, the roots of the $B_1$-descriptors for $\rho(0,i+1)$ and for $\rho(0,j+1)$ are labelled by the same descriptor element. Moreover, for each leaf of the $B_1$-descriptor for $\rho(0, i+1)$ there is a leaf of the $B_1$-descriptor for $\rho(0, j+1)$ with the same label, and vice versa, as by 1-indistinguishability $DElm(\rho_{ds}(0,i-1))=DElm(\rho_{ds}(0,j-1))$.

%\item 
\emph{Inductive step} ($k\geq 2$). Since all the prefixes of $\rho(0,i+1)$ are also prefixes of $\rho(0,j+1)$, we just need to focus on the prefixes $\rho(0,t)$, with $i+1 \leq t \leq j$, in order to show that $\rho(0, i+1)$ and $\rho(0,j+1)$ have the same $B_k$-descriptor. By definition, any occurrence $\rho_{ds}(i')$ with $i\leq i'<j$, is $(k-1)$-indistinguishable from another occurrence $\rho_{ds}(i'')$, with $i''<i$, of the same descriptor element. By the inductive hypothesis, $\rho(0, i'+1)$ and $\rho(0, i''+1)$ are associated with the same $B_{k-1}$-descriptor. It follows that, for any proper prefix of $\rho(0, j+1)$ (of length at least 2), there exists a proper prefix of $\rho(0, i+1)$ with the same $B_{k-1}$-descriptor, which implies that the tracks $\rho(0,i+1)$ and $\rho(0,j+1)$ are associated with the same $B_{k}$-descriptor.
%    \end{itemize}
    
 %   \item
Conversely, we prove by induction on $k\geq 1$ that if $\rho_{ds}(i)$ and $\rho_{ds}(j)$, with $i<j$, are \emph{not} $k$-indistinguishable, then the $B_k$-descriptors for $\rho(0, i+1)$ and $\rho(0, j+1)$ 
are different.
%\emph{do not} have the same $B_k$-descriptor. 
We assume $\rho_{ds}(i)$ and $\rho_{ds}(j)$ to be occurrences of the same descriptor element (if this was not the case, the thesis would trivially follow, since the roots of the $B_k$-descriptors for $\rho(0,i+1)$ and $\rho(0,j+1)$ would be labelled by different descriptor elements).
%    \begin{itemize}
%        \item 

\emph{Base case} ($k=1$). If  $\rho_{ds}(i)$ and $\rho_{ds}(j)$, with $i < j$, are \emph{not} $1$-indistinguishable, $DElm(\rho_{ds}(0,i-1))\subset DElm(\rho_{ds}(0,j-1))$. Hence, there is $d\in DElm(\rho_{ds}(0,j-1))$ such that $d\notin DElm(\rho_{ds}(0,i-1))$, and thus the $B_1$-descriptor for $\rho(0,j+1)$ has a leaf labelled by $d$ which is not present in the $B_1$-descriptor for $\rho(0,i+1)$.
         
%        \item 
        
\emph{Inductive step} ($k\geq 2$). If $\rho_{ds}(i)$ and $\rho_{ds}(j)$, with $i<j$, are \emph{not} $k$-indistinguishable, then there exists (at least) one occurrence $\rho_{ds}(i')$, with $i\leq i'<j$, of a descriptor element $d$ which is \emph{not} $(k-1)$-indistinguishable from any occurrence of $d$ before position $i$. By the inductive hypothesis, $\rho(0,i'+1)$ is associated to a $B_{k-1}$-descriptor which is not equal to any $B_{k-1}$-descriptors associated with proper prefixes of $\rho(0, i+1)$. Thus, in the $B_k$-descriptor for $\rho(0,j+1)$ there exists a subtree of depth $k-1$ such that there is no isomorphic subtree of depth $k-1$ in the $B_k$-descriptor for $\rho(0,i+1)$. 
%    \end{itemize}
%\end{itemize}
\end{proof}

Note that $k$-indistinguishability between occurrences of descriptor elements is defined \emph{only for pairs of prefixes of the same track}, while the relation of $k$-descriptor equivalence can be applied to pairs of any tracks of a Kripke structure.

The next corollary easily follows from Theorem \ref{teokequiv}.

\begin{corollary}\label{propC}
Let $\rho_{ds}(i)$, $\rho_{ds}(j)$, and $\rho_{ds}(m)$, with $0 \leq i<j<m<|\rho_{ds}|$, be three occurrences of the same descriptor element in a descriptor sequence $\rho_{ds}$. If both the pair $\rho_{ds}(i)$ and $\rho_{ds}(j)$ and the pair $\rho_{ds}(j)$ and $\rho_{ds}(m)$ are $k$-indistinguishable, for some $k\geq 1$, then $\rho_{ds}(i)$ and $\rho_{ds}(m)$ are also $k$-indistinguishable. 
\end{corollary}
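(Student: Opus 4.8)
The plan is to derive Corollary~\ref{propC} directly from Theorem~\ref{teokequiv}, exploiting the transitivity of the $k$-descriptor equivalence relation $\sim_k$. The key observation is that Theorem~\ref{teokequiv} converts the (somewhat combinatorial) notion of $k$-indistinguishability between occurrences of descriptor elements into the cleaner notion of $k$-descriptor equivalence between the corresponding prefixes of $\rho$.

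First I would invoke Theorem~\ref{teokequiv} twice. Since $\rho_{ds}(i)$ and $\rho_{ds}(j)$ are $k$-indistinguishable occurrences of the same descriptor element and $i < j$, we get $\rho(0,i+1) \sim_k \rho(0,j+1)$. Likewise, since $\rho_{ds}(j)$ and $\rho_{ds}(m)$ are $k$-indistinguishable occurrences of the same descriptor element and $j < m$, we get $\rho(0,j+1) \sim_k \rho(0,m+1)$. Note that all three occurrences are occurrences of the \emph{same} descriptor element (the one shared by $\rho_{ds}(i)$ and $\rho_{ds}(j)$ coincides with the one shared by $\rho_{ds}(j)$ and $\rho_{ds}(m)$, since $\rho_{ds}(j)$ is common to both pairs), so the hypotheses of the theorem are genuinely met in each application.

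Next I would use transitivity of $\sim_k$: from $\rho(0,i+1) \sim_k \rho(0,j+1)$ and $\rho(0,j+1) \sim_k \rho(0,m+1)$ it follows that $\rho(0,i+1) \sim_k \rho(0,m+1)$. (Transitivity of $\sim_k$ is immediate from its definition: it holds when the $B_k$-descriptors for the two tracks coincide, and descriptor equality is an equivalence relation.) Finally, applying the ``only if'' direction of Theorem~\ref{teokequiv} once more — now to the pair of occurrences $\rho_{ds}(i)$ and $\rho_{ds}(m)$, which are occurrences of the same descriptor element with $i < m$ — we conclude that $\rho_{ds}(i)$ and $\rho_{ds}(m)$ are $k$-indistinguishable, as required.

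I do not expect any serious obstacle here; the only point that needs a moment's care is checking that Theorem~\ref{teokequiv} applies cleanly in the backward direction, i.e.\ that the statement ``$\rho(0,i+1)\sim_k\rho(0,m+1)$ and $\rho_{ds}(i),\rho_{ds}(m)$ are occurrences of the same descriptor element $\Rightarrow$ they are $k$-indistinguishable'' is exactly what the theorem provides — which it does, being a biconditional. Everything else is a one-line appeal to transitivity, so the corollary is essentially a formality once Theorem~\ref{teokequiv} is in hand.
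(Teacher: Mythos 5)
Your proof is correct and matches the paper's intended argument: the paper states only that the corollary ``easily follows from Theorem~\ref{teokequiv},'' and the route it has in mind is precisely yours — translate both hypotheses into $k$-descriptor equivalences of prefixes via the theorem, use transitivity of $\sim_k$ (i.e., of equality of $B_k$-descriptors), and apply the biconditional in the reverse direction to the pair $\rho_{ds}(i)$, $\rho_{ds}(m)$.
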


\section{A model checking procedure for $\AAbarBBbarEbar$ based on track representatives} \label{sec:representatives}

In this section, we will exploit the $k$-indistinguishability relation(s) between descriptor elements 
in a descriptor sequence $\rho_{ds}$ for a track $\rho$ to possibly replace $\rho$ by a $k$-descriptor equivalent, \emph{shorter} track $\rho'$ of bounded length. This allows us to find, for each 
$B_k$-descriptor $\mathpzc{D}_{B_k}$ (witnessed by a track of a finite Kripke structure $\mathpzc{K}$), a \emph{track representative} $\tilde{\rho}$ in $\mathpzc{K}$ such that $(i)$ $\mathpzc{D}_{B_k}$ is the $B_k$-descriptor
for $\tilde{\rho}$ and $(ii)$ the length of $\tilde{\rho}$ is bounded. Thanks to property $(ii)$, we can check all the track representatives of a finite Kripke structure by simply visiting its unravelling up to a bounded depth.

The notion of track representative can be explained as follows. Let $\rho_{ds}$ be the descriptor sequence for a track $\rho$. If there are two occurrences of the same descriptor element $\rho_{ds}(i)$ and $\rho_{ds}(j)$, with $i<j$, which are $k$-indistinguishable---let $\rho= \rho(0,j+1)\cdot\overline{\rho}$, with $\overline{\rho}=\rho(j+2,|\rho|-1)$---then we can replace 
$\rho$ by the $k$-descriptor equivalent, shorter track $\rho(0, i+1)\cdot\overline{\rho}$.
By Theorem \ref{teokequiv}, $\rho(0, i+1)$ and $\rho(0, j+1)$ have the same $B_k$-descriptor and thus, by Proposition \ref{extBk}, 
$\rho=\rho(0, j+1)\cdot\overline{\rho}$ and $\rho(0, i+1)\cdot\overline{\rho}$ have the same $B_k$-descriptor. 
Moreover, since $\rho_{ds}(i)$ and $\rho_{ds}(j)$ are occurrences of the same descriptor element, $\rho(i+1)=\rho(j+1)$ and thus the track $\rho(0,i+1)\cdot\overline{\rho}$ is witnessed in the finite Kripke structure.
By iteratively applying such a \emph{contraction method}, we can find a track $\rho'$ which is $k$-descriptor equivalent to $\rho$, whose descriptor sequence is devoid of $k$-indistinguishable occurrences of descriptor elements. \emph{A track representative} is a track that fulfils this property.

We now show how to calculate a bound to the length of track representatives.
We start by stating some technical properties. The next proposition provides a bound to the distance within which we necessarily observe a repeated occurrence of some descriptor element in the descriptor sequence for a track. We preliminarily observe that, for any track $\rho$, $|DElm(\rho_{ds})|\leq 1+|W|^2$, where $W$ is the set of states of the finite Kripke structure. Indeed, in the descriptor sequence, the sets of internal states of prefixes of $\rho$ increase monotonically with respect to the ``$\subseteq$'' relation. As a consequence, at most $|W|$ distinct sets may occur---excluding $\emptyset$ which can occur only in the first descriptor element. Moreover, these sets can be paired with all possible final states, which are at most $|W|$.

\begin{proposition}\label{propzerohalt}
For each track $\rho$ of $\mathpzc{K}$, associated with a descriptor element $d$, there exists a track $\rho'$ of $\mathpzc{K}$, associated with the same descriptor element $d$, such that $|\rho'|\leq 2+|W|^2$.
\end{proposition}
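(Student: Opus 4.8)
The plan is to prove the proposition by a cycle-contraction (pumping) argument on the descriptor sequence $\rho_{ds}$, exploiting the bound $|DElm(\rho_{ds})|\leq 1+|W|^2$ just recalled. Recall that the descriptor element of a track $\rho$ is exactly the triple $(\fst(\rho),\intstates(\rho),\lst(\rho))$, which also coincides with the last element $\rho_{ds}(|\rho|-2)$ of its descriptor sequence. The key observation is that if $|\rho|>2+|W|^2$, then $|\rho_{ds}|=|\rho|-1>1+|W|^2\geq|DElm(\rho_{ds})|$, so by the pigeonhole principle $\rho_{ds}$ contains two occurrences of one and the same descriptor element; the stretch of $\rho$ lying strictly between the corresponding prefixes can then be excised without altering the descriptor element $d$ of the whole track.

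First I would carry out one such contraction step. Write $\rho=v_0v_1\cdots v_n$ with $n=|\rho|-1$, and let $0\leq i<j\leq n-1$ be indices with $\rho_{ds}(i)=\rho_{ds}(j)$. Since these are occurrences of the same descriptor element, $v_{i+1}=v_{j+1}$ and $\intstates(v_0\cdots v_{i+1})=\intstates(v_0\cdots v_{j+1})$, i.e.\ $\{v_1,\ldots,v_i\}=\{v_1,\ldots,v_j\}$. If $j=n-1$, then $\rho_{ds}(i)=\rho_{ds}(n-1)=d$, so the prefix $\rho'=v_0\cdots v_{i+1}$ is already a track of $\mathpzc{K}$ with descriptor element $d$. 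If $j\leq n-2$, I put $\rho'=v_0\cdots v_{i+1}v_{j+2}\cdots v_n$: from $v_{i+1}=v_{j+1}$ and $(v_{j+1},v_{j+2})\in\delta$ we get $(v_{i+1},v_{j+2})\in\delta$, so $\rho'\in\Trk_\mathpzc{K}$; moreover $\fst(\rho')=v_0=\fst(\rho)$, $\lst(\rho')=v_n=\lst(\rho)$, and, since $v_{j+1}=v_{i+1}$ and $\{v_{i+2},\ldots,v_j\}\subseteq\{v_1,\ldots,v_j\}=\{v_1,\ldots,v_i\}$, the states $v_{i+2},\ldots,v_{j+1}$ that are removed from the internal positions of $\rho$ all lie in $\{v_1,\ldots,v_{i+1}\}\subseteq\intstates(\rho')$, whence $\intstates(\rho')=\intstates(\rho)$ and $\rho'$ has descriptor element $d$ as well. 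In both cases $2\leq|\rho'|<|\rho|$.

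Finally I would iterate: as long as the current track has length $>2+|W|^2$ --- which, as just noted, forces a repeated descriptor element in its descriptor sequence --- the contraction step applies and strictly shortens it, so after finitely many steps we obtain a track $\rho'$ of $\mathpzc{K}$ with the same descriptor element $d$ and $|\rho'|\leq 2+|W|^2$, as claimed. The argument is essentially routine; the only delicate point is the index bookkeeping for $\intstates$ --- verifying that excising $v_{i+2},\ldots,v_{j+1}$ drops no internal state, which is precisely where the equality $\intstates(v_0\cdots v_{i+1})=\intstates(v_0\cdots v_{j+1})$ (together with the monotone growth of the internal-state sets along $\rho_{ds}$) is used --- plus the separate handling of the degenerate case $j=n-1$, where the excised stretch reaches the last state of $\rho$.
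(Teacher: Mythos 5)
Your proof is correct and follows essentially the same route as the paper's: a pigeonhole argument on $|DElm(\rho_{ds})|\leq 1+|W|^2$ yields a repeated descriptor element whenever $|\rho|>2+|W|^2$, and the stretch between the two occurrences is excised, iterating until the bound is met (the paper phrases the descent as induction on $|\rho|$ and delegates the preservation of the descriptor element under contraction to Proposition~\ref{extBk}, whereas you verify $\fst$, $\lst$, and $\intstates$ directly and handle the boundary case $j=n-1$ explicitly). These are only presentational differences; the argument is the same.
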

\begin{proof}
By induction on the length $\ell \geq 2$ of $\rho$.\\
\emph{Base case} ($\ell=2$). The track $\rho$ satisfies the condition $\ell\leq 2+|W|^2$.\\
\emph{Inductive step} ($\ell>2$). We distinguish two cases. If $\rho_{ds}$ has no duplicated occurrences of the same descriptor element, then $|\rho_{ds}|\leq 1+|W|^2$, since $|DElm(\rho_{ds})|\leq 1+|W|^2$, and thus 
%$\rho$ satisfies the condition 
$\ell\leq 2+|W|^2$ (the length of $\rho$ is equal to the length of $\rho_{ds}$ plus $1$). 

On the other hand,
if $\rho_{ds}(i)=\rho_{ds}(j)$, for some $0\leq i<j<|\rho_{ds}|$, $\rho(0,i+1)$ and $\rho(0,j+1)$ are associated with the same descriptor element. Now, $\rho'=\rho(0,i+1)\cdot\rho(j+2, |\rho|-1)$ is a track of $\mathpzc{K}$ since $\rho(i+1)=\rho(j+1)$, and, by Proposition \ref{extBk}, $\rho=\rho(0, j+1)\cdot\rho(j+2, |\rho|-1)$ and $\rho'$ are associated with the same descriptor element. By the inductive hypothesis, there exists a track $\rho''$ of $\mathpzc{K}$, associated with the same descriptor element of $\rho'$ (and of $\rho$), with $|\rho''|\leq 2+|W|^2$.
\end{proof}

Proposition \ref{propzerohalt} will be used in the following unravelling Algorithm~\ref{unr} as a termination criterion (referred to as \emph{0-termination criterion}) for unravelling a finite Kripke structure when it is not necessary to observe multiple occurrences of the same descriptor element:
\emph{to get a track representative for every descriptor element with initial state $v$, witnessed in a finite Kripke structure with set of states $W$, 
we can avoid considering tracks longer than $2+|W|^2$ while exploring the unravelling of the Kripke structure from $v$.}

Let us now consider the (more difficult) problem of establishing a bound 
for tracks devoid of pairs of $k$-indistinguishable occurrences of descriptor elements.
We first note that, in a descriptor sequence $\rho_{ds}$ for a track $\rho$, there are at most $|W|$ occurrences of Type-1 descriptor elements. On the other hand, Type-2 descriptor elements can occur multiple times and thus, 
to bound the length of $\rho_{ds}$, one has to 
constrain the \emph{number} and the \emph{length} of the subsequences of $\rho_{ds}$ associated with clusters. As for their number, it suffices to observe that they are separated by Type-1 descriptor elements, and hence at most $|W|$ of them, related to distinct clusters, can occur in a descriptor sequence. 

As for their length, we can proceed as follows. First, for any cluster $\mathpzc{C}$, it holds that $|\mathpzc{C}|\leq |W|$, as all (Type-2) descriptor elements of $\mathpzc{C}$ share the same set $S$ of internal states and their final states $v_{fin}$ must belong to $S$.
In the following, we consider the (maximal) subsequence $\rho_{ds}(u,v)$ of $\rho_{ds}$ associated with a specific cluster $\mathpzc{C}$, for some $0\leq u\leq v\leq |\rho_{ds}|-1$, and when we mention an index $i$, we implicitly assume that $u\leq i\leq v$, that is, $i$ refers to a position in the subsequence. 

We sequentially scan such a subsequence suitably recording the multiplicity of occurrences
of descriptor elements into an auxiliary structure. 
To detect indistinguishable occurrences of descriptor elements up to indistinguishability $s \geq 1$, we use $s + 3$ arrays  $Q_{-2}()$, $Q_{-1}()$, $Q_0()$, $Q_1()$, 
$\ldots$, $Q_s()$. Array elements are sets of descriptor elements of $\mathpzc{C}$: given an index $i$,  the sets at position $i$, $Q_{-2}(i)$, $Q_{-1}(i)$, $Q_0(i)$, $Q_1(i)$, 
$\ldots$, $Q_s(i)$, store information about indistinguishability for multiple occurrences of descriptor elements in the subsequence up to position $i>u$. 
To exemplify, if we find an occurrence of the descriptor element $d \in\mathpzc{C}$ at position $i$, that is, $\rho_{ds}(i)=d$, we have that:
\begin{enumerate}
	\item  $Q_{-2}(i)$ contains all descriptor elements of $\mathpzc{C}$ which have never occurred in $\rho_{ds}(u,i)$;
	\item  $d\in Q_{-1}(i)$ if $d$ has never occurred in $\rho_{ds}(u,i-1)$ and $\rho_{ds}(i)=d$, that is, $\rho_{ds}(i)$ is the first occurrence of $d$ in $\rho_{ds}(u,i)$;
	\item  $d \in Q_{0}(i)$ if $d$ occurs at least twice in $\rho_{ds}(u,i)$ and the occurrence $\rho_{ds}(i)$ of $d$ is \emph{not} 1-indistinguishable from the last occurrence of $d$ in $\rho_{ds}(u,i-1)$;
	\item $d \in Q_{t}(i)$ (for some $t\geq 1$) if the occurrence $\rho_{ds}(i)$ of $d$  is $t$-indistinguishable, but \emph{not also} $(t+1)$-indistinguishable, from the last occurrence of $d$ in $\rho_{ds}(u,i-1)$.
\end{enumerate}

In particular, at position $u$ (the first of the subsequence), $Q_{-1}(u)$ contains only the descriptor element $d =\rho_{ds}(u)$, $Q_{-2}(u)$ is the set $\mathpzc{C}\setminus\{d\}$, and $Q_0(u)$, $Q_1(u)$, $\dots$ are empty sets. 

Arrays $Q_{-2}()$, $Q_{-1}()$, $Q_0()$, $Q_1()$, $\ldots$, $Q_s()$ satisfy the following constraints:
    for all positions $i$, $\bigcup^s_{m=-2} Q_m(i)=\mathpzc{C}$ and,
    for all $i$ and all $m\neq m'$, $Q_m(i)\cap Q_{m'}(i)=\emptyset$.

Intuitively, at every position $i$, $Q_{-2}(i)$, $Q_{-1}(i), Q_0(i), Q_1(i)$, $\ldots$, $Q_s(i)$ describe a \emph{state} of the scanning process of the subsequence. 
The change of state produced by the transition from position $i-1$ to position $i$ while scanning the subsequence is formally defined by the function $f$, reported in Figure \ref{fig:f}, which maps the descriptor sequence $\rho_{ds}$ and a position $i$ to the tuple of sets $\big(Q_{-2}(i),Q_{-1}(i),Q_0(i), Q_1(i), \ldots, Q_s(i)\big)$.

\begin{figure}[tbp]
\centering
\fbox{%
\begin{minipage}{0.98\textwidth}
\begin{small}
$f(\rho_{ds},u)=\big(\mathpzc{C}\setminus\{d\}, \{d\}, \emptyset, \cdots , \emptyset\big) \text{ with } \rho_{ds}(u)=d$;

\medskip
For all $i>u$: $f(\rho_{ds},i)= \big(Q_{-2}(i),Q_{-1}(i),Q_0(i),\ldots ,Q_s(i)\big) =$
\[ \left\{
\begin{minipage}{0.96\textwidth}
$\big(Q_{-2}(i-1)\setminus\{d\},\{d\}\cup\bigcup^s_{m=-1}Q_m(i-1), \emptyset, \ldots , \emptyset\big)$ if $\rho_{ds}(i)$ is the first occurrence of $d$ in $\rho_{ds}(u,i)$; \textbf{(a)}\\

$\big(Q_{-2}(i-1),Q_{-1}(i-1)\setminus\{d\},\{d\}\cup\bigcup^s_{m=0}Q_m(i-1), \emptyset, \ldots , \emptyset\big)$ if $\rho_{ds}(i)=d$, $d\in Q_{-1}(i-1)$, and $\rho_{ds}(i)$ is at least the second occurrence of $d$ in $\rho_{ds}(u,i)$ and it is \emph{not} 1-indistinguishable from the immediately preceding occurrence of $d$; \textbf{(b)}\\

$\big(Q_{-2}(i-1),Q_{-1}(i-1),\{d\}\cup Q_0(i-1), Q_1(i-1)\setminus\{d\}, \ldots , Q_s(i-1)\setminus\{d\}\big)$ if $\rho_{ds}(i)=d$, $d\in \bigcup^s_{m=0} Q_{m}(i-1)$, and $\rho_{ds}(i)$ is at least the second occurrence of $d$ in $\rho_{ds}(u,i)$ and it is \emph{not} 1-indistinguishable from the immediately preceding occurrence of $d$; \textbf{(c)}\\

$\big(Q_{-2}(i-1)\setminus\{d\},\ldots,Q_{t-1}(i-1)\setminus\{d\},\{d\}\cup\bigcup^s_{m=t}Q_m(i-1), \emptyset, \ldots , \emptyset\big)$ if $\rho_{ds}(i)=d$, $\rho_{ds}(i)$ is $t$-indistinguishable (for some $t\geq 1$), but \emph{not} also $(t+1)$-indistinguishable, to the immediately preceding occurrence of 
$d$, and $d\in \bigcup^{t-1}_{m=-2}Q_{m}(i-1)$; \textbf{(d)}\\

$\big(Q_{-2}(i-1),\cdots,Q_{t-1}(i-1),\{d\}\cup Q_t(i-1), Q_{t+1}(i-1) \setminus\{d\}, \ldots , Q_s(i-1)\setminus\{d\}\big)$ if $\rho_{ds}(i)=d$, $\rho_{ds}(i)$ is $t$-indistinguishable (for some $t\geq 1$), but \emph{not} also $(t+1)$-indistinguishable, to the immediately preceding occurrence of $d$, and $d\in \bigcup^s_{m=t}Q_{m}(i-1)$. \textbf{(e)}
\end{minipage}\right. \]
\end{small}
\end{minipage}}
\caption{Definition of the scan function $f$.}\label{fig:f}
\end{figure}

Note that, whenever a descriptor element $\rho_{ds}(i)=d$ is such that $d\in Q_z(i-1)$ and $d\in Q_{z'}(i)$, with $z<z'$ (cases (a), (b), and (d) of the definition of $f$), all $Q_{z''}(i)$, with $z''>z'$, are empty sets and, for all $z''\geq z'$,  all elements in $Q_{z''}(i-1)$ belong to $Q_{z'}(i)$.
As an intuitive explanation, consider, for instance, the following scenario:
in a subsequence of $\rho_{ds}$, associated with some cluster $\mathpzc{C}$, $\rho_{ds}(h)=\rho_{ds}(i)=d\in\mathpzc{C}$ and $\rho_{ds}(h')=\rho_{ds}(i')=d'\in\mathpzc{C}$, for some $h<h'<i<i'$ and $d \neq d'$, and there are not other occurrences of $d$ and $d'$ in $\rho_{ds}(h,i')$. If $\rho_{ds}(h)$ and $\rho_{ds}(i)$ are exactly $z'$-indistinguishable, by definition of the indistinguishability relation, $\rho_{ds}(h')$ and $\rho_{ds}(i')$ can be no more than $(z'+1)$-indistinguishable. Thus, if $d'$ is in $Q_{z''}(i-1)$, for some $z''>z'$, we can safely ``downgrade'' it to $Q_{z'}(i)$, 
because we know that, when we meet the next occurrence of $d'$ ($\rho_{ds}(i')$), 
$\rho_{ds}(h')$ and $\rho_{ds}(i')$ will be no more than $(z'+1)$-indistinguishable.

In the following, we will make use of an abstract characterization of the state of arrays at a given position $i$, as determined by the scan function $f$, called \emph{configuration}, that   
only accounts for the cardinality of sets in arrays. Theorem \ref{thdesc} states that, when a descriptor 
subsequence is scanned, configurations never repeat, since the sequence of configurations is strictly decreasing according to the lexicographical order $>_{lex}$.
This property will allow us to establish the desired bound on the length of track representatives.

\begin{definition}\label{def:config} 
Let $\rho_{ds}$ be the descriptor sequence for a track $\rho$ and $i$ be a position in the subsequence of $\rho_{ds}$ associated with a given cluster. The \emph{configuration at position} $i$, denoted as $c(i)$, is the tuple
\[c(i)=(|Q_{-2}(i)|,|Q_{-1}(i)|,|Q_{0}(i)|,|Q_{1}(i)|,\cdots ,|Q_{s}(i)|),\]
where $f(\rho_{ds},i) = (Q_{-2}(i),Q_{-1}(i),Q_{0}(i),Q_{1}(i),\cdots ,Q_{s}(i))$.
\end{definition}
An example of a sequence of configurations is given in Figure~\ref{DesEqFig} of Example~\ref{Example:DesEqFig}, where, for each position in the subsequence $\rho_{ds}(3, |\rho_{ds}|-1)$, we give the associated configuration: $c(3)=(2,1,0,0,0,0)$, $c(4)=(1,2,0,0,0,0)$, and so forth.

\begin{theorem}\label{thdesc}
Let $\rho_{ds}$ be the descriptor sequence for a track $\rho$ and $\rho_{ds}(u,v)$, for some $u <v$, be the subsequence associated with a cluster $\mathpzc{C}$. For all $u<i\leq v$, if $\rho_{ds}(i)=d$, then it holds that $d\in Q_t(i-1)$, $d\in Q_{t+1}(i)$, for some $t\in\{-2,-1\}\cup\mathbb{N}$, and $c(i-1)>_{lex}c(i)$.
\end{theorem}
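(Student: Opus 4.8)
The statement asserts three things about the step from position $i-1$ to position $i$: first, that $d$ belongs to some array cell $Q_t(i-1)$; second, that $d$ belongs to $Q_{t+1}(i)$ for that same $t$; third, that $c(i-1)>_{lex}c(i)$. The first is immediate, since the cells $Q_{-2}(j),Q_{-1}(j),Q_0(j),\ldots ,Q_s(j)$ form a partition of $\mathpzc{C}$ at every position $j$ --- an easy induction on $j$, as $f$ is initialised with $\big(\mathpzc{C}\setminus\{d\},\{d\},\emptyset,\ldots ,\emptyset\big)$ and every clause in Figure~\ref{fig:f} only relocates $d$ among cells and merges a final block of cells --- so the index $t$ with $d\in Q_t(i-1)$ exists and is unique. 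The third assertion will then follow mechanically from the second together with the explicit shape of the clause of $f$ that fires. Hence the heart of the matter is the second assertion: reading $\rho_{ds}(i)=d$ moves $d$ up by \emph{exactly} one cell.

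The plan is to prove this by induction on $i$ along the cluster subsequence $\rho_{ds}(u,v)$, carrying a strengthened hypothesis: at every position $j$ the cells classify the elements of $\mathpzc{C}$ \emph{faithfully}, in the sense that $d'\in Q_{-2}(j)$ iff $d'$ has not occurred in $\rho_{ds}(u,j)$; $d'\in Q_{-1}(j)$ iff $d'$ has occurred and its next occurrence would fail to be $1$-indistinguishable from its last; and, for $m\geq 0$, $d'\in Q_m(j)$ iff the next occurrence of $d'$ would be exactly $(m+1)$-indistinguishable from its last. This holds at the first position $u$, using that the first two occurrences of a descriptor element are never $1$-indistinguishable. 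Granting faithfulness at $i-1$, the clause of $f$ firing at $i$ is forced by which cell contains $d$: if $d\in Q_{-2}(i-1)$ then $\rho_{ds}(i)$ is a first occurrence and clause (a) puts $d$ into $Q_{-1}(i)$; if $d\in Q_{-1}(i-1)$ then $\rho_{ds}(i)$ is not $1$-indistinguishable from its predecessor and clause (b) puts $d$ into $Q_0(i)$; and if $d\in Q_t(i-1)$ with $t\geq 0$ then $\rho_{ds}(i)$ is exactly $(t+1)$-indistinguishable from its predecessor, so clause (d) fires with parameter $t+1$ and puts $d$ into $Q_{t+1}(i)$. In every case $d$ has climbed from $Q_t(i-1)$ to $Q_{t+1}(i)$, and clauses (c) and (e) never apply.

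The remaining obligation --- and, I expect, the delicate part --- is to re-establish faithfulness after the step, i.e.\ to show that the merging sub-expressions $\{d\}\cup\bigcup_{m\geq -1}Q_m(i-1)$, $\{d\}\cup\bigcup_{m\geq 0}Q_m(i-1)$ and $\{d\}\cup\bigcup_{m\geq t}Q_m(i-1)$ appearing in clauses (a), (b) and (d) drop each $d'\neq d$ to precisely the cell predicting the correct level for $d'$'s next occurrence. This is where the recursive clause of Definition~\ref{def:k-indist} is used in full: the occurrence of $d$ at $i$ is a new ``event'' which, by that clause, caps the level attainable by any later occurrence of a $d'$ whose last occurrence precedes $i$, and the value of the cap is exactly the level just assigned to $d$; the monotonicity and transitivity facts of Section~\ref{sec:clusters} (notably Proposition~\ref{prop:property1} and Corollary~\ref{propC}) are what is needed to pin this down and to check that no $d'$ is demoted further than it should be --- in particular that $Q_{t+1}(i)$ absorbs exactly those $d'$ whose predicted level exceeded $t+1$.

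Once the second assertion is in hand, the third is routine. In each of clauses (a), (b) and (d), the cells of index strictly below $t$ are at most modified by removal of $d$, and $d\notin Q_m(i-1)$ for $m<t$; hence $|Q_m(i)|=|Q_m(i-1)|$ for all such $m$. The cell of index $t$ loses $d$, so $|Q_t(i)|=|Q_t(i-1)|-1$. Thus $c(i-1)$ and $c(i)$ agree on every coordinate of index below $t$ while $c(i-1)$ is strictly larger in coordinate $t$, i.e.\ $c(i-1)>_{lex}c(i)$, as required.
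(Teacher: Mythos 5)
Your overall architecture is reasonable and, in fact, organized differently from the paper's: you propose to carry an explicit ``faithfulness'' invariant (for $m\geq 0$, $d'\in Q_m(j)$ iff an occurrence of $d'$ at position $j+1$ would be exactly $(m+1)$-indistinguishable from its last occurrence, and analogously for $Q_{-2},Q_{-1}$) and to read the theorem off from it, whereas the paper proves the statement directly by induction on $i$ with a seven-case analysis on the history of $d$ since its previous occurrence, reconstructing in each case the cell $d$ must occupy at $i-1$ and using the inductive hypothesis on intermediate occurrences of \emph{other} descriptor elements together with Proposition~\ref{propA} and Corollary~\ref{propC}. Your reductions are fine as far as they go: the partition property gives the first assertion, faithfulness at $i-1$ forces the clause of $f$ that fires (and rules out (c) and (e)), and the lexicographic decrease then follows by inspecting the clauses.

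The problem is that you never prove the one step in which essentially all of the theorem's difficulty resides: that faithfulness is \emph{preserved} across the merges in clauses (a), (b) and (d). You assert that the occurrence of $d$ at $i$ ``caps the level attainable by any later occurrence of a $d'$ whose last occurrence precedes $i$'' at ``exactly the level just assigned to $d$'', and defer the verification to ``the monotonicity and transitivity facts'' of Section~\ref{sec:clusters}. This cap is not a one-line consequence of Definition~\ref{def:k-indist}: what a future occurrence of $d'$ at position $p$ requires of the intermediate position $i$ is that $\rho_{ds}(i)$ be $(k-1)$-indistinguishable from some occurrence of $d$ \emph{before the last occurrence of $d'$}, which need not coincide with the immediately preceding occurrence of $d$ (the one that determined where $d$ lands). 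Relating the two depends on whether $d'$'s last occurrence falls before or after $d$'s previous occurrence, on chaining levels through intermediate occurrences via Corollary~\ref{propC}, and on exhibiting witness occurrences that block higher levels; one must also check separately that elements left in cells of index at most $t$ remain correctly classified, i.e.\ that no $d'$ is demoted too little. This is exactly the work done in cases 5--7 of the paper's proof, and in your write-up it is a stated plan rather than an argument. Until that preservation lemma is actually established, the proof is incomplete.
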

The proof is given in~\ref{thdescProof}.

We show now how to select all and only those tracks which do not feature any pair of $k$-indistinguishable occurrences of descriptor elements. To this end, we make use of a scan function $f$ which uses $k + 3$ arrays (the value $k+3$ accounts for the parameter $k$ of descriptor element indistinguishability, plus the three arrays $Q_{-2}()$, $Q_{-1}()$, $Q_0()$). Theorem \ref{thdesc} guarantees that, while scanning a subsequence, configurations never repeat. This allows us to set an upper bound to the length of a track such that, whenever exceeded, the descriptor sequence for the track features at least a pair of $k$-indistinguishable occurrences of some descriptor element. The bound is essentially given by the number of possible configurations for $k + 3$ arrays.

By an easy combinatorial argument, we can prove the following proposition.
\begin{proposition}\label{starsbars}
For all $n,t\in\mathbb{N}\setminus\{0\}$, the number of distinct $t$-tuples of natural 
numbers whose sum equals $n$ is 
$\varepsilon(n,t)=\binom{n+t-1}{n}=\binom{n+t-1}{t-1}$.
\end{proposition}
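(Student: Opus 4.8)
The plan is to prove Proposition~\ref{starsbars} by the classical ``stars and bars'' bijection, combined with the symmetry of binomial coefficients. First I would make the convention explicit that here $\mathbb{N}$ includes $0$: this is the convention under which the stated formula is correct, and it is also the one relevant to the intended application, since a configuration $c(i)$ typically has many coordinates $|Q_m(i)|$ equal to $0$.

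The core step is to set up an explicit bijection. To a $t$-tuple $(x_1,\ldots,x_t)$ of natural numbers with $x_1+\cdots+x_t=n$, associate the word over the two-letter alphabet $\{\star,\mid\}$ obtained by concatenating $x_1$ copies of $\star$, then one $\mid$, then $x_2$ copies of $\star$, then one $\mid$, \ldots, then one $\mid$, then $x_t$ copies of $\star$. This word contains exactly $n$ occurrences of $\star$ and exactly $t-1$ occurrences of $\mid$, hence has length $n+t-1$. Conversely, from any word of length $n+t-1$ with exactly $t-1$ bars one recovers a unique $t$-tuple by reading off the number of stars occurring before the first bar, between consecutive bars, and after the last bar. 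I would then check that these two maps are mutually inverse, so that the set of $t$-tuples in question is in bijection with the set $S$ of length-$(n+t-1)$ words over $\{\star,\mid\}$ having exactly $n$ stars (equivalently, exactly $t-1$ bars).

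It then remains to count $S$, which is routine: a word in $S$ is determined by the choice of which $n$ of its $n+t-1$ positions are occupied by a star, giving $|S|=\binom{n+t-1}{n}$, and equally by the choice of which $t-1$ positions are occupied by a bar, giving $|S|=\binom{n+t-1}{t-1}$; the equality $\binom{n+t-1}{n}=\binom{n+t-1}{t-1}$ is in any case immediate from $\binom{m}{k}=\binom{m}{m-k}$ with $m=n+t-1$ and $k=n$. Combining with the bijection yields $\varepsilon(n,t)=\binom{n+t-1}{n}=\binom{n+t-1}{t-1}$. There is no real obstacle here; the only point demanding a little care is verifying that the encoding and decoding maps are genuinely inverse, in particular that leading, trailing, and consecutive bars are handled correctly (these correspond precisely to zero entries of the tuple), which is exactly where the $0\in\mathbb{N}$ convention is used. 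If a self-contained argument avoiding the alphabet is preferred, one can instead induct on $t$: the base case $t=1$ is trivial, since $(n)$ is the unique tuple and $\binom{n}{0}=1$, and the inductive step follows by splitting on the value of the last coordinate $x_t$ and invoking the hockey-stick identity $\sum_{j=0}^{n}\binom{j+t-2}{t-2}=\binom{n+t-1}{t-1}$.
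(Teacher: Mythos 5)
Your proof is correct and follows essentially the same route as the paper, which also uses the stars-and-bars encoding of a tuple as a configuration of separators and bullets and counts the placements of the $t-1$ separators among $n+t-1$ positions. Your added care about the inverse maps and the convention $0\in\mathbb{N}$ is a welcome refinement but does not change the argument.
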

\begin{proof}
The following figure suggests an alternative representation of a tuple, in the form of a configuration of separators/bullets: \begin{center}
\fbox{
\begin{tabular}{cccccccccccccc}
$\circ$ & $\circ$ & $\circ$ & $\circ$ & $\circ$ & $\mid$ & $\circ$ & $\circ$ & $\circ$ & $\mid$ & $\circ$ & $\mid$ & $\mid$ & $\circ$  
\end{tabular}}
\hspace{0.5cm}
$\leftrightsquigarrow$ \hspace{0.5cm} $(5,3,1,0,1)$
\end{center} 
It can be easily checked that such a representation is \emph{unambiguous}, i.e., there exists a bijection between configurations of separators/bullets and tuples.

The sum of the natural numbers of the tuple equals the number of bullets, and the size of the tuple is the number of separators plus 1. Since there are $\varepsilon(n,t)=\binom{n+t-1}{t-1}$ distinct ways of choosing $t-1$ separators among $n+t-1$ different places---and places which are not chosen must contain bullets---there are exactly $\varepsilon(n,t)$ distinct $t$-tuples of natural numbers whose sum equals $n$.
\end{proof}
Proposition \ref{starsbars} provides two upper bounds for $\varepsilon(n,t)$: $\varepsilon(n,t)\leq (n+1)^{t-1}$ and $\varepsilon(n,t)\leq t^n$.

Since a configuration $c(i)$ of a cluster $\mathpzc{C}$ is a $(k+3)$-tuple whose
elements add up to $|\mathpzc{C}|$, by Proposition~\ref{starsbars} we conclude
that there are at most $\varepsilon(|\mathpzc{C}|,k+3)=\binom{|\mathpzc{C}|+k+2}{k+2}$ 
distinct configurations of size $(k+3)$, whose natural numbers add up to $|\mathpzc{C}|$.
Moreover, since configurations never repeat while scanning a subsequence associated with a cluster $\mathpzc{C}$, $\varepsilon(|\mathpzc{C}|,k+3)$ is an upper bound to the length of such a subsequence.

Now, for any track $\rho$, $\rho_{ds}$ features at most $|W|$ subsequences associated with distinct clusters $\mathpzc{C}_1,\mathpzc{C}_2,\dots$, and thus, if the following upper bound to the length of $\rho$ is exceeded, then there is at least one pair of $k$-indistinguishable occurrences of some descriptor element in $\rho_{ds}$:
$|\rho|\leq 1+(|\mathpzc{C}_1|+1)^{k+2}+(|\mathpzc{C}_2|+1)^{k+2}+\cdots + (|\mathpzc{C}_s|+1)^{k+2}+|W|$, where $s\leq |W|$, and the last addend is to count occurrences of Type-1 descriptor elements. 
Since clusters are disjoint, their union is a subset of $DElm(\rho_{ds})$, and $|DElm(\rho_{ds})|\leq 1+|W|^2$, we get:  
\begin{multline*}
|\rho|\leq 1+(|\mathpzc{C}_1| + |\mathpzc{C}_2| +\cdots + |\mathpzc{C}_s| + |W|)^{k+2}+|W|\leq 
 1+(|DElm(\rho_{ds})| + |W|)^{k+2}+|W| 
\\ \leq 1+(1+|W|^2 + |W|)^{k+2}+|W|\leq 1+(1+|W|)^{2k+4}+|W|.
\end{multline*}
Analogously, by using the alternative bound to $\varepsilon(|\mathpzc{C}|,k+3)$, we have that
\begin{multline*}
|\rho|\leq 1+(k+3)^{|\mathpzc{C_1}|}+(k+3)^{|\mathpzc{C_2}|}+ \cdots + (k+3)^{|\mathpzc{C_s}|} +|W|\leq 
1+(k+3)^{|\mathpzc{C_1}|+|\mathpzc{C_2}|+\cdots + |\mathpzc{C_s}|}+|W|
\\ \leq 1+(k+3)^{|DElm(\rho_{ds})|}+|W|\leq 1+(k+3)^{|W|^2+1}+|W|.
\end{multline*}
The upper bound for $|\rho|$ is then the least of the two given upper bounds:
\[
\tau(|W|,k)= \min\big\{1+(1+|W|)^{2k+4}+|W|, 1+(k+3)^{|W|^2+1}+|W|\big\}.
\]

\begin{theorem}\label{thbound}
Let $\mathpzc{K}=(\mathpzc{AP},W, \delta,\mu,w_0)$ be a finite Kripke structure and $\rho$ be a track in $\Trk_\mathpzc{K}$.
If $|\rho|>\tau(|W|,k)$, then there exists another track in $\Trk_\mathpzc{K}$, whose length is less than or equal to $\tau(|W|,k)$, associated with the same $B_k$-descriptor as $\rho$.	
\end{theorem}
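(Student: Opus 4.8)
The plan is to derive the statement from two ingredients already in place: the counting bound that produced $\tau(|W|,k)$, and the ``contraction'' of a track at a pair of $k$-indistinguishable descriptor-element occurrences. First I would observe that a track $\rho$ with $|\rho|>\tau(|W|,k)$ necessarily has, somewhere in its descriptor sequence $\rho_{ds}$, two occurrences $\rho_{ds}(i)$ and $\rho_{ds}(j)$, with $i<j$, of the \emph{same} descriptor element that are $k$-indistinguishable. This is the contrapositive of the derivation of $\tau$: if $\rho_{ds}$ had no such pair, then, decomposing it into the at most $|W|$ maximal subsequences associated with its clusters, separated by the at most $|W|$ occurrences of Type-1 descriptor elements, the scan function $f$ with $k+3$ arrays runs through each cluster subsequence producing, by Theorem~\ref{thdesc}, a strictly lexicographically decreasing — hence repetition-free — sequence of configurations; by Proposition~\ref{starsbars} each such subsequence then has length at most $\varepsilon(|\mathpzc{C}|,k+3)$, and summing over the (disjoint) clusters, using $\sum_i|\mathpzc{C}_i|\le|DElm(\rho_{ds})|\le 1+|W|^2$, would force $|\rho|\le\tau(|W|,k)$, a contradiction.

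Given such a pair, the second step is a single contraction. By Theorem~\ref{teokequiv}, $\rho(0,i+1)\sim_k\rho(0,j+1)$. Write $\rho=\rho(0,j+1)\cdot\overline{\rho}$, where $\overline{\rho}$ is the remaining suffix $\rho(j+2,|\rho|-1)$ (possibly empty, or a single state, when the removed block lies at the end of $\rho$). Since $\rho_{ds}(i)$ and $\rho_{ds}(j)$ are occurrences of the same descriptor element, their last components coincide, i.e.\ $\rho(i+1)=\rho(j+1)$; hence $\big(\lst(\rho(0,i+1)),\fst(\overline{\rho})\big)=\big(\rho(j+1),\rho(j+2)\big)\in\delta$, so $\rho':=\rho(0,i+1)\cdot\overline{\rho}$ is again a track of $\mathpzc{K}$, with $|\rho'|=|\rho|-(j-i)<|\rho|$. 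Applying Proposition~\ref{extBk} (right extension) to $\rho(0,i+1)\sim_k\rho(0,j+1)$ with the common right extension $\overline{\rho}$ (and, in the degenerate case, using Theorem~\ref{teokequiv} directly) gives $\rho'\sim_k\rho$, so $\rho'$ carries the same $B_k$-descriptor as $\rho$.

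Finally I would iterate: as long as the current track exceeds length $\tau(|W|,k)$, the first step supplies a fresh $k$-indistinguishable pair and the second step replaces the track by a strictly shorter, $k$-descriptor equivalent one. Since lengths strictly decrease and are bounded below by $2$, the procedure halts at a track $\tilde{\rho}\in\Trk_\mathpzc{K}$ whose descriptor sequence is free of $k$-indistinguishable occurrences — a track representative — so that $|\tilde{\rho}|\le\tau(|W|,k)$ (again by the first step, applied to $\tilde{\rho}$), while $\tilde{\rho}\sim_k\rho$ by transitivity of $\sim_k$, i.e.\ $\tilde{\rho}$ has the same $B_k$-descriptor as $\rho$. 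I expect the only real obstacle to be the first step — showing that \emph{every} track longer than $\tau(|W|,k)$ must exhibit a $k$-indistinguishable pair — which is exactly where Theorem~\ref{thdesc} and the combinatorial estimate of Proposition~\ref{starsbars} do the work; once that is granted, the rest is a routine descent argument built on Theorem~\ref{teokequiv} and Proposition~\ref{extBk}.
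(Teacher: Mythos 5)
Your proposal is correct and follows essentially the same route as the paper's own (sketched) proof: locate a $k$-indistinguishable pair via the counting argument behind $\tau(|W|,k)$, contract using Theorem~\ref{teokequiv} and Proposition~\ref{extBk}, and iterate until the length bound is met. Your explicit treatment of the degenerate case where the contracted occurrence lies at the end of the track is a small additional care the paper leaves implicit.
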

\begin{proof}[Proof (sketch)]
If $|\rho|>\tau(|W|,k)$, then there exists (at least) a subsequence of $\rho_{ds}$, associated with some cluster $\mathpzc{C}$, which contains (at least) a pair of $k$-indistinguishable occurrences of some descriptor element $d\in\mathpzc{C}$, say $\rho_{ds}(i)$ and $\rho_{ds}(j)$, with $j<i$.
By Theorem \ref{teokequiv}, the two tracks $\tilde{\rho}_1=\rho(0, j+1)$ and $\tilde{\rho}_2=\rho(0, i+1)$ have the same $B_k$-descriptor. 
Now, let us rewrite the track $\rho$ as the concatenation $\tilde{\rho}_2\cdot\overline{\rho}$ for some $\overline{\rho}$. By Proposition \ref{extBk}, the tracks $\rho=\tilde{\rho}_2\cdot\overline{\rho}$ and $\rho'=\tilde{\rho}_1\cdot\overline{\rho}$ are associated with the same $B_k$-descriptor. Since $\lst(\tilde{\rho}_1)=\lst(\tilde{\rho}_2)$ ($\rho_{ds}(j)$ and $\rho_{ds}(i)$ are occurrences of the same descriptor element $d$), $\rho'=\tilde{\rho}_1\cdot\overline{\rho}$ is a track of $\mathpzc{K}$ shorter than $\rho$.
If $|\rho'|\leq\tau(|W|,k)$, we have proved the thesis; otherwise, we can iterate the process by applying the above contraction to $\rho'$.
\end{proof}

Theorem \ref{thbound} allows us to define a termination criterion to bound the depth of the unravelling of a finite Kripke structure ($(k\geq 1)$-\emph{termination criterion}), while searching for track representatives for witnessed $B_k$-descriptors:
\emph{for any $k\geq 1$, to get a track representative for every $B_k$-descriptor, with initial state $v$, and witnessed in a finite Kripke structure with set of states $W$, we can avoid taking into consideration tracks longer than $\tau(|W|,k)$ while exploring the unravelling of the structure from $v$.}

Thanks to the above results, we are now ready to define a model checking algorithm for $\AAbarBBbarEbar$ formulas.
First, we introduce the unravelling Algorithm~\ref{unr}, which explores the unravelling of the input Kripke structure $\mathpzc{K}$ to find track representatives for all witnessed $B_k$-descriptors. It features two modalities, \emph{forward mode} (which is active when its fourth parameter, direction, is \textsc{forw}) and \emph{backward mode} (active when the parameter direction is \textsc{backw}), in which the unravelling of $\mathpzc{K}$ is visited following the direction of edges and against their direction (that is equivalent to visiting the transposed graph $\overline{\mathpzc{K}}$ of $\mathpzc{K}$), respectively. In both cases, if there exist $k$-indistinguishable occurrences of a descriptor element in $\rho_{ds}$, the track $\rho$ is never returned.

\begin{algorithm}[tb]
\begin{minipage}{\textwidth}
\begin{algorithmic}[1]
	\If{direction = \textsc{forw}}
	    \State{Unravel $\mathpzc{K}$ starting from $v$ according to $\ll$}\Comment{``$\ll$'' is an arbitrary order of the nodes of $\mathpzc{K}$}
	    \State{For every new node \emph{of the unravelling} met during the visit, return the track $\rho$ from $v$ to the current node only if:}
	    \If{$k=0$}
	        \State{Apply the 0-termination criterion}
	    \Else
	        \If{The last descriptor element $d$ of (the descriptor sequence of) the current track $\rho$ is $k$-indistinguishable from a previous occurrence of $d$} 
	        \State{\emph{skip} $\rho$ and backtrack to $\rho(0, |\rho|-2)\cdot \overline{v}$, where $\overline{v}$ is the minimum state (w.r.t. $\ll$), greater than $\rho(|\rho|-1)$, such that $(\rho(|\rho|-2),\overline{v})$ is an edge of $\mathpzc{K}$.}
	        \EndIf	   
	    \EndIf 
	\ElsIf{direction = \textsc{backw}}
	    \State{Unravel  $\overline{\mathpzc{K}}$ starting from $v$ according to $\ll$}\Comment{$\overline{\mathpzc{K}}$ is $\mathpzc{K}$ with transposed edges}
	    \State{For every new node \emph{of the unravelling} met during the visit, consider the track $\rho$ \emph{from the current node to $v$}, and recalculate descriptor element indistinguishability from scratch (left to right); return the track only if:}
	    \If{$k=0$}
            \State{Apply the 0-termination criterion}
	    \Else
	        \If{There exist two $k$-indistinguishable occurrences of a descriptor element $d$ in (the descriptor sequence of) the current track $\rho$} 
	            \State{\emph{skip} $\rho$}
	        \EndIf
	    \EndIf
	    \State{Do not visit tracks of length greater than $\tau(|W|,k)$}
	\EndIf    
\end{algorithmic}
\end{minipage}
\caption{\texttt{Unrav}$(\mathpzc{K},v,k,\text{direction})$}\label{unr}
\end{algorithm}

In the \emph{forward mode} (which will be used to deal with $\hsA$ and $\hsBt$ modalities), the direction of track exploration and that of indistinguishability checking are the same, so we can stop extending a track as soon as the first pair of \mbox{$k$-indistinguishable} occurrences of a descriptor element is found in the descriptor sequence, suggesting an easy termination criterion for stopping the unravelling of tracks. In the \emph{backward mode} (used in the case of $\hsAt$ and $\hsEt$ modalities), such a straightforward criterion cannot be adopted, because tracks are explored right to left (the opposite direction with respect to edges of the Kripke structure), while  the indistinguishability relation over descriptor elements is computed left to right. In general, changing the prefix of a considered track requires recomputing from scratch the descriptor sequence and the indistinguishability relation over descriptor elements. 
In particular, $k$-indistinguishable occurrences of descriptor elements can be detected in the middle of a subsequence, and not necessarily at the end.
In this latter case, however,
the upper bound $\tau(|W|,k)$ on the maximum depth of the unravelling ensures the termination of the algorithm (line 17).

The next theorem proves soundness and completeness of Algorithm~\ref{unr} for the forward mode. The proof for the backward one is quite similar, and thus omitted.
\begin{theorem}\label{corrunr}
Let $\mathpzc{K}=(\mathpzc{AP},W,\delta,\mu,w_0)$ be a finite Kripke structure, $v\in W$, and $k\in\mathbb{N}$.
For every track $\rho$ of $\mathpzc{K}$, with $\fst(\rho)=v$ and $|\rho|\geq 2$, the unravelling Algorithm~\ref{unr} returns a track $\rho'$ of $\mathpzc{K}$, with $\fst(\rho')=v$, such that $\rho$ and $\rho'$ are associated with the same $B_k$-descriptor and $|\rho'|\leq \tau(|W|,k)$.
\end{theorem}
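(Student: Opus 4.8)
The plan is to pin down \emph{exactly} which tracks the forward-mode run of Algorithm~\ref{unr} returns, and to read soundness and completeness off that characterisation. The cornerstone is a prefix-locality observation about $k$-indistinguishability: by a straightforward induction on $k$ using Definition~\ref{def:k-indist}, whether two occurrences $\rho_{ds}(i)$ and $\rho_{ds}(j)$ with $i<j$ are $k$-indistinguishable depends only on the prefix $\rho_{ds}(0,j)$, equivalently on the descriptor sequence of the prefix track $\rho(0,j+1)$, whose last descriptor element is precisely $\rho_{ds}(j)$. In particular, a $k$-indistinguishable pair sitting at positions $(i,j)$ of $\rho_{ds}$ is already ``visible at the end'' of $\rho(0,j+1)$.

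Using this, I would prove (for $k\ge 1$) that the forward visit \emph{reaches} a track $\sigma$ with $\fst(\sigma)=v$ if and only if $\sigma_{ds}$ contains no $k$-indistinguishable pair $(i,j)$ with $j\le|\sigma_{ds}|-2$, and \emph{returns} $\sigma$ if and only if, in addition, there is no such pair with $j=|\sigma_{ds}|-1$; hence the returned tracks rooted at $v$ are exactly the track representatives starting at $v$. One direction: if $\sigma_{ds}$ has no $k$-indistinguishable pair at all then neither does any prefix of $\sigma$ (a pair in a prefix is a pair of $\sigma_{ds}$ at the same positions), so the greedy pruning test never fires along the branch leading to $\sigma$, the DFS threads through all proper prefixes and returns $\sigma$. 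Conversely, if $\sigma_{ds}$ has a pair at $(i,j)$, then visiting the prefix track $\rho(0,j+1)$ --- whose last descriptor element $\rho_{ds}(j)$ is $k$-indistinguishable from the earlier occurrence $\rho_{ds}(i)$ --- triggers the test, so $\sigma$ is either skipped (if $j=|\sigma_{ds}|-1$) or never reached. For $k=0$ the $B_0$-descriptor of a track is its single descriptor element and the $0$-termination criterion makes the algorithm return exactly the tracks from $v$ of length at most $2+|W|^2$, so the picture simplifies.

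Soundness and termination then follow quickly. A returned track $\rho'$ has $\fst(\rho')=v$ and $\rho'_{ds}$ free of $k$-indistinguishable pairs, so $|\rho'|\le\tau(|W|,k)$: for $k\ge 1$ this is the contrapositive of the configuration-counting bound established just before Theorem~\ref{thbound}, and for $k=0$ it follows from Proposition~\ref{propzerohalt} together with $2+|W|^2\le\tau(|W|,0)$. The same estimate shows that any track of length greater than $\tau(|W|,k)$ reached by the DFS must carry a $k$-indistinguishable pair ending at its last position and is therefore pruned, so the unravelling never goes past depth $\tau(|W|,k)+1$ and the algorithm halts. For completeness, given an arbitrary $\rho$ with $\fst(\rho)=v$ and $|\rho|\ge 2$, I would iterate the contraction step from the proof of Theorem~\ref{thbound}: while $\rho_{ds}$ has a $k$-indistinguishable pair $\rho_{ds}(i)=\rho_{ds}(j)$ with $i<j$, replace $\rho$ by $\rho(0,i+1)\cdot\rho(j+2,|\rho|-1)$, with the convention that the second factor is empty when $j+1=|\rho|-1$. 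Since $\rho(i+1)=\rho(j+1)$ (the two occurrences carry the same final state), this is again a track of $\mathpzc{K}$ starting at $v$; it is strictly shorter than $\rho$; and it has the same $B_k$-descriptor as $\rho$ by Theorem~\ref{teokequiv} and Proposition~\ref{extBk}. As the length strictly decreases and stays at least $2$, the process terminates at a track representative $\rho'$ starting at $v$ with $\rho'\sim_k\rho$; by the characterisation the algorithm returns $\rho'$, and $|\rho'|\le\tau(|W|,k)$ by soundness. (For $k=0$ one argues directly: Proposition~\ref{propzerohalt} already supplies such a $\rho'$ of length at most $2+|W|^2\le\tau(|W|,0)$ with the same $B_0$-descriptor, and it is returned.)

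I expect the characterisation to be the main obstacle: one has to match the algorithm's purely local, greedy test --- comparing only the \emph{last} descriptor element of the current track against its earlier occurrences --- with the global property ``$\sigma_{ds}$ has no $k$-indistinguishable pair whatsoever''. Both halves of this match (every pair is detected at some prefix; no representative is discarded through one of its prefixes) hinge on the prefix-locality of $k$-indistinguishability, so establishing that lemma cleanly and handling its book-keeping is where the real work lies; the remaining ingredients --- Theorems~\ref{teokequiv} and~\ref{thbound} and Proposition~\ref{extBk} --- are reused essentially verbatim.
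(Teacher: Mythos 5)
Your proposal is correct. It differs from the paper's proof mainly in its decomposition. The paper argues by induction on $|\rho|$: if $\rho_{ds}$ contains a $k$-indistinguishable pair, it locates the leftmost pair of \emph{consecutive} $k$-indistinguishable occurrences, contracts there to obtain a strictly shorter track $\rho'$ with the same $B_k$-descriptor (via Theorem~\ref{teokequiv} and Proposition~\ref{extBk}), observes that $\rho'$ is still present in the unravelling while $\rho$ is pruned, and applies the induction hypothesis to $\rho'$; the fact that the algorithm's greedy, last-element-only test behaves consistently across prefixes is used only implicitly. You instead make that consistency explicit as a prefix-locality lemma for $k$-indistinguishability (which does follow by an easy induction on $k$ from Definition~\ref{def:k-indist}, since the relation at positions $(i,j)$ only inspects $\rho_{ds}(0,j)$, and the descriptor sequence of a prefix track is the corresponding prefix of $\rho_{ds}$), and from it derive an exact characterization of the output: the returned tracks rooted at $v$ are precisely the representatives rooted at $v$. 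Completeness then reduces to termination of the contraction loop, which is immediate since the length strictly decreases. Your route is somewhat longer but proves a sharper statement (exactly which tracks are emitted, and in particular that no representative is ever discarded through one of its prefixes) and surfaces the one point where the correctness of the local pruning test genuinely needs justification; the paper's induction is more economical because the induction hypothesis absorbs the question of what happens to the contracted track. Both arguments consume the same ingredients — Theorem~\ref{teokequiv}, Proposition~\ref{extBk}, Proposition~\ref{propzerohalt} for $k=0$, and the counting bound $\tau(|W|,k)$ — in essentially the same way.
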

\begin{proof}

%We proof the direction \textsc{forward}). The proof of the soundness and completeness for the unravelling %algorithm in figure \ref{unr} in backward direction is quite similar to the one given for the forward 
%case, so it is omitted.

If $k=0$ the thesis follows immediately by the 0-termination criterion. So let us assume $k\geq 1$. The proof is by induction on $\ell=|\rho|$.

(Case $\ell=2$) In this case, $\rho_{ds}=(\fst(\rho),\emptyset,\lst(\rho))$, and the only descriptor element of the sequence is Type-1. Thus, $\rho$ itself is returned by the algorithm.

(Case $\ell>2$) If in $\rho_{ds}$ there are no pairs of $k$-indistinguishable occurrences of some descriptor element, the termination criterion of Algorithm \ref{unr} can never be applied. Thus, $\rho$ itself is returned (as soon as it is visited) and its length is at most $\tau(|W|,k)$.

Otherwise, the descriptor sequence of any track $\rho$ can be split into 3 parts: $\rho_{ds}=\rho_{ds1}\cdot\rho_{ds2}\cdot\rho_{ds3}$, where $\rho_{ds1}$ ends with a Type-1 descriptor element and it does not contain pairs of \mbox{$k$-indistinguishable} occurrences of any descriptor element; $\rho_{ds2}$ is a subsequence associated with a cluster $\mathpzc{C}$ of (Type-2) descriptor elements with at least a pair of \mbox{$k$-indistinguishable} occurrences of descriptor elements; $\rho_{ds3}$ (if it is not the empty sequence) begins with a Type-1 descriptor element. This amounts to say that 
$\rho_{ds2}$ is the ``leftmost'' subsequence of $\rho_{ds}$ consisting of elements of a cluster $\mathpzc{C}$, with at least a pair of $k$-indistinguishable occurrences of some descriptor element.

%Regardless of $\rho_{ds3}$ being present, it is possible to find 
Therefore, there are two indexes $i,j$, with $j<i$, such that $\rho_{ds2}(j)$ and $\rho_{ds2}(i)$ are two $k$-indistinguishable occurrences of some $d\in\mathpzc{C}$ in $\rho_{ds}$. By Proposition~\ref{propA}, there exists a pair of indexes $i',j'$,  with $j'<i'$, such that $\rho_{ds2}(j')$ and $\rho_{ds2}(i')$ are two \emph{consecutive} $k$-indistinguishable occurrences of $d$ (by consecutive we mean that, for all $t\in[j'+1,i'-1]$, $\rho_{ds2}(t)\neq d$). If there are many such pairs (even for different elements in $\mathpzc{C}$), let us consider the one with the lower index $i'$ (namely, precisely the pair which is found earlier by the unravelling algorithm). By Theorem~\ref{teokequiv}, the two tracks associated with $\rho_{ds1}\cdot \rho_{ds2}(0, j')$ and $\rho_{ds1}\cdot \rho_{ds2}(0, i')$, say $\tilde{\rho}_1$ and $\tilde{\rho}_2$ respectively, have the same $B_k$-descriptor. Then, by Proposition \ref{extBk}, the tracks $\rho=\tilde{\rho}_2\cdot\overline{\rho}$ (for some $\overline{\rho}$) and $\rho'=\tilde{\rho}_1\cdot\overline{\rho}$ have the same $B_k$-descriptor.

Algorithm \ref{unr} does not return $\tilde{\rho}_2$ and, due to the backtrack step, neither $\rho=\tilde{\rho}_2\cdot\overline{\rho}$ is returned. 
But since $\lst(\tilde{\rho}_1)=\lst(\tilde{\rho}_2)$ ($\rho_{ds2}(j')$ and $\rho_{ds2}(i')$ are occurrences of the same descriptor element), the unravelling of $\mathpzc{K}$ features $\rho'=\tilde{\rho}_1\cdot\overline{\rho}$, as well. Now, by induction hypothesis, a track $\rho''$ of $\mathpzc{K}$ is returned, such that $\rho'$ and $\rho''$ have the same $B_k$-descriptor, and $|\rho''|\leq \tau(|W|,k)$. $\rho$ has in turn the same $B_k$-descriptor as $\rho''$.
\end{proof}

The above proof shows how a ``contracted variant'' of a track $\rho$ is (indirectly) computed by Algorithm~\ref{unr}.
As an example,  
$\rho'=v_0v_1v_2v_3v_3v_2v_3v_3v_2v_3v_2v_3v_2v_1v_3v_2v_3v_2v_1v_2v_1v_3v_2$
is returned by Algorithm~\ref{unr}
in place of the track $\rho$ of Example~\ref{Example:DesEqFig}, and 
it can be checked that $\rho'_{ds}$ does not contain any pair of $3$-indistinguishable occurrences of a descriptor element and that $\rho$ and $\rho'$ have the same $B_3$-descriptor.

%Luckily, a heuristic is applicable when dealing with the backward modality: if the descriptor sequence $\rho_{ds}$ for $\rho$ contains a pair of $k$-indistinguishable occurrences $\rho_{ds}(j)$ and $\rho_{ds}(i)$ of the same descriptor element, with $j<i$, it is possible to skip the exploration of tracks of the form $\overline{\rho}\cdot \rho$, for any $\overline{\rho}\in \Trk_\mathpzc{K}$. Since $\rho(0, j+1)$ and $\rho(0, i+1)$ have the same $B_k$-descriptor, by Proposition~\ref{extBk} for any $\overline{\rho}\in \Trk_\mathpzc{K}$ such that $(\lst(\overline{\rho}),\fst(\rho))$ is an edge of $\mathpzc{K}$, $\overline{\rho}\cdot\rho(0, i+1)$ and $\overline{\rho}\cdot\rho(0, j+1)$ have the same $B_k$-descriptor and thus $\overline{\rho}\cdot\rho$ still features the same pair of $k$-indistinguishable occurrences. Then, the exploration can continue from $\overline{v}\cdot\rho(1, |\rho|-1)$, where $\overline{v}$ is the minimum state (with respect to the arbitrarily chosen order of nodes $\ll$) greater than $\rho(0)$ such that $(\overline{v},\rho(1))$ is an edge of $\mathpzc{K}$

\begin{algorithm}[p]
\begin{algorithmic}[1]
	\State{$k\gets \nestb(\psi)$}
	\State{$u\gets New\left(\texttt{Unrav}(\mathpzc{K},w_0,k,\textsc{forw})\right)$}\Comment{$w_0$ is the initial state of $\mathpzc{K}$}
	\While{$u.\texttt{hasMoreTracks()}$}
	    \State{$\tilde{\rho}\gets u.\texttt{getNextTrack()}$}
	    \If{$\texttt{Check}(\mathpzc{K},k,\psi,\tilde{\rho})=0$}
	        \Return{0: ``$\mathpzc{K},\tilde{\rho}\not\models \psi$''}
	    \EndIf
	\EndWhile
	\Return{1: ``$\mathpzc{K}\models \psi$''}	
\end{algorithmic}
\caption{\texttt{ModCheck}$(\mathpzc{K},\psi)$}\label{ModCheck2}
\end{algorithm}

\begin{algorithm}[p]
\resizebox{0.95\textwidth}{!}{
\begin{minipage}{\textwidth}
\begin{multicols}{2}
\begin{algorithmic}[1]
    \If{$\psi=\top$}
        \Return{1}
    \ElsIf{$\psi=\bot$}
        \Return{0}
	\ElsIf{$\psi=p\in\mathpzc{AP}$}
	    \If{$p\in \bigcap_{s\in \states(\tilde{\rho})}\mu(s)$}
	        \State{\textbf{return} 1 \textbf{else} \textbf{return} 0}
	    \EndIf
	\ElsIf{$\psi=\neg\varphi$}
	    \Return{1 $-$ $\texttt{Check}(\mathpzc{K},k,\varphi,\tilde{\rho})$}
    \ElsIf{$\psi=\varphi_1\wedge\varphi_2$}
        \If{$\texttt{Check}(\mathpzc{K},k,\varphi_1,\tilde{\rho})=0$}
	        \Return{0}
	    \Else
	        \Return{$\texttt{Check}(\mathpzc{K},k,\varphi_2,\tilde{\rho})$}
	    \EndIf
	\ElsIf{$\psi=\hsA\varphi$}
	    \State{$u\gets New\left(\texttt{Unrav}(\mathpzc{K},\lst(\tilde{\rho}),k,\textsc{forw})\right)$}
	    \While{$u.\texttt{hasMoreTracks()}$}
	        \State{$\rho\gets u.\texttt{getNextTrack()}$}
	        \If{$\texttt{Check}(\mathpzc{K},k,\varphi,\rho)=1$}
	            \Return{1}
	        \EndIf
	    \EndWhile
	    \Return{0}
	\ElsIf{$\psi=\hsAt\varphi$}
	    \State{$u\gets New\left(\texttt{Unrav}(\mathpzc{K},\fst(\tilde{\rho}),k,\textsc{backw})\right)$}
	    \While{$u.\texttt{hasMoreTracks()}$}
	        \State{$\rho\gets u.\texttt{getNextTrack()}$}
	        \If{$\texttt{Check}(\mathpzc{K},k,\varphi,\rho)=1$}
	            \Return{1}
	        \EndIf
	    \EndWhile
	    \Return{0}%
\columnbreak %%%%%%%%%%%%%%%%%%%%%%%%%%%%%%%%%%%%%%%%%%%%%%%%%%%%%%%%%%%%%%%%%%%%%%%%%%%%%%%%%%%%%%%
	\ElsIf{$\psi=\hsB\varphi$}
	    \For{each $\overline{\rho}$ prefix of $\tilde{\rho}$}\label{Bcase}
	        \If{$\texttt{Check}(\mathpzc{K},k-1,\varphi,\overline{\rho})=1$}
	            \Return{1}
	        \EndIf
	    \EndFor
	    \Return{0}
	\ElsIf{$\psi=\hsBt\varphi$}
	    \For{each $v\in W$ s.t. $(\lst(\tilde{\rho}),v)\in\delta$}
	        \If{$\texttt{Check}(\mathpzc{K},k,\varphi,\tilde{\rho}\cdot v)=1$}
	                \Return{1}
	        \EndIf
	        \State{$u\gets New\left(\texttt{Unrav}(\mathpzc{K},v,k,\textsc{forw})\right)$}
	        \While{$u.\texttt{hasMoreTracks()}$}
	            \State{$\rho\gets u.\texttt{getNextTrack()}$}
	            \If{$\texttt{Check}(\mathpzc{K},k,\varphi,\tilde{\rho}\cdot\rho)=1$}
	                \Return{1}
	            \EndIf
	        \EndWhile
	    \EndFor
	    \Return{0}
	\ElsIf{$\psi=\hsEt\varphi$}
	    \For{each $v\in W$ s.t. $(v,\fst(\tilde{\rho}))\in\delta$}
	        \If{$\texttt{Check}(\mathpzc{K},k,\varphi,v \cdot\tilde{\rho})=1$}
	                \Return{1}
	        \EndIf
	        \State{$u\gets New\left(\texttt{Unrav}(\mathpzc{K},v,k,\textsc{backw})\right)$}
	        \While{$u.\texttt{hasMoreTracks()}$}
	            \State{$\rho\gets u.\texttt{getNextTrack()}$}
	            \If{$\texttt{Check}(\mathpzc{K},k,\varphi,\rho\cdot\tilde{\rho})=1$}
	                \Return{1}
	            \EndIf
	        \EndWhile
	    \EndFor
	    \Return{0}
	\EndIf
\end{algorithmic}
\end{multicols}
\end{minipage}}
\caption{\texttt{Check}$(\mathpzc{K},k,\psi,\tilde{\rho})$}\label{Chk2}
\end{algorithm}

Algorithm~\ref{unr} can be used to define the model checking procedure \texttt{ModCheck}$(\mathpzc{K},\psi)$ (Algorithm~\ref{ModCheck2}).
\texttt{ModCheck}$(\mathpzc{K},\psi)$ exploits the procedure $\texttt{Check}(\mathpzc{K},k, \psi,\tilde{\rho})$ (Algorithm~\ref{Chk2}), which checks a formula $\psi$ of B-nesting depth $k$ against a track $\tilde{\rho}$ of the Kripke structure $\mathpzc{K}$. $\texttt{Check}(\mathpzc{K},k, \psi,\tilde{\rho})$ basically calls itself recursively on the subformulas of $\psi$, and uses the unravelling Algorithm~\ref{unr} to deal with $\hsA$, $\hsAt$, $\hsBt$, and $\hsEt$ modalities. Soundness and completeness of these two procedures are stated by Lemma \ref{lemmamdc}
and Theorem \ref{thModcheck} below, whose proofs can be found in \ref{explCheck} and \ref{proofModCheck2}, respectively. 

\begin{lemma}\label{lemmamdc}
Let $\psi$ be an $\AAbarBBbarEbar$ formula with $\nestb(\psi)=k$, $\mathpzc{K}$ be a finite Kripke structure, and $\tilde{\rho}$ be a track in $\Trk_\mathpzc{K}$. It holds that $\texttt{Check}(\mathpzc{K},k, \psi,\tilde{\rho})=1$ if and only if $\mathpzc{K},\tilde{\rho}\models \psi$.
\end{lemma}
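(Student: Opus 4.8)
The plan is to argue by structural induction on the formula $\psi$, with a secondary induction on $\nestb(\psi)=k$ to handle the $\hsB$ case cleanly. The statement to be proven is a biconditional relating the recursive procedure $\texttt{Check}(\mathpzc{K},k,\psi,\tilde{\rho})$ to the semantic relation $\mathpzc{K},\tilde{\rho}\models\psi$; since the algorithm follows the syntactic shape of $\psi$, most cases reduce to matching a branch of Algorithm~\ref{Chk2} against the corresponding clause of the HS semantics (Section~\ref{sec:backgr}) together with the definition of $\sigma$ in the induced abstract interval model $\mathpzc{A}_\mathpzc{K}$ (Definition~\ref{def:inducedmodel}).

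First I would dispatch the easy cases. For $\psi=\top,\bot$ the algorithm returns $1,0$ unconditionally, matching the semantics. For $\psi=p\in\mathpzc{AP}$, line~6 checks $p\in\bigcap_{s\in\states(\tilde{\rho})}\mu(s)$, which is exactly $p\in\sigma(\tilde{\rho})$, hence $\mathpzc{K},\tilde{\rho}\models p$. The Boolean connectives $\neg$ and $\wedge$ follow immediately from the inductive hypothesis, noting that $\nestb$ does not increase for $\neg$ and equals the max for $\wedge$, so the parameter $k$ passed in the recursive calls is still an upper bound on the $B$-nesting depth of the subformula (the algorithm actually passes the same $k$, which is harmless since $\texttt{Check}$ only needs $k\geq\nestb(\cdot)$ for the $\hsB$ branch to have enough ``budget''). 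For $\psi=\hsB\varphi$ (line~\ref{Bcase}), the algorithm iterates over all prefixes $\overline{\rho}$ of $\tilde{\rho}$ and recurses with parameter $k-1$; since $\nestb(\hsB\varphi)=1+\nestb(\varphi)$ we have $\nestb(\varphi)\leq k-1$, so the recursive call is well-formed, and by definition of $B_\mathbb{I}$ the prefixes of $\tilde{\rho}$ are exactly the $\overline{\rho}$ with $\tilde{\rho}\,B_\mathbb{I}\,\overline{\rho}$; the existential disjunction over these matches the semantics of $\hsB$.

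The substantive cases are $\hsA,\hsAt,\hsBt,\hsEt$, where the algorithm invokes the unravelling procedure $\texttt{Unrav}$ rather than enumerating all (infinitely many) relevant tracks. Here the key tool is Theorem~\ref{corrunr} (and its backward analogue): for every track $\rho$ with $\fst(\rho)=v$ (resp.\ $\lst(\rho)=v$) there is a track $\rho'$ returned by $\texttt{Unrav}(\mathpzc{K},v,k,\cdot)$ with $\rho\sim_k\rho'$, and conversely every returned track is a genuine track of $\mathpzc{K}$. Combined with Theorem~\ref{satPresB} (which gives $\mathpzc{K},\rho\models\psi\iff\mathpzc{K},\rho'\models\psi$ whenever $\rho\sim_k\rho'$ and $\nestb(\psi)=k$), this shows that scanning only the representatives returned by $\texttt{Unrav}$ is sound and complete for detecting whether \emph{some} $A$-successor (resp.\ $\overline{A}$-, $\overline{B}$-, $\overline{E}$-successor) of $\tilde{\rho}$ satisfies $\varphi$. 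For $\hsA$: the successors of $\tilde{\rho}$ under $A_\mathbb{I}$ are precisely the tracks starting at $\lst(\tilde{\rho})$, which are exactly those explored (up to $\sim_k$) by $\texttt{Unrav}(\mathpzc{K},\lst(\tilde{\rho}),k,\textsc{forw})$, so the $\exists$ over returned tracks matches the semantics. The $\hsAt$ case is dual, using backward unravelling from $\fst(\tilde{\rho})$. For $\hsBt$ and $\hsEt$, a track $\rho''$ with $\rho''\,B_\mathbb{I}\,\tilde{\rho}$ (i.e.\ $\tilde{\rho}\in\Pref(\rho'')$) is obtained by appending a track to $\tilde{\rho}$; the algorithm handles the length-one extension $\tilde{\rho}\cdot v$ separately and then all longer ones via $\texttt{Unrav}(\mathpzc{K},v,k,\textsc{forw})$ applied to the continuation, checking $\texttt{Check}(\dots,\tilde{\rho}\cdot\rho)$. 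Here one must invoke Proposition~\ref{extBk} (right extension): if $\rho\sim_k\rho'$ then $\tilde{\rho}\cdot\rho\sim_k\tilde{\rho}\cdot\rho'$, so replacing the continuation by a bounded-length representative preserves the $B_k$-descriptor of the whole track, hence by Theorem~\ref{satPresB} preserves satisfaction of $\varphi$. The $\hsEt$ case is symmetric, using left extension and backward unravelling.

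I expect the main obstacle to be the $\hsBt$ and $\hsEt$ cases, specifically the interaction between the unravelling bound and the \emph{prefix} of a track in backward exploration. In forward mode the indistinguishability check is done in the same direction as the exploration, but when we form $\tilde{\rho}\cdot\rho$ and later a subformula calls $\texttt{Check}$ with a $\hsB$ or $\hsAt$ inside $\varphi$, the descriptor sequence of $\tilde{\rho}\cdot\rho$ must be recomputed relative to the new first state $\fst(\tilde{\rho})$, not relative to $v$; one must check that Theorem~\ref{corrunr} is being applied to the correct suffix and that the representative returned by $\texttt{Unrav}$, once prepended with $\tilde{\rho}$, still yields a well-defined track (this uses $(\lst(\tilde{\rho}),v)\in\delta$ and $\fst(\rho)=v$) whose $B_k$-descriptor agrees with that of the original via Proposition~\ref{extBk}. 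The bookkeeping of which $k$ (or $k-1$) is threaded through nested modalities — constant except across $\hsB$ — also needs care, but it is routine once one observes that $\texttt{Check}$ only requires its parameter to be $\geq\nestb$ of the current formula, so monotonicity of $\sim_k$ implied by Proposition~\ref{propdown} and the downward-propagation of descriptor equivalence close any gap.
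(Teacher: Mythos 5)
Your plan is correct and follows essentially the same route as the paper's proof: structural induction on $\psi$, with the trivial cases handled by direct comparison with the semantics, the $\hsB$ case by enumeration of prefixes with parameter $k-1$, and the $\hsA$, $\hsAt$, $\hsBt$, $\hsEt$ cases resting on Theorem~\ref{corrunr}, Theorem~\ref{satPresB}, and Proposition~\ref{extBk} exactly as you describe. Your explicit remarks on the ``budget'' $k\geq\nestb(\cdot)$ in recursive calls and on left-extension for $\hsBt$/$\hsEt$ are points the paper treats more tersely, but the argument is the same.
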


\begin{theorem}\label{thModcheck}
Let $\psi$ be an $\AAbarBBbarEbar$ formula and $\mathpzc{K}$ be a finite Kripke structure. It holds that \texttt{ModCheck}$(\mathpzc{K},\psi)=1$ if and only if $\mathpzc{K}\models \psi$.
\end{theorem}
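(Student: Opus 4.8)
The plan is to reduce the statement to three results already established: the correctness of the inner procedure \texttt{Check} (Lemma~\ref{lemmamdc}), the soundness and completeness of the unravelling Algorithm~\ref{unr} in forward mode (Theorem~\ref{corrunr}), and the fact that $k$-descriptor equivalence preserves satisfaction of $\AAbarBBbarEbar$ formulas of $B$-nesting depth $k$ (Theorem~\ref{satPresB}). I would prove the two implications separately, relying on the observation that every track produced by the call $\texttt{Unrav}(\mathpzc{K},w_0,k,\textsc{forw})$ inside \texttt{ModCheck} is by construction an initial track (its first state is $w_0$) of length at least $2$, and, conversely, by Theorem~\ref{corrunr}, that for each initial track $\rho$ the enumeration produced by that call contains some track $\rho'$ with $\fst(\rho')=w_0$, $\rho\sim_k\rho'$ and $|\rho'|\le\tau(|W|,k)$; in particular the enumeration is finite, so \texttt{ModCheck} terminates.

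For the ``if'' direction, assume $\mathpzc{K}\models\psi$, i.e., $\mathpzc{K},\rho\models\psi$ for every initial track $\rho$, and set $k=\nestb(\psi)$. Each track $\tilde\rho$ extracted in the while-loop of Algorithm~\ref{ModCheck2} is an initial track, hence $\mathpzc{K},\tilde\rho\models\psi$; by Lemma~\ref{lemmamdc} this gives $\texttt{Check}(\mathpzc{K},k,\psi,\tilde\rho)=1$. Therefore the loop never returns $0$, and \texttt{ModCheck}$(\mathpzc{K},\psi)=1$.

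For the ``only if'' direction I would argue by contraposition: assuming $\mathpzc{K}\not\models\psi$, I exhibit an iteration of the loop returning $0$. Pick an initial track $\rho$ with $\mathpzc{K},\rho\not\models\psi$. By Theorem~\ref{corrunr}, $\texttt{Unrav}(\mathpzc{K},w_0,k,\textsc{forw})$ returns a track $\rho'$ with $\fst(\rho')=w_0$ and $\rho\sim_k\rho'$; since $\nestb(\psi)=k$, Theorem~\ref{satPresB} yields $\mathpzc{K},\rho'\not\models\psi$, and then Lemma~\ref{lemmamdc} gives $\texttt{Check}(\mathpzc{K},k,\psi,\rho')=0$. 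Because $\rho'$ is one of the tracks enumerated by the while-loop, \texttt{ModCheck}$(\mathpzc{K},\psi)$ returns $0$; contrapositively, \texttt{ModCheck}$(\mathpzc{K},\psi)=1$ implies $\mathpzc{K}\models\psi$.

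The real content of the theorem is already packed into Lemma~\ref{lemmamdc} and Theorem~\ref{corrunr}, so once those are available the argument above is essentially bookkeeping. The point that deserves care is matching the quantification over \emph{initial} tracks in the definition of $\mathpzc{K}\models\psi$ with the set of tracks actually enumerated by \texttt{ModCheck}: one must check that \texttt{Unrav} in forward mode started at $w_0$ explores exactly the initial tracks (up to the depth bound $\tau(|W|,k)$) and that, by Theorem~\ref{corrunr}, it realizes a representative of every $B_k$-descriptor of an initial track, so that no counterexample can escape detection; the strict-semantics constraint $|\rho|\ge 2$ must also be kept in mind when reading off the base of the enumeration. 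I expect this ``coverage'' argument — rather than any new technical estimate — to be the only genuinely delicate step.
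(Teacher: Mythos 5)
Your proposal is correct and follows essentially the same route as the paper's own proof: both directions rest on exactly the same three ingredients (Lemma~\ref{lemmamdc}, Theorem~\ref{corrunr}, and Theorem~\ref{satPresB}) used in the same roles, the only cosmetic difference being that you phrase the second implication contrapositively while the paper argues it directly. The ``coverage'' point you flag --- that \texttt{Unrav} started at $w_0$ in forward mode enumerates, for every initial track, a $k$-descriptor-equivalent representative --- is precisely what the paper extracts from Theorem~\ref{corrunr}, so no gap remains.
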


The model checking algorithm \texttt{ModCheck} requires \emph{exponential working space}, as it uses an instance of the unravelling algorithm and some additional space for a track $\tilde{\rho}$. Analogously, every recursive call to \texttt{Check} (possibly) needs an instance of the unravelling algorithm and space for a track. There are at most $|\psi|$ jointly active calls to \texttt{Check} (plus one to \texttt{ModCheck}), thus the maximum space needed by the considered algorithms is $\left(|\psi|+1\right)\cdot O(|W|+\nestb(\psi))\cdot \tau(|W|,\nestb(\psi))$ bits overall, where $\tau(|W|,\nestb(\psi))$ is the maximum length of track representatives, and $O(|W|+\nestb(\psi))$ bits are needed to represent a state of $\mathpzc{K}$, a descriptor element, and a counter for $k$-indistinguishability. 

In conclusion,
we have proved that the model checking problem for formulas of the HS fragment $\AAbarBBbarEbar$ over finite Kripke structures is in EXPSPACE.
As a particular case, formulas $\psi$ of the fragment $\AAbarBbarEbar$ can be checked in \emph{polynomial working space} by \texttt{ModCheck}, as its formulas do not feature $\hsB$ modality (hence $\nestb(\psi)=0$). Thus, the model checking problem for $\AAbarBbarEbar$ is in PSPACE. In the next section, we prove that 
%such a problem 
it is actually PSPACE-complete. As a direct consequence, $\AAbarBBbarEbar$ turns out to be PSPACE-hard. 

The next theorem proves that the model checking problem for $\AAbarBBbarEbar$ is NEXP-hard if a \emph{succinct} encoding of formulas is adopted (the proof is given in \ref{sec:succAAbarBBbarEbarHard}).
%Finally, in \ref{sec:succAAbarBBbarEbarHard}, we prove the following theorem about model 
%checking for formulas of $\AAbarBBbarEbar$ which are \emph{succinctly} encoded.
\begin{theorem*}{threduction}
The model checking problem for succinctly encoded formulas of $\AAbarBBbarEbar$ over finite Kripke structures is NEXP-hard (under polynomial-time reductions).
\end{theorem*}

\section{The fragment $\AAbarBbarEbar$}\label{subsec:AAbarBbarEbar}
%\section{PSPACE-hardness of the fragment $\AAbarBbarEbar$}\label{subsec:AAbarBbarEbar}
In this section, we prove that the model checking algorithm described in the previous section, applied to $\AAbarBbarEbar$ formulas, is optimal by showing that model checking for $\ABbar$ is a PSPACE-hard problem (Theorem~\ref{th:ABbarHard}). PSPACE-completeness of $\AAbarBbarEbar$ (and $\ABbar$) immediately follows. As a by-product, model checking for $\AAbarBBbarEbar$ is PSPACE-hard as well.

Before proving Theorem~\ref{th:ABbarHard}, we give an example showing that 
the three HS fragments $\AAbarBbarEbar$, $\HSforall$, and 
$\AAbar$, on which we focus in this (and the next) section,
are expressive enough to capture meaningful properties of state-transition systems.
%, such as, for instance, mutual exclusion, state reachability, and non-starvation.
%Nevertheless, as we will prove, their model checking problem is (computationally) much 
%simpler than that of full HS and of $\AAbarBBbarEbar$.

\begin{example}\label{exlong}
\begin{figure}[t]
\centering
%\resizebox{0.7\textwidth}{0.27\textwidth}{
\begin{tikzpicture}[->,>=stealth',shorten >=1pt,auto,node distance=2.8cm,semithick,every node/.style={circle,draw,inner sep=1pt,minimum width=1.5cm},every loop/.style={max distance=8mm}]  
    \useasboundingbox (-4.2,0.7) rectangle (11.3,-6.5);

    \node[style={double}] (0) {$\stackrel{w_0}{x_0}$};
    \node (1) [below right of=0] {$\stackrel{w_1}{r_0,x_0}$};
    \node (2) [below left of=0] {$\stackrel{w_2}{r_1}$};%below right=0.5cm and 3.7cm
    \node (3) [below right of=2] {$\stackrel{w_3}{r_0,r_1}$};
    
    \node (4) [right of=1] {$\stackrel{w_4}{r_0,r_1,e_1}$};
    \node (5) [below of=4] {$\stackrel{w_5}{e_1}$};
    
    \node (6) [right of=4] {$\stackrel{w_6}{r_0,r_1,e_0,x_0}$};
    \node (7) [below of=6] {$\stackrel{w_7}{e_0,x_0}$};
    
    \node (8) [right of=6] {$\stackrel{w_8}{r_0,r_1,e_0,e_1}$};
    \node (9) [below of=8] {$\stackrel{w_9}{e_0,e_1}$};

  \path
    (0) edge [loop right] (0)
        edge (1)
        edge (2)
    (1) edge [loop left] (1)
        edge (3)
        edge [out=30,in=150] (6)
    (2) edge [loop right] (2)
        edge (3)
        edge [out=330,in=210] (4)
    (3) edge [out=25,in=220] (4)
        edge [out=20,in=210] (6)
        edge [out=15,in=210] (8)
    (4) edge (5)
    (6) edge (7)
    (8) edge (9);
    
    \draw (5) to[out=198, in=195, looseness=2.12] (0);
    \draw (7) to[out=205, in=190, looseness=2.1] (0);
    \draw (9) to[out=210, in=185, looseness=2.1] (0);
    
    \draw[color=gray,dashed] (-3,0.9) rectangle (3,-4.8); 
    \draw[color=gray,dashed] (3.5,-0.8) rectangle (11.8,-5.7);
    
    \draw[dashed,color=gray,-] (6.1,-0.9) -- (6.1,-5.7);
    \draw[dashed,color=gray,-] (9,-0.9) -- (9,-5.7);
    
    \node [draw=none,above = 0.5cm of 4,minimum width=0] (pp) {$\mathcal{P}_1$};
    \node [draw=none,minimum width=0] at (pp -| 6) {$\mathcal{P}_0$};
    \node [draw=none,minimum width=0] at (pp -| 8) {$\mathcal{P}_0,\; \mathcal{P}_1$};
    \node [draw=none,minimum width=0] at (-3.3,0.4) {$\mathcal{S}$};
   
\end{tikzpicture}%}
%\vspace{-0.2cm}
%\caption{Kripke structure $\mathpzc{K}=(\mathpzc{AP},W, \delta,\mu,w_0)$ of Example \ref{exlong}.}\label{KEx}
\caption{A simple state-transition system.}\label{KEx}
\end{figure}


Let $\mathpzc{K}=(\mathpzc{AP},W, \delta,\mu,w_0)$, with $\mathpzc{AP}=\{r_0, r_1,e_0,e_1,x_0\}$, be 
the Kripke structure of Figure \ref{KEx}, that models the interactions between a scheduler $\mathpzc{S}$ and two processes, $\mathcal{P}_0$ and $\mathcal{P}_1$, which possibly ask for a shared resource. 
%In the states in the left rectangle the scheduler controls the system, whereas in the states on the right, the control is left to $\mathcal{P}_0$ and/or $\mathcal{P}_1$. 
At the initial state $w_0$, $\mathpzc{S}$ has not received any request from the processes yet, while
in $w_1$ (resp., $w_2$) only $\mathcal{P}_0$ (resp., $\mathcal{P}_1$) has sent a request, and thus $r_0$ (resp., $r_1$) holds. As long as at most one process has issued a request, $\mathpzc{S}$ is not forced to allocate the resource ($w_1$ and $w_2$ have self loops). At state $w_3$, both $\mathcal{P}_0$ and $\mathcal{P}_1$ are waiting for the shared resource (both $r_0$ and $r_1$ hold).  State $w_3$ has transitions only towards $w_4$, $w_6$, and $w_8$. At state $w_4$ (resp., $w_6$) $\mathcal{P}_1$ (resp., $\mathcal{P}_0$) can access the resource and $e_1$ (resp., $e_0$) holds in the interval $w_4w_5$ (resp., $w_6w_7$). In addition, a faulty transition may be taken from $w_3$ leading to states $w_8$ and $w_9$ where both $\mathcal{P}_0$ and $\mathcal{P}_1$ use the resource (both $e_0$ and $e_1$ hold in the interval $w_8w_9$). Finally, from 
$w_5$, $w_7$, and $w_9$ the system can only move to $w_0$, where $\mathpzc{S}$ waits for new requests from $\mathcal{P}_0$ and $\mathcal{P}_1$.

Let $\mathpzc{P}$ be the set $\{r_0,r_1,e_0,e_1\}$ and let $x_0$ be an auxiliary proposition letter labelling the states $w_0$, $w_1$, $w_6$, and $w_7$, where $\mathpzc{S}$ and $\mathcal{P}_0$, but not $\mathcal{P}_1$, are active. 

We now give some examples of formulas in the fragments $\AAbarBbarEbar$, $\HSforall$, and $\AAbar$ that encode requirements for $\mathpzc{K}$.
As in Example~\ref{ExampleSched}, we force the validity of the considered property over all legal computation sub-intervals by using the modality $[E]$, or alternatively the modality $[A]$ (any computation sub-interval occurs after at least one initial track).

It can be checked that $\mathpzc{K}\not\models [E]\neg(e_0\wedge e_1)$ (the formula is in $\HSforall$), i.e.,
mutual exclusion is not guaranteed, as the faulty transition leading to $w_8$ may be taken at $w_3$, and  then both $\mathcal{P}_0$ and $\mathcal{P}_1$ access the resource in the interval $w_8w_9$ where $e_0\wedge e_1$ holds.

On the contrary, it holds that $\mathpzc{K}\models [A]\big(r_0\rightarrow \hsA e_0 \vee \hsA\hsA e_0\big)$ (the formula is in $\AAbar$ and $\AAbarBbarEbar$). Such a formula expresses the following reachability property: if $r_0$ holds over some interval, then it is always possible to reach an interval where $e_0$ holds. Obviously, this does not mean that all possible computations will necessarily lead to such an interval, but that the system is \emph{never} trapped in a state from which it is no more possible to satisfy requests from $\mathcal{P}_0$.
    
It also holds that $\mathpzc{K}\models [A]\big(r_0\wedge r_1\rightarrow [A](e_0\vee e_1\vee \bigwedge_{p\in\mathpzc{P}}\neg p)\big)$ (in $\AAbar$ and $\AAbarBbarEbar$). Indeed, if both processes send a request (state $w_3$), then $\mathpzc{S}$ immediately allocates the resource. In detail, if $r_0\wedge r_1$ holds over some tracks (the only possible intervals are $w_3w_4$, $w_3w_6$, and $w_3w_8$), then in any possible subsequent interval of length 2 $e_0\vee e_1$ holds, that is, $\mathcal{P}_0$ or $\mathcal{P}_1$ are executed, or, considering tracks longer than 2, $\bigwedge_{p\in\mathpzc{P}}\neg p$ holds.
On the contrary, if only one process asks for the resource, then $\mathpzc{S}$ can arbitrarily delay the allocation, and therefore $\mathpzc{K}\not\models [A]\big(r_0\rightarrow [A](e_0\vee \bigwedge_{p\in\mathpzc{P}}\neg p)\big)$.

    Finally, it holds that $\mathpzc{K}\models x_0\rightarrow\hsBt x_0$ (in $\AAbarBbarEbar$), that is, any initial track satisfying $x_0$ (any such track involves states $w_0$, $w_1$, $w_6$, and $w_7$ only) can be extended to the right in such a way that the resulting track still satisfies $x_0$. This amounts to say that there exists a computation in which $\mathcal{P}_1$ starves. Note that $\mathpzc{S}$ and $\mathcal{P}_0$ can continuously interact without waiting for $\mathcal{P}_1$. This is the case, for instance, when $\mathcal{P}_1$ is not asking for the shared resource at all. 
    %Reasoning in a similar way, $\mathpzc{K}\models x_0\rightarrow\hsBt \neg x_0$.

\end{example}

Now, in order to prove Theorem~\ref{th:ABbarHard}, we provide a reduction from the QBF problem (i.e., the problem of determining the truth of a \emph{fully-quantified} Boolean formula in \emph{prenex normal form})---which is known to be PSPACE-complete (see, for example, \cite{Sip12})---to the model checking problem for $\ABbar$ formulas over finite Kripke structures. 

We consider a quantified Boolean formula $\psi\! = \! Q_n x_n Q_{n-1} x_{n-1} \cdots Q_1 x_1 \phi(x_n,x_{n-1},\!\cdots\! ,x_1)$ where $Q_i\in\{\exists, \forall\}$ for all $i=1,\cdots ,n$, and $\phi(x_n,x_{n-1},\cdots ,x_1)$ is a quantifier-free Boolean formula.
Let $Var = \{x_n,\ldots ,x_1\}$ be the set of variables of $\psi$. We define the Kripke structure $\mathpzc{K}_{QBF}^{Var}$, whose initial tracks represent all the possible assignments to the variables of $Var$. For each $x \in Var$, $\mathpzc{K}_{QBF}^{Var}$ features four states,  
$w_x^{\top 1}$, $w_x^{\top 2}$, $w_x^{\bot 1}$, and $w_x^{\bot 2}$: the first two
represent a $\top$ truth assignment to $x$ and the last two a $\bot$ one.
$\mathpzc{K}_{QBF}^{Var}=(\mathpzc{AP},W, \delta,\mu,w_0)$ is formally defined as follows:
\begin{itemize}
    \item $\mathpzc{AP}= Var \cup \{start\} \cup \{ x_{i\,aux} \mid 1\leq i\leq n\}$;
    \item $W= \{w_{x_i}^\ell \mid 1\leq i\leq n, \ell \in \{\bot_1,\bot_2,\top_1,\top_2\}\} \cup \{w_0,w_1,sink\}$;
    \item if $n=0$, $\delta=\{(w_0,w_1),(w_1,sink),(sink,sink)\}$; \\
            if $n>0$,             
            $\delta = \{(w_0,w_1),(w_1,w_{x_n}^{\top_1}),(w_1,w_{x_n}^{\bot_1})\}\cup
            \{(w_{x_i}^{\top_1},w_{x_i}^{\top_2}), (w_{x_i}^{\bot_1},w_{x_i}^{\bot_2}) \mid 1 \leq i \leq n\} \cup 
            \{(w_{x_i}^\ell,w_{x_{i-1}}^m) \mid \ell \in \{\bot_2,\top_2\},  m \in \{\bot_1,\top_1\}, 2 \leq i \leq n\} \cup 
            \{(w_{x_1}^{\top_2},sink),(w_{x_1}^{\bot_2},sink)\}\cup
            \{(sink,sink) \}$. 
\item $\mu(w_0) = \mu(w_1) = Var \cup \{start\}$;\\
            $\mu(w_{x_i}^\ell) = Var \cup \{x_{i\, aux}\}$, for $1\leq i\leq n$ and $\ell \in \{\top_1,\top_2\}$;\\
            $\mu(w_{x_i}^\ell) = (Var \setminus \{x_i\}) \cup \{x_{i\, aux}\}$, for $1\leq i\leq n$ and $\ell \in \{\bot_1,\bot_2\}$;\\
            $\mu(sink) = Var$.
\end{itemize}


\begin{figure}
\centering
\resizebox{\textwidth}{!}{
\begin{tikzpicture}[->,>=stealth',shorten >=1pt,auto,node distance=2.7cm,semithick,every node/.style={circle,draw,inner sep=2pt},every loop/.style={max distance=8mm}]  
    \node[style={double}] (a) {$\stackrel{w_0}{x,y,z,start}$};
    \node (0) [right of=a] {$\stackrel{w_1}{x,y,z,start}$};
    \node (1a) [above right = -0.3cm and 0.6cm of 0] {$\stackrel{w_x^{\top 1}}{x,y,z,x_{aux}}$};%below right=0.5cm and 3.7cm
    \node (1b) [below right = -0.3cm and 0.7cm of 0] {$\stackrel{w_x^{\bot 1}}{y,z,x_{aux}}$};
    \node (1a1) [right of=1a] {$\stackrel{w_x^{\top 2}}{x,y,z,x_{aux}}$};
    \node (1b1) [right of=1b] {$\stackrel{w_x^{\bot 2}}{y,z,x_{aux}}$};
    
    \node (2a) [right of=1a1] {$\stackrel{w_y^{\top 1}}{x,y,z,y_{aux}}$};
    \node (2b) [right of=1b1] {$\stackrel{w_y^{\bot 1}}{x,z,y_{aux}}$};
    \node (2a1) [right of=2a] {$\stackrel{w_y^{\top 2}}{x,y,z,y_{aux}}$};
    \node (2b1) [right of=2b] {$\stackrel{w_y^{\bot 2}}{x,z,y_{aux}}$};
    
    \node (3a) [right of=2a1] {$\stackrel{w_z^{\top 1}}{x,y,z,z_{aux}}$};
    \node (3b) [right of=2b1] {$\stackrel{w_z^{\bot 1}}{x,y,z_{aux}}$};
    \node (3a1) [right of=3a] {$\stackrel{w_z^{\top 2}}{x,y,z,z_{aux}}$};
    \node (3b1) [right of=3b] {$\stackrel{w_z^{\bot 2}}{x,y,z_{aux}}$};
    
    \node (pit) [below right of=3a1] {$\stackrel{sink}{x,y,z}$};
  
  \path
    (a) edge (0)
    (0) edge (1a)
        edge (1b)
        
    (1a) edge (1a1)
    (1b) edge (1b1)
    (1a1) edge (2a)
        edge (2b)
    (1b1) edge (2a)
        edge (2b)
        
    (2a) edge (2a1)
    (2b) edge (2b1)
    (2a1) edge (3a)
        edge (3b)
    (2b1) edge (3a)
        edge (3b)
        
     (3a) edge (3a1)
    (3b) edge (3b1)
    (3a1) edge (pit)
    (3b1) edge (pit)
    (pit) edge [loop above] (pit)
    ;
\end{tikzpicture}}
%\vspace{-0.7cm}
\caption{Kripke structure $\mathpzc{K}_{QBF}^{x,y,z}$ associated with a quantified Boolean formula with variables $x$, $y$, $z$.}\label{Kqbf}
\end{figure}
An example of such a Kripke structure, for $Var=\{x,y,z\}$, is given in Figure \ref{Kqbf}.

From $\psi$, we obtain the $\ABbar$ formula $\xi=start\rightarrow \xi_n$, where 
\begin{equation*}
\xi_i=
\begin{cases}
\phi(x_n,x_{n-1},\cdots ,x_1) & i=0\\
\hsBt\big((\hsA x_{i\, aux}) \wedge \xi_{i-1}\big) & i>0 \wedge Q_i=\exists\\
[\overline{B}]\big((\hsA x_{i\, aux}) \rightarrow \xi_{i-1}\big) & i>0 \wedge Q_i=\forall
\end{cases}
\end{equation*}
%Clearly, 
Both $\mathpzc{K}_{QBF}^{Var}$ and $\xi$ can be built by using logarithmic working space. 
We will show (proof of Theorem~\ref{th:ABbarHard}) that $\psi$
%=Q_n x_n Q_{n-1} x_{n-1} \cdots Q_1 x_1 \phi(x_n,x_{n-1},\cdots x_1)$ 
is true if and only if $\mathpzc{K}_{QBF}^{Var}\models\xi$.
%
%As a consequence, the model checking problem for HS$[A,\overline{A},\overline{B},\overline{E}]$ formulas over Kripke structures is PSPACE-complete.
%
%We reduce QBF to (the QBF problem, or TQBF, is to determine whether a \emph{fully-quantified} boolean formula in prenex normal form, e.g. $\forall x\exists y \exists z((x\vee y)\wedge(\neg y\vee z))$, is true or false and it .
%
%For a quantified boolean formula $\psi=\forall z\exists y\forall z \phi(x,y,z)$ with 3 variables $x,y,z$, let us consider the Kripke structure  depicted in Figure \ref{Kqbf}, where 
%\begin{multline*}
%W=\{w_0,w_1,w_x^{\top 1},w_x^{\top 2},w_x^{\bot 1},w_x^{\bot 2},  
%w_y^{\top 1},w_y^{\top 2},\\
%w_y^{\bot 1},w_y^{\bot 2}, 
%w_z^{\top 1},w_z^{\top 2},w_z^{\bot 1},w_z^{\bot 2},
%sink \}.
%\end{multline*}
%
%
%
%It holds that $\psi$ is a true quantified boolean formula if and only if $\mathpzc{K}_{QBF}^{x,y,z}\models\xi$.
%
%
%In order to prove Theorem \ref{th:ABbarHard}, 

As a preliminary step, we introduce some technical definitions.
Given a Kripke structure $\mathpzc{K}=(\mathpzc{AP},W, \delta,\mu,w_0)$ and an $\ABbar$ formula $\psi$, we denote by $p\ell(\psi)$ the set of proposition letters occurring in $\psi$ and by $\mathpzc{K}_{\,|p\ell(\psi)}$ the 
%Kripke 
structure obtained from $\mathpzc{K}$ by restricting the labelling of each state to $p\ell(\psi)$, namely, the 
Kripke 
structure $(\overline{\mathpzc{AP}},W, \delta,\overline{\mu},w_0)$, where $\overline{\mathpzc{AP}}=\mathpzc{AP}\cap p\ell(\psi)$ and $\overline{\mu}(w)=\mu(w)\cap p\ell(\psi)$, for all $w\in W$.
Moreover, for $v\in W$, we denote by $reach(\mathpzc{K},v)$ the subgraph of $\mathpzc{K}$, with $v$ as its 
initial state, consisting of all and only the states which are reachable from $v$, namely, the 
Kripke 
structure $(\mathpzc{AP},W',\delta',\mu',v)$, where $W'=\{w\in W \mid \text{ there exists } \rho\in \Trk_\mathpzc{K} \text{ with } \fst(\rho)=v \text{ and } \lst(\rho)=w\}$, $\delta'=\delta \cap (W'\times W')$, and $\mu'(w)=\mu(w)$, for all $w\in W'$. %$reach(\mathpzc{K},v)$ is thus  Notice that the initial state of $reach(\mathpzc{K},v)$ is $v$. 

As usual, two Kripke structures $\mathpzc{K}=(\mathpzc{AP},W, \delta,\mu,w_0)$ and $\mathpzc{K}'=(\mathpzc{AP}',W', \delta',\mu',w_0')$ are said to be \emph{isomorphic} ($\mathpzc{K}\sim \mathpzc{K}'$ for short) if and only if there is a \emph{bijection} $f:W\mapsto W'$ such that $(i)$~$f(w_0)=w_0'$; $(ii)$~for all $u,v\in W$, $(u,v)\in \delta$ if and only if $(f(u),f(v))\in\delta'$; $(iii)$~for all $v\in W$, $\mu(v)=\mu'(f(v))$.
%\begin{itemize}
%    \item $f(w_0)=w_0'$;
%    \item for all $u,v\in W$, $(u,v)\in \delta \iff (f(u),f(v))\in\delta'$;
%    \item for all $v\in W$, $\mu(v)=\mu'(f(v))$.
%\end{itemize}

Finally, if  $\mathpzc{A}_\mathpzc{K}=(\mathpzc{AP},\mathbb{I},A_\mathbb{I},B_\mathbb{I},E_\mathbb{I},\sigma)$ is the abstract interval model induced by a Kripke structure  $\mathpzc{K}$ and $\rho\in\Trk_{\mathpzc{K}}$,
 %is a track, 
 we denote $\sigma(\rho)$ by $\mathpzc{L}(\mathpzc{K},\rho)$.

Let $\mathpzc{K}$ and $\mathpzc{K}'$ be two Kripke structures. The following lemma states that, for any $\ABbar$ formula $\psi$, if the same set of proposition letters, restricted to $p\ell(\psi)$, holds over two tracks $\rho \in \Trk_{\mathpzc{K}}$ and $\rho' \in \Trk_{\mathpzc{K}'}$, and the subgraphs consisting of the states reachable from, respectively, $\lst(\rho)$ and $\lst(\rho')$ are isomorphic, then $\rho$ and $\rho'$ are equivalent with respect to $\psi$.

\begin{lemma}\label{lemmaABbar}
Given an $\ABbar$ formula $\psi$, two Kripke structures $\mathpzc{K}=(\mathpzc{AP},W, \delta,\mu,w_0)$ and $\mathpzc{K}'=(\mathpzc{AP}',W', \delta',\mu',w_0')$, and two tracks $\rho\in\Trk_\mathpzc{K}$ and $\rho'\in \Trk_{\mathpzc{K}'}$ such that \[\mathpzc{L}(\mathpzc{K}_{\,|p\ell(\psi)},\rho)=\mathpzc{L}(\mathpzc{K}'_{\,|p\ell(\psi)},\rho')\quad \text{and}\quad reach(\mathpzc{K}_{\,|p\ell(\psi)},\lst(\rho))\sim reach(\mathpzc{K}'_{\,|p\ell(\psi)},\lst(\rho')),\] 
it holds that 
$\mathpzc{K},\rho\models\psi \iff \mathpzc{K}',\rho'\models\psi$.
\end{lemma}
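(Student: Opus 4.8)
The plan is to prove Lemma~\ref{lemmaABbar} by structural induction on the $\ABbar$ formula $\psi$, where the inductive hypothesis is stated uniformly for all pairs of tracks (in possibly different Kripke structures) satisfying the two stated conditions --- equal $p\ell(\psi)$-labelling and isomorphic reachable subgraphs from the last states. The key observation that makes the induction go through is that the two conditions are ``stable'' under the operations that the $\ABbar$ modalities perform on tracks: right-extension (for $\hsA$ and $\hsBt$) moves us to tracks whose last state is reachable from $\lst(\rho)$, and taking prefixes (for $[\overline{B}]$, after rewriting $\hsBt$ appropriately --- wait, $\hsBt$ is the \emph{inverse} of $B$, so it is a right-extension; the relevant ``shrinking'' modality $\hsB$ is absent from $\ABbar$). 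So the modalities in play are $\hsA$, $\hsAt$ is absent, $\hsBt$ present, $\hsB$ absent. Good: every modality of $\ABbar$ either keeps $\lst$ fixed and changes the prefix ($\hsBt$ appends a state, changing $\lst$; $\hsA$ jumps to a track starting at $\lst(\rho)$).

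Let me restructure. First I would fix the bijection $f : W_{|p\ell(\psi)} \to W'_{|p\ell(\psi)}$ witnessing $reach(\mathpzc{K}_{|p\ell(\psi)},\lst(\rho)) \sim reach(\mathpzc{K}'_{|p\ell(\psi)},\lst(\rho'))$. The base cases ($\psi = p$, $\psi = \top$, $\psi = \bot$) are immediate: for $p \in p\ell(\psi)$, $\mathpzc{K},\rho \models p \iff p \in \mathpzc{L}(\mathpzc{K}_{|p\ell(\psi)},\rho) \iff p \in \mathpzc{L}(\mathpzc{K}'_{|p\ell(\psi)},\rho') \iff \mathpzc{K}',\rho' \models p$. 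The Boolean cases $\neg\varphi$ and $\varphi_1 \wedge \varphi_2$ follow trivially since $p\ell(\varphi), p\ell(\varphi_i) \subseteq p\ell(\psi)$, so the two hypotheses still hold (restricting labels further and reachable subgraphs are still isomorphic via the same $f$), and we apply the IH.

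The modal cases are the heart. For $\psi = \hsA\varphi$: a witness is a track $\tau \in \Trk_\mathpzc{K}$ with $\lst(\rho) = \fst(\tau)$ and $\mathpzc{K},\tau \models \varphi$. Since $\fst(\tau) = \lst(\rho)$, every state of $\tau$ is reachable from $\lst(\rho)$, so $\tau$ lives entirely in $reach(\mathpzc{K}_{|p\ell(\psi)},\lst(\rho))$; push it through $f$ to get $\tau' = f(\tau) \in \Trk_{\mathpzc{K}'}$ with $\fst(\tau') = \lst(\rho')$. By construction of $f$, $\mathpzc{L}(\mathpzc{K}_{|p\ell(\psi)},\tau) = \mathpzc{L}(\mathpzc{K}'_{|p\ell(\psi)},\tau')$, and $reach(\mathpzc{K}_{|p\ell(\psi)},\lst(\tau)) \sim reach(\mathpzc{K}'_{|p\ell(\psi)},\lst(\tau'))$ --- the latter because $f$ restricts to an isomorphism of the respective reachable subgraphs (reachable-from-$\lst(\tau)$ is a subgraph of reachable-from-$\lst(\rho)$, and $f$ being a graph isomorphism restricts correctly; one must check $f(\lst(\tau)) = \lst(\tau')$, which holds as $\tau' = f(\tau)$ componentwise). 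Now apply the IH to $\varphi$, $\tau$, $\tau'$ to get $\mathpzc{K}',\tau' \models \varphi$, hence $\mathpzc{K}',\rho' \models \hsA\varphi$; symmetry gives the converse. For $\psi = \hsBt\varphi$: a witness is obtained by appending one state $v$ with $(\lst(\rho),v) \in \delta$ --- but actually $\hsBt$ asks for some $\tau$ with $\rho \in \Pref(\tau)$, i.e., $\tau = \rho \cdot \eta$ for a nonempty continuation $\eta$; again $\eta$ is reachable from $\lst(\rho)$, so map it via $f$ to $\eta'$, set $\tau' = \rho' \cdot \eta'$ (legal in $\mathpzc{K}'$ since $f$ preserves edges and $\fst(\eta) = $ a successor of $\lst(\rho)$, $\fst(\eta') = f$ of it $= $ a successor of $\lst(\rho')$). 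Then $\mathpzc{L}(\mathpzc{K}_{|p\ell(\psi)},\tau) = \mathpzc{L}(\mathpzc{K}_{|p\ell(\psi)},\rho) \cap \bigcap_{w \in \eta}\overline{\mu}(w) = \mathpzc{L}(\mathpzc{K}'_{|p\ell(\psi)},\rho') \cap \bigcap_{w \in \eta'}\overline{\mu}'(w) = \mathpzc{L}(\mathpzc{K}'_{|p\ell(\psi)},\tau')$ using the hypothesis on $\rho,\rho'$ and that $f$ preserves labels, and $reach$ from $\lst(\tau) = \lst(\eta)$ maps isomorphically to $reach$ from $\lst(\tau') = \lst(\eta')$ via (a restriction of) $f$. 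Apply IH to $\varphi$. The universal duals $[A]$, $[\overline{B}]$ are handled by the same argument on negations, or directly by the contrapositive of the existential cases.

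The main obstacle --- really the one place requiring care rather than routine bookkeeping --- is making the bijection argument airtight: one must verify that $f$, which a priori is only an isomorphism of the subgraph reachable from $\lst(\rho)$, genuinely maps each witness \emph{track} $\tau$ (or continuation $\eta$) to a well-defined track $\tau'$ (resp.\ $\eta'$) of the target structure \emph{and} that the derived pair $(\lst(\tau),\lst(\tau'))$ satisfies both hypotheses of the lemma for the smaller formula $\varphi$; in particular that ``reachable from $\lst(\tau)$'' is a sub-Kripke-structure of ``reachable from $\lst(\rho)$'' closed under $f$, so that $f$'s restriction is still an isomorphism with the correct initial state. Everything else --- the label computations, the handling of the empty/nonempty continuation, the symmetry between $\mathpzc{K}$ and $\mathpzc{K}'$ --- is mechanical. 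I would also note explicitly at the start that $\ABbar$ contains \emph{neither} $\hsB$ nor $\hsAt$ nor $\hsEt$, which is exactly why the ``reachable-forward-from-$\lst$'' invariant suffices and we never need information about the prefix structure to the left of $\rho$; this is the structural reason the lemma is true for $\ABbar$ but would fail for fragments with backward or sub-interval modalities.
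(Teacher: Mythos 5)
Your proposal follows essentially the same route as the paper's proof: structural induction on $\psi$, with the Boolean cases handled by shrinking $p\ell$, and the modal cases $\hsA$ and $\hsBt$ handled by pushing the witness track (or continuation) through the isomorphism $f$ and verifying that the two hypotheses are re-established for the subformula via the restriction of $f$ to the smaller reachable subgraph. The points you flag as needing care (that $f$ maps witness tracks to legal tracks, that $f$ restricts to an isomorphism with the correct initial state, and the label computation for concatenations) are exactly the ones the paper's proof checks, so nothing is missing.
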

The proof of this lemma can be found in \ref{sec:lemmaABbarProof}.

\begin{theorem}\label{th:ABbarHard}
The model checking problem for $\ABbar$ formulas over finite Kripke structures is PSPACE-hard (under LOGSPACE reductions).
\end{theorem}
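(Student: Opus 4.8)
The statement is proved via the reduction already set up in this section: $\psi \mapsto (\mathpzc{K}_{QBF}^{Var},\xi)$, where $\xi = start\rightarrow\xi_n$. Logarithmic-space computability of the reduction has already been noted ($\mathpzc{K}_{QBF}^{Var}$ has $4n+3$ states and a polynomial transition relation, and $|\xi| = O(n+|\phi|)$, so both can be emitted by scanning the quantifier prefix and $\phi$ with a constant number of log-sized counters). Since QBF is PSPACE-complete, it remains only to prove the correctness claim: $\psi$ is true if and only if $\mathpzc{K}_{QBF}^{Var}\models\xi$. The proof is an induction on the quantifier prefix that mirrors the evaluation of $\xi$.

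First I would identify which initial tracks are relevant. By homogeneity $\sigma(\rho)=\bigcap_{w\in\states(\rho)}\mu(w)$, and $start$ labels only $w_0$ and $w_1$; hence an initial track $\rho$ satisfies $start$ iff $\states(\rho)\subseteq\{w_0,w_1\}$. As the only edge out of $w_0$ goes to $w_1$ (there is no self-loop on $w_0$ when $n>0$) and both edges out of $w_1$ leave $\{w_0,w_1\}$, the only such track is $w_0 w_1$. Therefore $\mathpzc{K}_{QBF}^{Var}\models\xi$ iff $\mathpzc{K}_{QBF}^{Var},w_0w_1\models\xi_n$.

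Next I would set up the track/assignment correspondence. For a partial assignment $\tau$ to $x_n,\ldots,x_{i+1}$, let $\rho_\tau$ be the unique initial track that, for every $j>i$, runs through the two copies $w_{x_j}^{\cdot_1},w_{x_j}^{\cdot_2}$ corresponding to the value $\tau(x_j)$, and that ends at the first copy $w_{x_{i+1}}^{\cdot_1}$ corresponding to $\tau(x_{i+1})$; for $i=n$ put $\rho_\emptyset=w_0w_1$. The core claim, proved by induction on $i$ from $0$ to $n$, is: for every such $\tau$, $\mathpzc{K}_{QBF}^{Var},\rho_\tau\models\xi_i$ iff the closed formula $Q_i x_i\cdots Q_1 x_1\,\phi[\tau]$ is true, where $\phi[\tau]$ replaces each $x_j$ ($j>i$) by $\tau(x_j)$. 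For the base case $i=0$, $\rho_\tau$ visits $w_{x_j}^{\top\,\cdot}$ precisely when $\tau(x_j)$ is true and $w_{x_j}^{\bot\,\cdot}$ otherwise, so $\sigma(\rho_\tau)\cap Var$ is exactly the set of variables $\tau$ makes true, whence $\mathpzc{K}_{QBF}^{Var},\rho_\tau\models\phi$ iff $\phi[\tau]$ is true. For the inductive step, the key structural fact is that among the proper right-extensions $\rho'$ of $\rho_\tau$, the condition $\hsA x_{i\,aux}$ holds for exactly two tracks, namely $\rho_{\tau[x_i\mapsto\top]}$ and $\rho_{\tau[x_i\mapsto\bot]}$: indeed $\hsA x_{i\,aux}$ requires a track starting at $\lst(\rho')$ all of whose states carry $x_{i\,aux}$, and the only states carrying $x_{i\,aux}$ are the four copies of $w_{x_i}$, among which the only edges are $w_{x_i}^{\top_1}\!\to\! w_{x_i}^{\top_2}$ and $w_{x_i}^{\bot_1}\!\to\! w_{x_i}^{\bot_2}$; hence this forces $\lst(\rho')\in\{w_{x_i}^{\top_1},w_{x_i}^{\bot_1}\}$, and since $\mathpzc{K}_{QBF}^{Var}$ is acyclic except for the $sink$ loop, the extension of $\rho_\tau$ reaching each of these states is unique and equals $\rho_{\tau[x_i\mapsto\top]}$, resp.\ $\rho_{\tau[x_i\mapsto\bot]}$. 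Consequently $\hsBt\big((\hsA x_{i\,aux})\wedge\xi_{i-1}\big)$ holds at $\rho_\tau$ iff $\xi_{i-1}$ holds at $\rho_{\tau[x_i\mapsto m]}$ for \emph{some} $m\in\{\top,\bot\}$, while $[\overline{B}]\big((\hsA x_{i\,aux})\rightarrow\xi_{i-1}\big)$ holds at $\rho_\tau$ iff it holds for \emph{both}; by the induction hypothesis these become $\exists x_i\,Q_{i-1}x_{i-1}\cdots\phi[\tau]$ and $\forall x_i\,Q_{i-1}x_{i-1}\cdots\phi[\tau]$, matching $Q_i=\exists$ and $Q_i=\forall$ respectively. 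Instantiating the claim at $i=n$, $\tau=\emptyset$ gives $\mathpzc{K}_{QBF}^{Var},w_0w_1\models\xi_n$ iff $\psi$ is true, which with the previous paragraph yields the theorem.

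Lemma~\ref{lemmaABbar} is the tool that keeps the inductive bookkeeping honest: it ensures that the truth of a sub-formula on a track depends only on the proposition letters that hold on the track and on the subgraph of states reachable from its last state (restricted to the letters occurring in the formula), so that ``$\xi_{i-1}$ evaluated on $\rho_{\tau[x_i\mapsto m]}$'' is a well-defined object, insensitive to incidental choices in the construction of the track. The main obstacle I expect is the careful verification of the structural fact just described --- that the $x_{i\,aux}$ markers force any witnessing extension to commit exactly one more variable, that such extensions are unique, and that no spurious extension of $\rho_\tau$ (one that enters $sink$ and loops, or one that commits several further variables, or one ending at a ``second'' copy $w_{x_i}^{\cdot_2}$) satisfies $\hsA x_{i\,aux}$. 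This is exactly where the duplicated states $w_{x_i}^{\cdot_1},w_{x_i}^{\cdot_2}$, the absence of self-loops on $w_0,w_1$, and the precise labelling are used, and making the quantifier alternation line up requires following these extensions with care.
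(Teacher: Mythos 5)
Your proof is correct, but it organizes the correctness argument differently from the paper. The paper proves ``$\psi$ is true iff $\mathpzc{K}_{QBF}^{x_n,\ldots,x_1}\models\xi$'' by induction on the number of variables $n$, so the inductive hypothesis lives in a \emph{smaller Kripke structure} $\mathpzc{K}_{QBF}^{x_{n-1},\ldots,x_1}$; the whole weight of the inductive step is carried by Lemma~\ref{lemmaABbar}, which transfers the truth of $\xi_{n-1}\{x_n/\upsilon\}$ between the initial track $w_0'w_1'$ of the small structure and the track $w_0w_1w_{x_n}^{\upsilon 1}w_{x_n}^{\upsilon 2}$ of the large one (via the isomorphism of the reachable restricted subgraphs), together with some bookkeeping to pass between $\xi_{n-1}$ and its substitution instance. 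You instead induct on the index $i$ of the subformula $\xi_i$ \emph{inside the single structure} $\mathpzc{K}_{QBF}^{Var}$, maintaining the invariant that $\xi_i$ holds on the track $\rho_\tau$ encoding a partial assignment $\tau$ iff $Q_ix_i\cdots Q_1x_1\,\phi[\tau]$ is true; the step reduces to the structural fact that exactly two proper right-extensions of $\rho_\tau$ satisfy $\hsA x_{i\,aux}$, namely $\rho_{\tau[x_i\mapsto\top]}$ and $\rho_{\tau[x_i\mapsto\bot]}$, which you verify correctly from the labelling and the near-acyclicity of the structure. This makes Lemma~\ref{lemmaABbar} unnecessary (your closing remark notwithstanding: $\rho_{\tau[x_i\mapsto m]}$ is a single concrete track, so there is nothing for the lemma to normalize) and avoids the substitutions $\{x_j/\upsilon\}$ altogether, at the price of having to define and reason about the track/assignment correspondence explicitly. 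The paper's route buys a reusable transfer lemma (proved separately in the appendix); yours is more self-contained and arguably more direct. One cosmetic slip: in the definition of $\rho_\tau$ the clause ``runs through the two copies for every $j>i$'' conflicts with ``ends at the first copy of $x_{i+1}$''; you clearly mean both copies for $j>i+1$ and only the first copy for $j=i+1$.
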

\begin{proof}
We prove that the quantified Boolean formula $\psi=Q_n x_n Q_{n-1} x_{n-1} \cdots Q_1 x_1 \phi(x_n,x_{n-1},\allowbreak \cdots ,x_1)$ is true if and only if $\mathpzc{K}_{QBF}^{x_n,\cdots , x_1}\models\xi$ by induction on the number of variables $n\geq 0$ of $\psi$. 
In the following, $\phi(x_n,x_{n-1},\cdots ,x_1)\{x_i/\upsilon\}$, with $\upsilon\in\{\top ,\bot\}$, denotes the formula obtained from $\phi(x_n,x_{n-1},\cdots ,x_1)$ by replacing all 
%of the 
occurrences of $x_i$ by $\upsilon$. 
%For the purpose of the proof, it 
It is worth noticing that $\mathpzc{K}_{QBF}^{x_n,x_{n-1},\cdots , x_1}$ and $\mathpzc{K}_{QBF}^{x_{n-1},\cdots , x_1}$ are isomorphic when they are restricted to the states 
$w_{x_{n-1}}^{\top 1}$, $w_{x_{n-1}}^{\top 2}$, $w_{x_{n-1}}^{\bot 1}$, $w_{x_{n-1}}^{\bot 2}$, $\cdots$, $w_{x_1}^{\top 1}$, $w_{x_1}^{\top 2}$, $w_{x_1}^{\bot 1}$, $w_{x_1}^{\bot 2}$, $sink$ (i.e., the leftmost part of both Kripke structures is omitted), and the labelling of states is suitably restricted accordingly. Note that only the track $w_0w_1$ satisfies $start$ and, for $i=n,\cdots , 1$, the proposition letter $x_{i\, aux}$ is satisfied by the two tracks $w_{x_i}^{\top 1}w_{x_i}^{\top 2}$ and $w_{x_i}^{\bot 1}w_{x_i}^{\bot 2}$ only.

\smallskip

(Case $n=0$) $\psi$ equals $\phi$ and it has no variables. The states of $\mathpzc{K}_{QBF}^\emptyset$ are $W=\{w_0,w_1,sink\}$ and $\xi=start\rightarrow \phi$. 

    Let us assume $\phi$ to be true. All initial tracks of length greater than 2 trivially satisfy $\xi$, as $start$ does not hold on them. As for $w_0w_1$, it is true that $\mathpzc{K}_{QBF}^\emptyset,w_0w_1\models \phi$, since $\phi$ is true (its truth does not depend on the proposition letters that hold on $w_0w_1$, because it has no variables). Thus $\mathpzc{K}_{QBF}^\emptyset\models\xi$. Vice versa, if $\mathpzc{K}_{QBF}^\emptyset\models\xi$, then in particular $\mathpzc{K}_{QBF}^\emptyset,w_0w_1\models \phi$. But $\phi$ has no variables, hence it is true.
    
\smallskip
    
(Case $n\geq 1$) Let us consider the formula $\psi=Q_n x_n Q_{n-1} x_{n-1} \cdots Q_1 x_1 \phi(x_n,x_{n-1},\cdots ,x_1)$. We distinguish two cases, depending on whether $Q_n=\exists$ or $Q_n=\forall$, and for both we prove the two implications. 
    %\begin{itemize}

    \medskip    
    %\item
        \noindent $\circ$ Case $Q_n=\exists$:
        \medskip
        %\begin{itemize}
        %\item[$(\Rightarrow)$] 
        
        $(\Rightarrow)$ If the formula $\psi$ is true, then, by definition, there exists 
%\linebreak 
$\upsilon\in\{\top , \bot\}$ such that if we replace all 
%the
occurrences of $x_n$ in $\phi(x_n,x_{n-1},\cdots ,x_1)$ by $\upsilon$, we get the formula $\phi'(x_{n-1},\cdots ,x_1)=\phi(x_n,x_{n-1},\cdots ,x_1)\{x_n/\upsilon\}$ such that $\psi'=Q_{n-1} x_{n-1} \cdots Q_1 x_1 \phi'(x_{n-1},\cdots ,x_1)$ is a true quantified Boolean formula. By the inductive hypothesis $\mathpzc{K}_{QBF}^{x_{n-1},\cdots , x_1}\models\xi'$, where $\xi'=start\rightarrow \xi_{n-1}'$ is obtained from $\psi'$ and $\xi_{n-1}'=\xi_{n-1}\{x_n/\upsilon\}$. It follows that $\mathpzc{K}_{QBF}^{x_{n-1},\cdots , x_1},w_0'w_1'\models\xi'_{n-1}$, where $w_0'$ and $w_1'$ are the two ``leftmost'' states of $\mathpzc{K}_{QBF}^{x_{n-1},\cdots , x_1}$ (corresponding to $w_0$ and $w_1$ of $\mathpzc{K}_{QBF}^{x_n,\cdots , x_1}$).
        
        We now prove that $\mathpzc{K}_{QBF}^{x_n,\cdots , x_1}\models\xi$. Let us consider a generic initial track $\rho$ in $\mathpzc{K}_{QBF}^{x_n,\cdots , x_1}$. If it does not satisfy $start$, then it trivially holds that $\mathpzc{K}_{QBF}^{x_n,\cdots , x_1},\rho\models\xi$. Otherwise $\rho=w_0w_1$, and we have to show that $\mathpzc{K}_{QBF}^{x_n,\cdots , x_1},w_0w_1\models \hsBt((\hsA x_{n\, aux}) \wedge \xi_{n-1})$ ($=\xi_n$). If $\upsilon=\top$, we consider $w_0w_1w_{x_n}^{\top 1}$; otherwise, we consider $w_0w_1w_{x_n}^{\bot 1}$. In the first case (the other is symmetric), we must prove that $\mathpzc{K}_{QBF}^{x_n,\cdots , x_1},w_0w_1w_{x_n}^{\top 1}\models (\hsA x_{n\, aux}) \wedge \xi_{n-1}$. It trivially holds that $\mathpzc{K}_{QBF}^{x_n,\cdots , x_1},w_0w_1w_{x_n}^{\top 1}\models \hsA x_{n\, aux}$. Hence, we only need to prove that $\mathpzc{K}_{QBF}^{x_n,\cdots , x_1},w_0w_1w_{x_n}^{\top 1}\models \xi_{n-1}$.
        
As we have shown, by the inductive hypothesis, it holds that $\mathpzc{K}_{QBF}^{x_{n-1},\cdots , x_1},w_0'w_1'\models\xi'_{n-1}$($=\xi_{n-1}\{x_n/\top\}$). 
        Now, since
        \begin{itemize}        
            \item $p\ell(\xi_{n-1}\{x_n/\top\})=\{x_1,\cdots ,x_{n-1},x_{1\, aux},\cdots ,x_{n-1\, aux}\}$,
            \item $\mathpzc{L}({\mathpzc{K}_{QBF}^{x_{n-1},\cdots , x_1}}_{|p\ell(\xi_{n-1}\{x_n/\top\})},w_0'w_1')= \{x_{n-1},\cdots ,x_1\}$,
            \item $\mathpzc{L}({\mathpzc{K}_{QBF}^{x_n,\cdots , x_1}}_{|p\ell(\xi_{n-1}\{x_n/\top\})},w_0w_1w_{x_n}^{\top 1}w_{x_n}^{\top 2})=\{x_{n-1}, \cdots ,x_1\}$, and
            \item $reach({\mathpzc{K}_{QBF}^{x_n,\cdots , x_1}}_{|p\ell(\xi_{n-1}\{x_n/\top\})},w_{x_n}^{\top 2})\sim reach({\mathpzc{K}_{QBF}^{x_{n-1},\cdots , x_1}}_{|p\ell(\xi_{n-1}\{x_n/\top\})},w_1')$,
        \end{itemize}
        by Lemma \ref{lemmaABbar}, $\mathpzc{K}_{QBF}^{x_n,\cdots , x_1},w_0w_1w_{x_n}^{\top 1}w_{x_n}^{\top 2}\models\xi'_{n-1}$. Hence, $\mathpzc{K}_{QBF}^{x_n,\cdots , x_1},w_0w_1w_{x_n}^{\top 1}w_{x_n}^{\top 2}\models\xi_{n-1}$ as $x_n$ is in the labelling of the track $w_0w_1w_{x_n}^{\top 1}w_{x_n}^{\top 2}$ and of any $\overline{\rho}$ such that $w_0w_1w_{x_n}^{\top 1}w_{x_n}^{\top 2}\in\Pref(\overline{\rho})$.
        
        Now, if $n=1$, then $\xi_{n-1}=\phi(x_n)$ and it holds that $\mathpzc{K}_{QBF}^{x_n,\cdots, x_1},w_0w_1w_{x_n}^{\top 1}\models\xi_{n-1}$. 
        If $n>1$, either $\xi_{n-1}=\hsBt ((\hsA x_{n-1\, aux}) \wedge \xi_{n-2})$ or $\xi_{n-1}=[\overline{B}] ((\hsA x_{n-1\, aux}) \rightarrow \xi_{n-2})$.
%        \begin{itemize}
%
            In the first case, since $\mathpzc{K}_{QBF}^{x_n,\cdots , x_1},w_0w_1w_{x_n}^{\top 1}w_{x_n}^{\top 2}\models\hsBt ((\hsA x_{n-1\, aux})\wedge \xi_{n-2})$, there are only two possibilities: 
%\\
%{\small 
$\mathpzc{K}_{QBF}^{x_n,\cdots , x_1},w_0w_1w_{x_n}^{\top 1}w_{x_n}^{\top 2}w_{x_{n-1}}^{\top 1}\models (\hsA x_{n-1\, aux}) \wedge \xi_{n-2}$
%} 
or 
%{\small 
$\mathpzc{K}_{QBF}^{x_n,\cdots , x_1},w_0w_1w_{x_n}^{\top 1}w_{x_n}^{\top 2}w_{x_{n-1}}^{\bot 1}\models(\hsA x_{n-1\, aux}) \wedge \xi_{n-2}$.
%}. 
            %
            In both cases, $\mathpzc{K}_{QBF}^{x_n,\cdots , x_1},w_0w_1w_{x_n}^{\top 1}\models\hsBt((\hsA x_{n-1\, aux}) \wedge \xi_{n-2})$. 
%by definition.
            
        Otherwise, it holds that $\mathpzc{K}_{QBF}^{x_n,\cdots , x_1},w_0w_1w_{x_n}^{\top 1}w_{x_n}^{\top 2}\models[\overline{B}] ((\hsA x_{n-1\, aux}) \rightarrow \xi_{n-2})$. It follows that $\mathpzc{K}_{QBF}^{x_n,\cdots , x_1},w_0w_1w_{x_n}^{\top 1}w_{x_n}^{\top 2}w_{x_{n-1}}^{\top 1}\models \xi_{n-2}$ and $\mathpzc{K}_{QBF}^{x_n,\cdots , x_1},w_0w_1w_{x_n}^{\top 1}w_{x_n}^{\top 2}w_{x_{n-1}}^{\bot 1}\models \xi_{n-2}$. As a consequence, $\mathpzc{K}_{QBF}^{x_n,\cdots , x_1},w_0w_1w_{x_n}^{\top 1}\models[\overline{B}] ((\neg \hsA x_{n-1\, aux}) \vee \xi_{n-2})$ $(=\xi_{n-1})$ (recall that the only successor of $w_{x_n}^{\top 1}$ in $\mathpzc{K}_{QBF}^{x_n,\cdots , x_1}$ is $w_{x_n}^{\top 2}$ and, in particular, $\mathpzc{K}_{QBF}^{x_n,\cdots , x_1},w_0w_1w_{x_n}^{\top 1}w_{x_n}^{\top 2}\models \neg \hsA x_{n-1\, aux}$).
                        
        %\item[$(\Leftarrow)$] 
        \medskip
        $(\Leftarrow)$ If $\mathpzc{K}_{QBF}^{x_n,\cdots , x_1}\models\xi$, it holds that $\mathpzc{K}_{QBF}^{x_n,\cdots , x_1},w_0w_1\models \hsBt(\hsA x_{n\, aux} \wedge \xi_{n-1})$. Hence, either $\mathpzc{K}_{QBF}^{x_n,\cdots , x_1},w_0w_1w_{x_n}^{\top 1}\models (\hsA x_{n\, aux}) \wedge \xi_{n-1}$ or $\mathpzc{K}_{QBF}^{x_n,\cdots , x_1},w_0w_1w_{x_n}^{\bot 1}\models (\hsA x_{n\, aux}) \wedge \xi_{n-1}$. Let us consider the first case (the other is symmetric). It holds that $\mathpzc{K}_{QBF}^{x_n,\cdots , x_1},w_0w_1w_{x_n}^{\top 1}\models \xi_{n-1}\{x_n/\top\}$ and $\mathpzc{K}_{QBF}^{x_n,\cdots , x_1},w_0w_1w_{x_n}^{\top 1}w_{x_n}^{\top 2}\models \xi_{n-1}\{x_n/\top\}$ (as before). By Lemma \ref{lemmaABbar} we get that $\mathpzc{K}_{QBF}^{x_{n-1},\cdots , x_1},w_0'w_1'\models \xi_{n-1}\{x_n/\top\}(=\xi_{n-1}')$ and thus $\mathpzc{K}_{QBF}^{x_{n-1},\cdots , x_1}\models start \rightarrow \xi_{n-1}'$, namely, $\mathpzc{K}_{QBF}^{x_{n-1},\cdots , x_1}\models \xi'$. 
        By the inductive hypothesis, 
%the quantified Boolean formula  
        $\psi'=Q_{n-1} x_{n-1} \cdots Q_1 x_1 \phi(x_n,x_{n-1},\allowbreak \cdots ,x_1)\{x_n/\top\}$ is true. Hence, $\psi=\exists x_n Q_{n-1} x_{n-1} \cdots Q_1 x_1 \phi(x_n,x_{n-1},\cdots ,x_1)$ is true.
      %--->  
        %\end{itemize}
            
            \medskip
        %\item 
            \noindent $\circ$ Case $Q_n=\forall$: 
            \medskip
            
        %\begin{itemize}    
            %\item[$(\Rightarrow)$] 
            $(\Rightarrow)$ Assume that both $\psi'=Q_{n-1} x_{n-1} \cdots Q_1 x_1 \phi(x_n,x_{n-1},\cdots, x_1) \{x_n/\top\}$ and $\psi''=\linebreak Q_{n-1} x_{n-1} \cdots Q_1 x_1 \phi(x_n,x_{n-1}, \cdots, x_1)\{x_n/\bot\}$ are true quantified Boolean formulas. We show that $\mathpzc{K}_{QBF}^{x_n,\cdots , x_1},w_0w_1\models [\overline{B}]((\hsA x_{n\, aux}) \rightarrow \xi_{n-1})$. To this end, we prove that both $\mathpzc{K}_{QBF}^{x_n,\cdots , x_1},w_0w_1w_{x_n}^{\top 1}\models \xi_{n-1}$ and $\mathpzc{K}_{QBF}^{x_n,\cdots , x_1},w_0w_1w_{x_n}^{\bot 1}\models \xi_{n-1}$. This can be shown exactly as in the $\exists$ case.
% By reasoning as in the $\exists$ case, the thesis follows.
        
%            \item[$(\Leftarrow)$] 
            \medskip
            $(\Leftarrow)$ If $\mathpzc{K}_{QBF}^{x_n,\cdots , x_1}\models\xi$, then $\mathpzc{K}_{QBF}^{x_n,\cdots , x_1},w_0w_1\models [\overline{B}]((\hsA x_{n\, aux}) \rightarrow \xi_{n-1})$. Hence, both \linebreak $\mathpzc{K}_{QBF}^{x_n,\cdots , x_1},w_0w_1w_{x_n}^{\top 1}\models \xi_{n-1}$ and $\mathpzc{K}_{QBF}^{x_n,\cdots , x_1},w_0w_1w_{x_n}^{\bot 1}\models \xi_{n-1}$. Reasoning as in the $\exists$ case and by applying the inductive hypothesis twice, we get that the quantified Boolean formulas $Q_{n-1} x_{n-1} \cdots\allowbreak Q_1 x_1 \phi(x_n,x_{n-1},\cdots ,x_1)\{x_n/\top\}$ and $Q_{n-1} x_{n-1} \cdots Q_1 x_1 \phi(x_n,x_{n-1},\cdots ,\allowbreak x_1)\{x_n/\bot\}$ are true; thus $\forall x_n Q_{n-1} x_{n-1} \cdots Q_1 x_1 \phi(x_n,x_{n-1},\cdots ,x_1)$ is true.
        %\end{itemize}
    %\end{itemize}
%\end{itemize}
\end{proof}

\section{The fragment $\HSforall$}\label{subsec:forallAAbarBE}
%\section{The fragments $\HSforall$, $\AAbar$, and $\HSprop$}\label{subsec:forallAAbarBE}

In this section, we introduce and study the complexity of the model checking problem for the universal fragment of $\AAbarBE$, denoted by
$\HSforall$. Its formulas are defined as follows:
\begin{equation*}
\psi ::= \beta \;\vert\; \psi \wedge \psi \;\vert\; [A]\psi \;\vert\; [B]\psi \;\vert\; [E]\psi \;\vert\; [\overline{A}]\psi ,
\end{equation*}
where $\beta$ is a pure propositional formula,
\[\beta ::= p \;\vert\; \beta\vee\beta \;\vert\; \beta\wedge\beta \;\vert\; \neg\beta \;\vert\; \bot \;\vert\; \top \mbox{ with } p \in \mathpzc{AP}.\]
Formulas of $\HSforall$ can thus be constructed starting from pure propositional formulas (a fragment of HS that we denote by $\HSprop$); subsequently, formulas with universal modalities $[A]$, $[B]$, $[E]$, and $[\overline{A}]$ can be combined only by conjunctions, but not by negations or disjunctions
(which may occur in pure propositional formulas only).

We will prove that the model checking problem for $\HSforall$ formulas (as well as for $\HSprop$) over finite Kripke structures is coNP-complete.

To start with, we need to introduce the (auxiliary) fragment $\HSexi$, which can be regarded as the ``dual'' of $\HSforall$. Its formulas are defined as:
\begin{equation*}
    \psi ::= \beta \;\vert\; \psi \vee \psi \;\vert\; \langle A\rangle\psi \;\vert\; \langle B\rangle\psi \;\vert\; \langle E\rangle\psi \;\vert\; \langle \overline{A}\rangle\psi .
\end{equation*}
$\HSexi$ formulas feature $\hsA$, $\hsB$, $\hsE$, and $\hsAt$ existential modalities; negation and conjunction symbols may occur only in pure propositional formulas.
The intersection of $\HSforall$ and $\HSexi$ is precisely $\HSprop$.
The negation of any $\HSforall$ formula can be transformed into an equivalent $\HSexi$  formula (of at most double length), and vice versa, by using De Morgan's laws and the equivalences $[X]\psi\equiv \neg\langle X\rangle\neg\psi$ and $\neg\neg\psi\equiv\psi$. 

In the following, we outline a \emph{non-deterministic} algorithm to decide the model checking problem for a $\HSforall$ formula $\psi$ (Algorithm~\ref{provCounter}). As usual, the algorithm searches for a counterexample to $\psi$, that is, an initial track satisfying $\neg\psi$. Since, as we already pointed out, $\neg\psi$ is equivalent to a suitable formula $\psi'$ of the dual fragment $\HSexi$, the algorithm looks for an initial track satisfying $\psi'$. 

%For the satisfiability check, we apply the non-deterministic procedure \texttt{Check$\exists$} (Algorithm~\ref{ChkExist}).
Algorithm~\ref{provCounter} makes use of \emph{descriptor elements}: we remind that they are the labels of the nodes of $B_k$-descriptors. 
%
%A descriptor element for a finite Kripke structure $\mathpzc{K}=(\mathpzc{AP},W, \delta,\mu,w_0)$ is a triple belonging to $W \times 2^W \times W$. The descriptor element for a track $\rho\in\Trk_\mathpzc{K}$  is the triple $d=(v_{in},S,v_{fin})$, where $v_{in}=\fst(\rho)$, $S=\intstates(\rho)$, $v_{fin}=\lst(\rho)$ (we also say that $\rho$ is associated with $d$). 
%
%The idea is that, to represent a track, we can restrict our attention to the first state, the last state, and the set of states occurring in between, ignoring information about the ordering and multiplicity of their occurrences.
%
%
%
%We say that a descriptor element $d$ is \emph{witnessed} (in $\mathpzc{K}$) if there exists 
%a track $\rho\in\Trk_\mathpzc{K}$ such that $d$ is the descriptor element for $\rho$. 
%Instead of considering tracks of $\mathpzc{K}$, which are infinitely many,
%the model checking algorithm ``enumerates'' the descriptor elements witnessed 
%in $\mathpzc{K}$, which are finitely many.
%
By Proposition~\ref{propzerohalt}, 
if a descriptor element $d$ is witnessed in $\mathpzc{K}$, i.e., there exists some $\rho\in\Trk_\mathpzc{K}$ associated with $d$, then there exists a track of length at most $2+|W|^2$ associated with $d$. Thus, to generate a (all) witnessed descriptor element(s) with initial state $v$, we just need to non-deterministically visit the unravelling of $\mathpzc{K}$ from $v$ up to depth $2+|W|^2$.
This property is fundamental for the completeness of the algorithm, and also for bounding the length of tracks we need to consider.% (related to descriptor elements).

Before presenting Algorithm~\ref{provCounter}, we need to describe
the non-deterministic auxiliary procedure \texttt{Check$\exists$} (see Algorithm~\ref{ChkExist}), which takes as input a Kripke structure $\mathpzc{K}$, a formula $\psi$ of $\HSexi$, and a witnessed descriptor element $d=(v_{in},S,v_{fin})$ and it returns \textbf{Yes} if and only if there exists a track $\rho\in\Trk_\mathpzc{K}$, associated with $d$, such that $\mathpzc{K},\rho\models\psi$. 
\begin{algorithm}[tp]
\caption{\texttt{Check$\exists$}$(\mathpzc{K},\psi,(v_{in},S,v_{fin}))$}\label{ChkExist}
\begin{algorithmic}[1]
	\If{$\psi=\beta$}\Comment{$\beta$ is a pure propositional formula}
	    \If{$VAL(\beta,(v_{in},S,v_{fin}))=\top$}
	        \State{Yes \textbf{else} No}
	    \EndIf
    \ElsIf{$\psi=\varphi_1\vee\varphi_2$}
        \Either
            \Return{\texttt{Check$\exists$}$(\mathpzc{K},\varphi_1,(v_{in},S,v_{fin}))$} 
        \Or 
            \Return{\texttt{Check$\exists$}$(\mathpzc{K},\varphi_2,(v_{in},S,v_{fin}))$}
        \EndOr
	\ElsIf{$\psi=\hsA\varphi$}
	    \State{$(v_{fin},S',v_{fin}')\gets \texttt{aDescrEl}(\mathpzc{K},v_{fin},\textsc{forw})$}
	    \Return{\texttt{Check$\exists$}$(\mathpzc{K},\varphi,(v_{fin},S',v_{fin}'))$}
	\ElsIf{$\psi=\hsAt\varphi$}
	    \State{$(v_{in}',S',v_{in})\gets \texttt{aDescrEl}(\mathpzc{K},v_{in},\textsc{backw})$}
	    \Return{\texttt{Check$\exists$}$(\mathpzc{K},\varphi,(v_{in}',S',v_{in}))$}
	\ElsIf{$\psi=\hsB\varphi$}
	    \State{$(v_{in}',S',v_{fin}')\gets \texttt{aDescrEl}(\mathpzc{K},v_{in},\textsc{forw})$}\Comment{$v_{in}'=v_{in}$}
	    \Either
	        \If{$(v_{in}',S'\cup \{v_{fin}'\},v_{fin})=(v_{in},S,v_{fin})$ and $(v_{fin}',v_{fin})$ is an edge of $\mathpzc{K}$}
	            \Return{\texttt{Check$\exists$}$(\mathpzc{K},\varphi,(v_{in}',S',v_{fin}'))$}
	        \Else
	            \State{No}
	        \EndIf
	    \Or
	        \State{$(v_{in}'',S'',v_{fin}'')\gets \texttt{aDescrEl}(\mathpzc{K},v_{in}'',\textsc{forw})$, where $(v_{fin}',v_{in}'')$ is an edge of $\mathpzc{K}$ chosen non-deterministically}
	        \If{$\texttt{concat}\left((v_{in}',S',v_{fin}'),(v_{in}'',S'',v_{fin}'')\right)=(v_{in},S,v_{fin})$}
	            \Return{\texttt{Check$\exists$}$(\mathpzc{K},\varphi,(v_{in}',S',v_{fin}'))$}
	        \Else
	            \State{No}
	        \EndIf
	    \EndOr
	\ElsIf{$\psi=\hsE\varphi$}   
%	    \State{$(v_{in}',S',v_{fin}')\gets \texttt{aDescrEl}(\mathpzc{K},v_{fin},\textsc{backw})$}\Comment{$v_{fin}'=v_{fin}$}
%	    \Either
%	        \If{$(v_{in},\{v_{in}'\}\cup S',v_{fin}')=(v_{in},S,v_{fin})$ and $(v_{in},v_{in}')$ is an edge of $\mathpzc{K}$}
%	            \Return{\texttt{Check}$(\mathpzc{K},\varphi,(v_{in}',S',v_{fin}'))$}
%	        \Else
%	            \State{No}
%	        \EndIf
%	    \Or
%	        \State{$(v_{in}'',S'',v_{fin}'')\gets \texttt{aDescrEl}(\mathpzc{K},v_{fin}'',\textsc{backw})$, where $(v_{fin}'',v_{in}')$ is an edge of $\mathpzc{K}$ non-deterministically chosen}
%	        \If{$\texttt{concat}\left((v_{in}'',S'',v_{fin}''),(v_{in}',S',v_{fin}')\right)=(v_{in},S,v_{fin})$}
%	            \Return{\texttt{Check}$(\mathpzc{K},\varphi,(v_{in}',S',v_{fin}'))$}
%	        \Else
%	        \State{No}
%	        \EndIf
%	    \EndOr
        \State{Symmetric to $\psi=\hsB\varphi$}
	\EndIf

\end{algorithmic}
\end{algorithm}
The procedure is recursively defined as follows. 

When it is called on
%The fundamental idea of the algorithm is the following: if the top-level input formula of \texttt{Check}, $\phi$, is satisfied by a track, first of all, 
a pure propositional formula $\beta$ (base of the recursion), $VAL(\beta,d)$ evaluates $\beta$ over $d$ in the standard way. The evaluation can be performed in deterministic polynomial time, and
if $VAL(\beta,d)$ returns $\top$, then there exists a track associated with $d$ (of length at most quadratic in $|W|$) that satisfies $\beta$.

If $\psi =
%%$ has the form $
\psi'\vee\psi''$, where $\psi'$ or $\psi''$ feature some temporal modality, the procedure non-deterministic\-ally calls itself on $\psi'$ or $\psi''$ (the %control 
construct \textbf{Either} $c_1$ \textbf{Or} $c_2$ \textbf{EndOr} denotes a non-deterministic choice between commands $c_1$ and $c_2$).

If  $\psi =
%$ has the form $
\hsA \psi'$ (respectively, $\hsAt \psi'$), the procedure looks for a new descriptor element for a track starting from the final state (respectively, leading to the initial state) of the current descriptor element $d$. To this aim, we use the procedure \texttt{aDescrEl}$(\mathpzc{K},v, \textsc{forw})$ (resp., \texttt{aDescrEl}$(\mathpzc{K},v,\textsc{backw})$) which 
non-deterministically returns a descriptor element $(v_{in}',S',v_{fin}')$, with $v_{in}'=v$ (resp., $v_{fin}'=v$), witnessed in $\mathpzc{K}$ by exploring forward (resp., backward) the unravelling of $\mathpzc{K}$ from $v_{in}'$ (resp., from $v_{fin}'$). 
Its complexity is polynomial in $|W|$, since it needs to examine the unravelling of $\mathpzc{K}$ from $v$ up to depth $2+|W|^2$.

If $\psi =
%$ has the form $
\hsB\psi'$, the procedure looks for a new descriptor element $d_1$ and eventually calls itself on $\psi'$ and $d_1$ only if the current descriptor element $d$ results from the ``concatenation'' of $d_1$ with a suitable descriptor element $d_2$:
if $d_1=(v_{in}',S',v_{fin}')$ and $d_2=(v_{in}'',S'',v_{fin}'')$, then \texttt{concat}$(d_1,d_2)$ returns 
$(v_{in}',S'\cup\{v_{fin}',v_{in}''\}\cup S'',v_{fin}'')$. Notice that if  $\rho_1$ and $\rho_2$ are tracks associated with $d_1$ and $d_2$, respectively, then $\rho_1\cdot\rho_2$ is associated with \texttt{concat}$(d_1,d_2)$.

The following theorem proves soundness and completeness of the \texttt{Check$\exists$} procedure.

\begin{theorem}\label{corrcomplth}
For any formula $\psi$ of the fragment $\HSexi$ and any witnessed descriptor element $d=(v_{in},S,v_{fin})$, the procedure \texttt{Check$\exists$}$(\mathpzc{K},\psi,d)$ has a successful computation if and only if there exists a track $\rho$, associated with $d$, such that $\mathpzc{K},\rho\models\psi$.
\end{theorem}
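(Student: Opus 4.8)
The plan is to prove both directions by structural induction on the $\HSexi$ formula $\psi$, mirroring the recursive structure of \texttt{Check$\exists$}. The induction hypothesis is exactly the statement of the theorem applied to proper subformulas of $\psi$ and to \emph{arbitrary} witnessed descriptor elements (the quantification over $d$ is what makes the induction go through). The base case is $\psi = \beta$ a pure propositional formula: here I would appeal to the semantics of proposition letters over tracks (homogeneity, so $p$ holds over $\rho$ iff $p \in \bigcap_{w \in \states(\rho)} \mu(w)$), observe that the labelling $\sigma(\rho)$ of any track $\rho$ associated with $d = (v_{in}, S, v_{fin})$ depends only on $\{v_{in}\} \cup S \cup \{v_{fin}\}$, and hence $VAL(\beta, d)$ correctly decides whether $\mathpzc{K},\rho \models \beta$ for every (equivalently, some) such $\rho$; since $d$ is witnessed, at least one such $\rho$ exists, and by Proposition~\ref{propzerohalt} one can even be taken of length at most $2 + |W|^2$.

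For the inductive step I would go case by case on the outermost connective, each time matching a branch of Algorithm~\ref{ChkExist}. The $\vee$ case is immediate from the nondeterministic choice and the induction hypothesis. For $\psi = \hsA\varphi$: a successful computation picks, via \texttt{aDescrEl}$(\mathpzc{K}, v_{fin}, \textsc{forw})$, a witnessed descriptor element $d' = (v_{fin}, S', v_{fin}')$ and succeeds on $(\mathpzc{K}, \varphi, d')$; by IH there is a track $\rho'$ associated with $d'$ with $\mathpzc{K},\rho' \models \varphi$, and taking any track $\rho$ associated with $d$ we get $\lst(\rho) = v_{fin} = \fst(\rho')$, so $\rho\, A_\mathbb{I}\, \rho'$ and $\mathpzc{K},\rho \models \hsA\varphi$. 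Conversely, if $\mathpzc{K},\rho \models \hsA\varphi$ for some $\rho$ associated with $d$, then some $\rho'$ with $\fst(\rho') = v_{fin}$ satisfies $\varphi$; its descriptor element $d'$ is witnessed and can be produced by \texttt{aDescrEl}, and the IH gives a successful subcomputation. The $\hsAt$ case is symmetric (using $\fst$ and backward unravelling). For $\psi = \hsB\varphi$ I would handle the two nondeterministic branches: either $\varphi$ is witnessed on a prefix $\rho'$ of $\rho$ whose descriptor element $d_1$ satisfies $(v_{in}, S' \cup \{v_{fin}'\}, v_{fin}) = d$ with $(v_{fin}', v_{fin})$ an edge (the prefix of length $|\rho|-1$), or on a shorter prefix $\rho'$ such that $\rho = \rho' \cdot \rho''$ with $d = \texttt{concat}(d_1, d_2)$, $d_2$ the descriptor element of $\rho''$. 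The key soundness fact is that if $\rho_1, \rho_2$ are \emph{any} tracks associated with $d_1, d_2$, then $\rho_1 \cdot \rho_2$ is associated with $\texttt{concat}(d_1,d_2)$ and is a genuine track of $\mathpzc{K}$ (the edge condition guarantees concatenability); combined with the IH on $\varphi$ and $d_1$, this yields a track associated with $d$ satisfying $\hsB\varphi$. Completeness uses that every proper prefix of a track associated with $d$ has a witnessed descriptor element reachable by \texttt{aDescrEl}, and that $d$ arises as the appropriate concatenation. The $\hsE$ case is symmetric.

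The main obstacle I anticipate is the $\hsB$ (and dually $\hsE$) case, and specifically making precise the claim that a descriptor element, rather than a track, carries enough information: one must show that satisfaction of a $\HSexi$ formula over a track depends only on the track's descriptor element — i.e. that any two tracks with the same $(v_{in}, S, v_{fin})$ are interchangeable for $\HSexi$ formulas. This is essentially the $k=0$ / $B_0$-descriptor specialization of Theorem~\ref{satPresB} together with the fact that $\HSexi$ formulas have $\nestb = 0$ (they contain no $\hsB$ in the \emph{track-descriptor sense}... actually $\hsB$ does occur, so more care is needed: one should note that $\hsB\varphi$ over $\rho$ only looks at prefixes, and the relevant prefix descriptor elements are recovered by Proposition~\ref{propzerohalt}, so the argument stays at descriptor-element granularity). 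I would isolate this as a small auxiliary observation — along the lines of ``for $\psi \in \HSexi$ and tracks $\rho, \rho'$ with the same descriptor element, $\mathpzc{K},\rho \models \psi \iff \mathpzc{K},\rho' \models \psi$'', proved by the same induction — and then the soundness/completeness of \texttt{Check$\exists$} follows cleanly. The other delicate point is checking that \texttt{aDescrEl} can indeed enumerate \emph{all} witnessed descriptor elements with a prescribed initial (or final) state within depth $2 + |W|^2$, which is exactly what Proposition~\ref{propzerohalt} provides.
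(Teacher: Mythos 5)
Your overall strategy---structural induction on $\psi$ with the induction hypothesis quantified over all witnessed descriptor elements, a base case handled by $VAL(\beta,d)$ via homogeneity, and a case analysis matching the branches of Algorithm~\ref{ChkExist}, with Proposition~\ref{propzerohalt} guaranteeing that \texttt{aDescrEl} can produce every witnessed descriptor element---is exactly the paper's proof, and your treatment of the $\vee$, $\hsA$, $\hsAt$ and $\hsB$/$\hsE$ cases (including the observation that the concatenation of \emph{any} witnesses of $d_1$ and $d_2$ is a track of $\mathpzc{K}$ associated with $\texttt{concat}(d_1,d_2)$) is sound and matches the paper's argument.

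However, the auxiliary observation you propose to ``isolate'' at the end is false and you should not rely on it. It is \emph{not} true that any two tracks with the same descriptor element satisfy the same $\HSexi$ formulas: the descriptor element of a track does not determine the descriptor elements of its prefixes. For instance, in the Kripke structure of Figure~\ref{ExK1}, the tracks $v_0v_1v_2v_3$ and $v_0v_2v_1v_3$ share the descriptor element $(v_0,\{v_1,v_2\},v_3)$, but their proper prefixes are $\{v_0v_1,\, v_0v_1v_2\}$ and $\{v_0v_2,\, v_0v_2v_1\}$ respectively; for a letter $p$ holding at $v_0$ and $v_1$ but not $v_2$, the first track satisfies $\hsB p$ while the second does not. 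This is precisely why Theorem~\ref{satPresB} needs full $B_k$-descriptors rather than descriptor elements when $\hsB$ is present. Fortunately, the theorem you are proving is existential, so the interchangeability claim is never needed: in the soundness direction you only have to \emph{construct} one track associated with $d$ satisfying $\psi$ (which your concatenation argument already does), and in the completeness direction you start from a given satisfying track and read off the descriptor elements of its relevant prefix and suffix. If you simply delete the auxiliary lemma and keep your case-by-case argument as written, the proof is complete and coincides with the paper's.
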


%The fundamental idea of the algorithm is the following: if the top-level input formula of \texttt{Check}, $\phi$, is satisfied by a track, first of all, for every boolean combination of propositions $\beta$ which occurs in $\phi$, it is possible to find a track (of length at most quadratic in $|W|$) on which $\beta$ holds; then, for a subformula $\hsB\phi'$ of $\phi$, the algorithm can find a track for which $\phi'$ holds, and it can be extended to the right, thus satisfying $\hsB\phi'$. Moreover, if $\hsA \phi'$ is a subformula of $\phi$, there exists a track for which $\phi'$ holds and it is possible to find another track in $\overline{A}$ relation with the previous one (so $\hsA \phi'$ holds on the latter). Finally for a subformula $\phi'\vee\phi''$, a track satisfying $\phi'$ or $\phi''$ exists. At every step, only descriptor elements are used by the algorithm, as a ``replacement'' for the tracks they are associated with.
\begin{proof}
(Soundness)
The proof is by induction on the structure of the formula $\psi$. % In the proof we assume that $d=(v_{in},S,v_{fin})$. 

\begin{itemize}
    \item $\psi$ is a pure propositional formula $\beta$: let $\rho$ be a witness track for $d$; if \texttt{Check$\exists$}$(\mathpzc{K},\beta,d)$ has a successful computation, then $VAL(\beta,d)$ is true and so $\mathpzc{K},\rho\models\psi$.
    \item $\psi=\varphi_1\vee\varphi_2$: if \texttt{Check$\exists$}$(\mathpzc{K},\psi,d)$ has a successful computation, then, for some $i \in \{1,2\}$, \texttt{Check$\exists$}$(\mathpzc{K},\varphi_i,d)$ has a successful computation.
%or \texttt{Check$\exists$}$(\mathpzc{K},\varphi_2,d)$ has a successful computation. In the first case, 
By the inductive hypothesis, there exists $\rho\in\Trk_\mathpzc{K}$ associated with $d$ such that $\mathpzc{K},\rho\models\varphi_i$, and thus $\mathpzc{K},\rho\models\varphi_1\vee\varphi_2$. 
%The other case is symmetric.
    \item $\psi=\hsA\varphi$: if \texttt{Check$\exists$}$(\mathpzc{K},\psi,d)$ 
    %for $d=(v_{in},S,v_{fin})$, 
    has a successful computation, then there exists a witnessed $d'=(v_{in}',S',v_{fin}')$, with $v_{in}'=v_{fin}$, such that \texttt{Check$\exists$}$(\mathpzc{K},\varphi,d')$ has a successful computation. By the inductive hypothesis, there exists a track $\rho'$, associated with $d'$, such that $\mathpzc{K},\rho'\models\varphi$. If $\rho$ is a track associated with $d$ (which is witnessed by hypothesis), we have that $\lst(\rho)=\fst(\rho')=v_{fin}$ and, by definition, $\mathpzc{K},\rho\models\psi$.
    \item $\psi=\hsB\varphi$: if \texttt{Check$\exists$}$(\mathpzc{K},\psi,d)$ has a successful computation, then
    %, where $d=(v_{in},S,v_{fin})$, 
    we must distinguish two possible cases.\newline
    %\begin{itemize}
        $(i)$ There exists $d'=(v_{in},S',v_{fin}')$, witnessed by a track with $(v_{fin}',v_{fin})\in\delta$, such that $(v_{in},S'\cup \{v_{fin}'\},v_{fin})=d$, and \texttt{Check$\exists$}$(\mathpzc{K},\varphi,d')$ has a successful computation. By the inductive hypothesis, there exists a track $\rho'$, associated with $d'$, such that $\mathpzc{K},\rho'\models\varphi$. Hence $\mathpzc{K},\rho'\cdot v_{fin}\models\psi$ and $\rho'\cdot v_{fin}$ is associated with $d$.\newline 
        $(ii)$ There exist $d'=(v_{in},S',v_{fin}')$, witnessed by a track, and $d''=(v_{in}'',S'',v_{fin}'')$, witnessed by a track as well, such that $(v_{fin}',v_{in}'')\in\delta$, \texttt{concat}$(d',d'')=d$, and \texttt{Check$\exists$}$(\mathpzc{K},\varphi,d')$ has a successful computation. By the inductive hypothesis, there exists a track $\rho'$, associated with $d'$, such that $\mathpzc{K},\rho'\models\varphi$. Hence $\mathpzc{K},\rho'\cdot\rho''\models\psi$, where $\rho''$ is any track associated with $d''$ and  $\rho'\cdot\rho''$ is associated with $d$.
    %\end{itemize}
\end{itemize}
The case $\psi=\hsAt\varphi$ (respectively, $\psi=\hsE\varphi$) can be dealt with as $\psi=\hsA\varphi$ (respectively, $\psi=\hsB\varphi$).
%\end{proof}
%
%\begin{lemma}{(Completeness of Algorithm \ref{ChkExist})}
%For any $\exists$-HS$[A,\overline{A},B,E]$ formula $\psi$, for any witnessed descriptor element $d=(v_{in},S,v_{fin})$, if there exists a track $\rho$ described by $d$ such that $\mathpzc{K},\rho\models\psi$, then \texttt{Check$\exists$}$(\mathpzc{K},\psi,d)$ has a successful computation.
%\end{lemma}
%\begin{proof}

\medskip
(Completeness) The proof is by induction on the structure of the formula $\psi$.
\begin{itemize}
    \item $\psi$ is a pure propositional formula $\beta$: if $\rho$ is associated with $d$ and $\mathpzc{K},\rho\models\beta$, then $VAL(\beta,d)=\top$, and thus \texttt{Check$\exists$}$(\mathpzc{K},\psi,d)$ has a successful computation.
    \item $\psi= \varphi_1\vee \varphi_2$: if there exists a track $\rho$, associated with $d$, such that $\mathpzc{K},\rho\models\varphi_1\vee \varphi_2$, then
    $\mathpzc{K},\rho\models\varphi_i$, for some $i \in \{1,2\}$. By the inductive hypothesis, \texttt{Check$\exists$}$(\mathpzc{K},\varphi_i,d)$ has a successful computation, and hence \texttt{Check$\exists$}$(\mathpzc{K},\psi,d)$ has a successful computation.
    \item $\psi=\hsA\varphi$: if there exists a track $\rho$, associated with $d$, such that $\mathpzc{K},\rho\models\hsA\varphi$, then, by definition, there exists a track $\overline{\rho}$, with $\fst(\overline{\rho})=\lst(\rho)=v_{fin}$, such that $\mathpzc{K},\overline{\rho}\models\varphi$. 
    If $d'=(v_{fin},S',v_{fin}')$ is the descriptor element for $\overline{\rho}$, then, by the inductive hypothesis, \texttt{Check$\exists$}$(\mathpzc{K},\varphi,d')$ has a successful computation. 
    %Hence, 
    Since there exists a computation where the non-deterministic call to $\texttt{aDescrEl}(\mathpzc{K},v_{fin}, \textsc{forw})$ returns the descriptor element $d'$ for $\overline{\rho}$, it follows that \texttt{Check$\exists$}$(\mathpzc{K},\psi,d)$ has a successful computation.
    \item $\psi=\hsB\varphi$: if there exists a track $\rho$, associated with $d$, such that $\mathpzc{K},\rho\models\hsB\varphi$, there are two possible cases.\newline
    %\begin{itemize}
    %
        $(i)$ $\mathpzc{K},\overline{\rho}\models\varphi$, with $\rho=\overline{\rho}\cdot v_{fin}$ for some $\overline{\rho} \in \Trk_\mathpzc{K}$. If $d'=(v_{in},S',v_{fin}')$ is the descriptor element for $\overline{\rho}$, by the inductive hypothesis \texttt{Check$\exists$}$(\mathpzc{K},\varphi,d')$ has a successful computation. Since there is a computation where $\texttt{aDescrEl}(\mathpzc{K},v_{in},\textsc{forw})$ returns $d'$ and both $(v_{fin}',v_{fin}) \in\delta$ and $(v_{in},S'\cup \{v_{fin}'\},v_{fin})=d$, it follows that \texttt{Check$\exists$}$(\mathpzc{K},\psi,d)$ has a successful computation.\newline
        $(ii)$ $\mathpzc{K},\overline{\rho}\models\varphi$ with $\rho=\overline{\rho}\cdot \tilde{\rho}$ for some $\overline{\rho},\tilde{\rho}\in\Trk_\mathpzc{K}$. 
        Let $d'=(v_{in},S',v_{fin}')$ and $d''=(v_{in}'',S'',v_{fin}'')$ be the descriptor elements for $\overline{\rho}$ and $\tilde{\rho}$, respectively. Obviously, it holds that $\texttt{concat}(d',d'')=d$.
        By the inductive hypothesis, \texttt{Check$\exists$}$(\mathpzc{K},\varphi,d')$ has a successful computation. Since both $\overline{\rho}$ and $\tilde{\rho}$ are witnessed, there is a computation where the calls to $\texttt{aDescrEl}(\mathpzc{K},v_{in},\textsc{forw})$ and $\texttt{aDescrEl}(\mathpzc{K},v_{in}'',\textsc{forw})$ non-deterministically return $d'$ and $d''$, respectively, and $(v_{fin}',v_{in}'')\in\delta$ is non-deterministically chosen. Hence, \texttt{Check$\exists$}$(\mathpzc{K},\psi,d)$ has a successful computation.
    %\end{itemize}
\end{itemize}
The case $\psi=\hsAt\varphi$ (respectively, $\psi=\hsE\varphi$) can be dealt with as $\psi=\hsA\varphi$ (respectively, $\psi=\hsB\varphi$).
\end{proof}

It is worth pointing out that \texttt{Check$\exists$}$(\mathpzc{K},\psi,d)$ cannot deal with $\hsBt$ and $\hsEt$ modalities. To cope with them, descriptor elements are not enough: the whole descriptors must be considered.

\begin{algorithm}[tbp]
\caption{\texttt{ProvideCounterex}$(\mathpzc{K},\psi)$}\label{provCounter}
\begin{algorithmic}[1]
    \State{$(v_{in},S,v_{fin})\gets \texttt{aDescrEl}(\mathpzc{K},w_0,\textsc{forw})$}\Comment{$v_{in}=w_0$ is the initial state of $\mathpzc{K}$}
    \Return{\texttt{Check$\exists$}$(\mathpzc{K},\texttt{to}\HSexi(\neg\psi),(v_{in},S,v_{fin}))$}
    %\Comment{$\psi\in\forall$-HS$[A,\overline{A},B,E]$}
\end{algorithmic}
\end{algorithm}

We can finally introduce the procedure \texttt{ProvideCounterex}$(\mathpzc{K},\psi)$ (Algorithm \ref{provCounter}), which searches for counterexamples to the input $\HSforall$ formula $\psi$; indeed, it is possible to prove that it has a successful computation if and only if $\mathpzc{K}\not\models\psi$. In the pseudocode of procedure \texttt{ProvideCounterex}, $\texttt{to}\HSexi(\neg\psi)$ denotes the $\HSexi$ formula equivalent to $\neg\psi$.

On the one hand, if \texttt{ProvideCounterex}$(\mathpzc{K},\psi)$ has a successful computation, then there exists a witnessed descriptor element $d=(v_{in},S,v_{fin})$, where $v_{in}$ is $w_0$ (the initial state of $\mathpzc{K}$), such that \texttt{Check$\exists$}$(\mathpzc{K},\texttt{to}\HSexi(\neg\psi),d)$ has a successful computation. This means that there exists a track $\rho$, associated with $d$, such that $\mathpzc{K},\rho\models\neg\psi$, and thus $\mathpzc{K}\not\models\psi$.

On the other hand, if $\mathpzc{K}\not\models\psi$, then there exists an initial track $\rho$ such that $\mathpzc{K},\rho\models\neg\psi$. Let $d$ be the descriptor element for $\rho$: \texttt{Check$\exists$}$(\mathpzc{K},\texttt{to}\HSexi(\neg\psi),d)$ has a successful computation. Since $d$ is witnessed by an initial track, some non-deterministic instance of $\texttt{aDescrEl}(\mathpzc{K},w_0,\textsc{forw})$ returns $d$. Hence \texttt{ProvideCounterex}$(\mathpzc{K},\psi)$ has a successful computation.

As for the complexity, \texttt{ProvideCounterex}$(\mathpzc{K},\psi)$ runs in non-deterministic polynomial time (it is in NP), since the number of recursive invocations of the procedure \texttt{Check$\exists$}
%$(\mathpzc{K},\psi,d)$
is $O(|\psi|)$, and each invocation requires time polynomial in $|W|$ while generating descriptor elements. Therefore, the model checking problem for $\HSforall$ belongs to coNP.

We conclude the section by proving that the model checking problem for $\HSforall$ is coNP-complete. Such a result is an easy corollary of the following theorem.

\begin{theorem}
Let $\mathpzc{K}$ be a finite Kripke structure and $\beta\in\HSprop$ be a pure propositional formula. The problem of deciding whether $\mathpzc{K}\not\models \beta$ is NP-hard (under LOGSPACE reductions).
\end{theorem}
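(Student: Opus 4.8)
The plan is to reduce the Boolean satisfiability problem \textsc{Sat} (which is NP-complete) to the problem of deciding whether $\mathpzc{K}\not\models\beta$. The point to keep in mind is that, by the semantics and the homogeneity assumption, $\mathpzc{K}\not\models\beta$ holds if and only if there is an initial track $\rho\in\Trk_\mathpzc{K}$ with $\mathpzc{K},\rho\not\models\beta$, i.e., such that $\beta$ is false under the propositional valuation $\sigma(\rho)=\bigcap_{w\in\states(\rho)}\mu(w)$ (for a pure propositional $\beta$, truth over $\rho$ depends only on which letters lie in $\sigma(\rho)$). So it suffices to build a Kripke structure whose initial tracks realize, via $\sigma$, all the truth assignments to the variables of the input formula.

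Concretely, given a Boolean formula $\phi$ over variables $Var=\{x_1,\dots,x_n\}$, I would output the pair $(\mathpzc{K}_{QBF}^{Var},\neg\phi)$, where $\mathpzc{K}_{QBF}^{Var}$ is exactly the Kripke structure defined in Section~\ref{subsec:AAbarBbarEbar} (it depends only on $Var$, not on any quantifier prefix). Note that $\neg\phi$ is a pure propositional formula, hence a $\HSprop$ formula, and mentions only letters in $Var$. The key facts to verify are: $(i)$ for every truth assignment $\nu$ to $Var$ there is a \emph{complete} initial track $\rho_\nu=w_0w_1w_{x_n}^{\ell_n 1}w_{x_n}^{\ell_n 2}\cdots w_{x_1}^{\ell_1 1}w_{x_1}^{\ell_1 2}$ (with $\ell_i=\top$ iff $\nu(x_i)=\top$) for which $\sigma(\rho_\nu)\cap Var=\{x_i\mid \nu(x_i)=\top\}$ — indeed $x_j$ lies in the label of every state of $\rho_\nu$ except possibly $w_{x_j}^{\ell_j 1},w_{x_j}^{\ell_j 2}$, and it lies in those precisely when $\ell_j=\top$; and $(ii)$ for \emph{any} initial track $\rho$ of $\mathpzc{K}_{QBF}^{Var}$, the set $\sigma(\rho)\cap Var$ is a subset of $Var$, hence a legitimate valuation (the still-undecided variables along $\rho$ simply default to $\top$, and the auxiliary letters $start,x_{i\,aux}$ are never in $\sigma$), and since $\phi$ uses only letters of $Var$, the truth of $\phi$ over $\rho$ is determined by $\sigma(\rho)\cap Var$.

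Given these facts the correctness of the reduction is immediate in both directions. If $\phi$ is satisfiable, pick a satisfying assignment $\nu$ and the complete track $\rho_\nu$: then $\mathpzc{K}_{QBF}^{Var},\rho_\nu\models\phi$, hence $\mathpzc{K}_{QBF}^{Var},\rho_\nu\not\models\neg\phi$, so $\mathpzc{K}_{QBF}^{Var}\not\models\neg\phi$. Conversely, if $\mathpzc{K}_{QBF}^{Var}\not\models\neg\phi$, there is an initial track $\rho$ with $\mathpzc{K}_{QBF}^{Var},\rho\models\phi$, and then the valuation $\sigma(\rho)\cap Var$ satisfies $\phi$, so $\phi$ is satisfiable. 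Finally, the reduction is LOGSPACE-computable: $\mathpzc{K}_{QBF}^{Var}$ is constructible in logarithmic working space from $Var$ (as already noted in Section~\ref{subsec:AAbarBbarEbar}), and forming $\neg\phi$ from $\phi$ is trivial. This establishes that deciding $\mathpzc{K}\not\models\beta$ for $\beta\in\HSprop$ is NP-hard under LOGSPACE reductions. (As a corollary, model checking $\HSprop$ — deciding $\mathpzc{K}\models\beta$ — is coNP-hard, and together with the coNP procedure \texttt{ProvideCounterex} this gives coNP-completeness of $\HSprop$, and hence of $\HSforall$.)

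I do not expect a real obstacle here; the only point needing a little care is the treatment of the ``short'' initial tracks of $\mathpzc{K}_{QBF}^{Var}$ that terminate before all variables are decided. As observed in fact $(ii)$, these still induce genuine valuations (undecided variables forced to $\top$), so they can never make $\mathpzc{K}\not\models\neg\phi$ hold unless $\phi$ is already satisfiable, while every valuation is in any case realized by some complete initial track; hence no spurious ``yes'' instances arise. If one wished to sidestep this remark entirely, one could replace $\mathpzc{K}_{QBF}^{Var}$ by a streamlined ``ladder'' with just two states $w_i^\top,w_i^\bot$ per variable and an absorbing sink, labelling $w_i^\top$ with all of $Var$ and $w_i^\bot$ with $Var\setminus\{x_i\}$; the argument is verbatim the same.
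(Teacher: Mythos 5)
Your proof is correct and follows essentially the same route as the paper: a LOGSPACE reduction from SAT to $\mathpzc{K}\not\models\neg\phi$ via a Kripke structure whose initial tracks realize, through the homogeneous labelling $\sigma(\rho)=\bigcap_{w\in\states(\rho)}\mu(w)$, exactly the truth assignments to $Var$. The only (inessential) difference is that you reuse $\mathpzc{K}_{QBF}^{Var}$, whereas the paper introduces the dedicated two-states-per-variable ladder $\mathpzc{K}_{SAT}^{Var}$ --- precisely the streamlined variant you describe in your closing remark.
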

\begin{proof}
We provide a reduction from the NP-complete SAT problem to the considered problem.
%(SAT is the well-known language of satisfiable boolean formulas). 
%Let us consider, for example, a formula of SAT $\beta=(x_1\vee x_2\vee \neg x_3)\wedge ((x_4\wedge \neg x_2)\vee \neg x_3)$. 
Let $\beta$ be a Boolean formula over a set of variables $Var=\{x_1,\ldots , x_n\}$. 
We build a Kripke structure, $\mathpzc{K}_{SAT}^{Var}=(\mathpzc{AP},W,\delta,\mu,w_0)$, with:
\begin{itemize}
    \item $\mathpzc{AP}=Var$;
    \item $W= \{w_0\}\cup \{w_i^\ell \mid \ell \in \{\top,\bot\},\; 1\leq i \leq n\}$; 
    \item $\delta = \{(w_0,w_1^\top), (w_0,w_1^\bot)\}
    \cup\{(w_i^\ell,w_{i+1}^m) \mid  \ell,m \in \{\top,\bot\}, 1 \leq i \leq n-1\}
    \cup\{(w_n^\top,w_n^\top)\} \cup \{(w_n^\bot,w_n^\bot)\}$;
    \item $\mu(w_0)= \mathpzc{AP}$; 
    \item for $1\leq i \leq n$, $\mu(w_i^{\top})= \mathpzc{AP}$ and $\mu(w_i^{\bot})= \mathpzc{AP} \setminus \{x_i\}$.
\end{itemize}

See Figure~\ref{Kbool} for an example of $\mathpzc{K}_{SAT}^{Var}$, with $Var=\{x_1, \ldots ,x_4\}$.

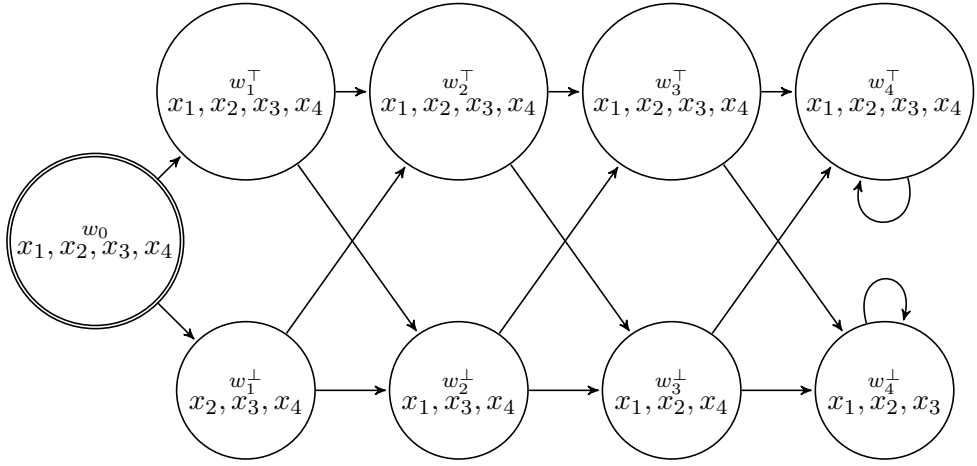
\begin{figure}[tb]
\centering
%\resizebox{\linewidth}{!}{
\begin{tikzpicture}[->,>=stealth',shorten >=1pt,auto,node distance=2.8cm,semithick,every node/.style={circle,draw,inner sep=2pt},every loop/.style={max distance=8mm}]  

    \node[style={double}] (0) {$\stackrel{w_0}{x_1,x_2,x_3,x_4}$};
    \node (1a) [above right of=0] {$\stackrel{w_1^\top}{x_1,x_2,x_3,x_4}$};
    \node (1b) [below right of=0] {$\stackrel{w_1^\bot}{x_2,x_3,x_4}$};
    \node (2a) [right of=1a] {$\stackrel{w_2^\top}{x_1,x_2,x_3,x_4}$};
    \node (2b) [right of=1b] {$\stackrel{w_2^\bot}{x_1,x_3,x_4}$};
    \node (3a) [right of=2a] {$\stackrel{w_3^\top}{x_1,x_2,x_3,x_4}$};
    \node (3b) [right of=2b] {$\stackrel{w_3^\bot}{x_1,x_2,x_4}$};
    \node (4a) [right of=3a] {$\stackrel{w_4^\top}{x_1,x_2,x_3,x_4}$};
    \node (4b) [right of=3b] {$\stackrel{w_4^\bot}{x_1,x_2,x_3}$};

  \path
    (0) edge (1a)
        edge (1b)
    (1a) edge (2a)
        edge (2b)
    (1b) edge (2a)
        edge (2b)    
    (2a) edge (3a)
        edge (3b)
    (2b) edge (3a)
        edge (3b)
    (3a) edge (4a)
        edge (4b)
    (3b) edge (4a)
        edge (4b)
    (4a) edge [loop below] (4a)
    (4b) edge [loop above] (4b)
    ;
\end{tikzpicture}%}
%\vspace{-0.7cm}
\caption{Kripke structure $\mathpzc{K}_{SAT}^{Var}$ associated with a SAT formula with variables $Var=\{x_1,x_2,x_3,x_4\}$.}\label{Kbool}
\end{figure}

It is immediate to see that any initial track $\rho$ of any length induces a truth assignment to the variables of $Var$: for any $x_i \in Var$, $x_i$ evaluates to $\top$ if and only if $x_i \in \bigcap_{w\in\states(\rho)}\mu(w)$. Conversely, for any possible truth assignment to the variables in $Var$, there exists an initial track $\rho$ that induces such an assignment: we include in the track the state $w_i^{\top}$ if $x_i$ is assigned to $\top$, $w_i^{\bot}$ otherwise.
%i.e. $x_i\mapsto\top\iff x_i\in\bigcap_{w\in\states(\rho)}\mu(w)$ (because for each subset $\mathpzc{S}$ of propositions of $\mathpzc{AP}$ there is an initial track on which the propositions of $\mathpzc{S}$ hold).

%$\beta$, example:
%$\rho_1=w_0w_{x_1}^\top w_{x_2}^\bot w_{x_3}^\top w_{x_4}^\top w_{x_4}^\top w_{x_4}^\top$ maps to %$x_1\mapsto\top$, $x_2\mapsto\bot$, $x_3\mapsto\top$, $x_4\mapsto\top$ as                        %     $\bigcap_{w\in\states(\rho_1)}\mu(w)=\{x_1,x_3,x_4\}$, whereas
%$\rho_2=w_0w_{x_1}^\bot w_{x_2}^\top$ maps to $x_1\mapsto\bot$, $x_2\mapsto\top$, %$x_3\mapsto\top$, $x_4\mapsto\top$, since 
%        $\bigcap_{w\in\states(\rho_2)}\mu(w)=\{x_2,x_3,x_4\}$.
%It is straightforward to generalize such a construction to any SAT formula: the Kripke structure depends on the number of variables only.

Let $\gamma=\neg\beta$. It holds that $\beta$ is satisfiable if and only if there exists an initial track $\rho\in\Trk_{\mathpzc{K}_{SAT}^{Var}}$ such that $\mathpzc{K}_{SAT}^{Var},\rho\models \beta$, that is, if and only if $\mathpzc{K}_{SAT}^{Var}\not\models \gamma$.
%if $\beta$ is satisfiable we consider the (an) initial track $\tilde{\rho}$ corresponding to the (a) satisfying truth assignment for the variables of $\beta$. It holds that $\mathpzc{K},\tilde{\rho}\models\beta$. On the other hand, if $\mathpzc{K},\tilde{\rho}\models\beta$, the truth assignment such that $x_i\mapsto\top\iff x_i\in\bigcap_{w\in\states(\tilde{\rho})}\mu(w)$ satisfies $\beta$.
To conclude, we observe that $\mathpzc{K}_{SAT}^{Var}$ can be built with logarithmic working space.
\end{proof}

It immediately follows that checking whether $\mathpzc{K}\not\models\beta$ for $\beta\in\HSprop$ is NP-complete, thus model checking for formulas of $\HSprop$ is coNP-complete.
Moreover, since a pure propositional formula in $\HSprop$ is also a $\HSforall$ formula, \texttt{ProvideCounterex}$(\mathpzc{K},\psi)$ is at least as hard as checking whether $\mathpzc{K}\not\models \beta$ for $\beta\in\HSprop$. Thus, \texttt{ProvideCounterex}$(\mathpzc{K},\psi)$ is NP-complete, hence the model checking problem for $\HSforall$ is coNP-complete.

We conclude the section spending a few words about the complexity of the model checking problem for the fragment  $\AAbar$, also known as \emph{the logic of temporal neighborhood}. 
%Two results immediately follow from what we have already proved.
As a consequence of the lower bound for $\HSprop$, model checking for $\AAbar$ turns out to be coNP-hard as well. 
Moreover, the problem is in PSPACE, as $\AAbar$ is a subfragment of $\AAbarBbarEbar$.
Actually, in~\cite{MMPS16}, the authors proved that $\AAbar$ belongs to $\Thsq$ and is $\Th$-hard: the complexity class $\Th$ (respectively, $\Thsq$) contains the problems decided by a deterministic polynomial time algorithm which requires only $O(\log n)$ (respectively, $O(\log^2 n)$) queries to an NP oracle, being $n$ the input size~\cite{gottlob1995,schnoebelen2003}. Hence, such classes are higher than both NP and coNP in the polynomial time hierarchy.
\section{Conclusions and future work}

In this paper, we have studied the model checking problem for some fragments of Halpern and Shoham's modal logic of time intervals.
First, we have considered the large fragment $\AAbarBBbarEbar$, and devised an EXPSPACE model checking algorithm for it, which rests on a contraction method that allows us to restrict the verification 
of the input formula to a finite subset of tracks of bounded size, called track representatives.
We have also proved that the problem is PSPACE-hard, NEXP-hard if a suitable succinct encoding of formulas is allowed. As a matter of fact, in the latter case, the problem can also be proved coNEXP-hard, and thus we conjecture that a tighter lower bound can be established (for instance, EXPSPACE-hardness).
Then, we identified some other HS fragments, namely, $\AAbarBbarEbar$, $\HSforall$, and 
$\AAbar$, whose model checking problem turns out to be (computationally) much simpler than that
of full HS and of $\AAbarBBbarEbar$, and comparable to that of point-based temporal 
logics (as an example, the model checking problem for $\AAbarBbarEbar$ is PSPACE-complete, and has thus the same complexity as 
LTL).
Luckily, these fragments are expressive enough to capture meaningful properties of state-transition systems, such as, for instance, mutual exclusion, state reachability, and non-starvation.

One may wonder whether, given the homogeneity assumption, there is the possibility to reduce the model checking problem for HS fragments over finite Kripke structures to a point-based setting. Such an issue has been systematically dealt with in \cite{DBLP:conf/fsttcs/BozzelliMMPS16}. Together with Laura Bozzelli and Pietro Sala, we consider three semantic variants of HS: the one we introduced in \cite{DBLP:conf/time/MontanariMPP14} and we used in the subsequent papers, including the present one, called state-based semantics, which allows branching in the past and in the future, the computation-tree-based semantics, allowing branching only in the future, and the linear semantics, disallowing branching. These variants are compared, as for their expressiveness, among themselves and to standard temporal logics, getting a complete picture. In particular, we show that (i) HS with computation-tree-based semantics is equivalent to finitary CTL* and strictly included in HS with state-based semantics, and (ii) HS with linear semantics is equivalent to LTL and incomparable to HS with state-based semantics.

As for future work, we are currently exploring two main research directions. On the one hand, we are looking for other well-behaved fragments of HS; on the other hand, we are thinking of possible ways of relaxing the homogeneity assumption. As for the latter, a promising direction has been recently outlined by Lomuscio and Michaliszyn, who proposed to use regular expressions to define the behavior of proposition letters over intervals in terms of the component states \cite{LM15}. Our ultimate goal is to be able to deal with interval properties that can only be predicated over time intervals considered as a whole. This is the case, for instance, of temporal aggregations (think of a constraint on the average speed of a moving device during a given time period).  In this respect, the existing work on Duration Calculus (DC) model checking seems to be relevant. DC extends interval temporal logic with an explicit notion of state: states are denoted by state expressions and characterized by a duration (the time period during which the system remains in a given state). Recent results on DC model checking and an account of related work can be found in~\cite{HPB14}.
%
%In addition, we are planning to undertake a comparative study about model checking for HS and its fragments and for point-based 
%temporal logics, such as CTL and LTL, in order to have a clear picture of the differences of the two approaches and, as for 
%properties which can be dealt with by both of them, whether differences in complexity or succinctness can be observed.

\vfill

\section*{Acknowledgements}
The work by Adriano Peron has been supported by the SHERPA collaborative project, 
which has received funding from the European Community $7$-th Framework Programme 
(FP7/2007-2013) under grant agreements ICT-600958. He is solely responsible 
for its content. The paper does not represent the opinion of the European Community and 
the Community is not responsible for any use that might be made of the information 
contained therein. The work by Alberto Molinari and Angelo Montanari has been supported 
by the GNCS project \emph{Logic, Automata, and Games for Auto-Adaptive Systems}.

\clearpage
%\section*{References}
%\bibliographystyle{plain}
%\bibliography{biblio2}

\begin{thebibliography}{10}

\bibitem{All83}
J.~F. Allen.
\newblock Maintaining knowledge about temporal intervals.
\newblock {\em Communications of the ACM}, 26(11):832--843, 1983.

\bibitem{DBLP:journals/logcom/BowmanT03}
H.~Bowman and S.~J. Thompson.
\newblock A decision procedure and complete axiomatization of finite interval
  temporal logic with projection.
\newblock {\em Journal of Logic and Computation}, 13(2):195--239, 2003.

\bibitem{DBLP:conf/fsttcs/BozzelliMMPS16}
Laura Bozzelli, Alberto Molinari, Angelo Montanari, Adriano Peron, and Pietro
  Sala.
\newblock Interval vs. point temporal logic model checking: an expressiveness
  comparison.
\newblock In {\em {FSTTCS}}, pages 26:1--26:14, 2016.

\bibitem{DBLP:journals/amai/BresolinMGMS14}
D.~Bresolin, D.~{Della Monica}, V.~Goranko, A.~Montanari, and G.~Sciavicco.
\newblock The dark side of interval temporal logic: marking the undecidability
  border.
\newblock {\em Annals of Mathematics and Artificial Intelligence},
  71(1-3):41--83, 2014.

\bibitem{BGMS10}
D.~Bresolin, V.~Goranko, A.~Montanari, and P.~Sala.
\newblock Tableau-based decision procedures for the logics of subinterval
  structures over dense orderings.
\newblock {\em Journal of Logic and Computation}, 20(1):133--166, 2010.

\bibitem{BGMS09}
D.~Bresolin, V.~Goranko, A.~Montanari, and G.~Sciavicco.
\newblock Propositional interval neighborhood logics: Expressiveness,
  decidability, and undecidable extensions.
\newblock {\em Annals of Pure and Applied Logic}, 161(3):289--304, 2009.

\bibitem{BMSS11}
D.~Bresolin, A.~Montanari, P.~Sala, and G.~Sciavicco.
\newblock What's decidable about {H}alpern and {S}hoham's interval logic? {T}he
  maximal fragment {$\mathsf{AB \overline{BL}}$}.
\newblock In {\em LICS}, pages 387--396, 2011.

\bibitem{DBLP:series/eatcs/ChaochenH04}
Z.~Chaochen and M.~R. Hansen.
\newblock {\em Duration Calculus - {A} Formal Approach to Real-Time Systems}.
\newblock Springer, 2004.

\bibitem{CGP02}
E.~M. Clarke, O.~Grumberg, and D.~A. Peled.
\newblock {\em {Model Checking}}.
\newblock MIT Press, 2002.

\bibitem{DBLP:conf/ecp/GiunchigliaT99}
F.~Giunchiglia and P.~Traverso.
\newblock Planning as model checking.
\newblock In {\em ECP}, pages 1--20, 1999.

\bibitem{roadmap_intervals}
V.~Goranko, A.~Montanari, and G.~Sciavicco.
\newblock A road map of interval temporal logics and duration calculi.
\newblock {\em Journal of Applied Non-Classical Logics}, 14(1-2):9--54, 2004.

\bibitem{gottlob1995}
Georg Gottlob.
\newblock {NP} {T}rees and {C}arnap's {M}odal {L}ogic.
\newblock {\em Journal of the ACM}, 42(2):421--457, 1995.

\bibitem{HS91}
J.~Y. Halpern and Y.~Shoham.
\newblock A propositional modal logic of time intervals.
\newblock {\em Journal of the ACM}, 38(4):935--962, 1991.

\bibitem{HPB14}
M.~R. Hansen, A.~D. Phan, and A.~W. Brekling.
\newblock A practical approach to model checking {D}uration {C}alculus using
  {P}resburger {A}rithmetic.
\newblock {\em Annals of Mathematics and Artificial Intelligence},
  71(1-3):251--278, 2014.

\bibitem{LM13}
A.~R. Lomuscio and J.~Michaliszyn.
\newblock An epistemic {H}alpern-{S}hoham logic.
\newblock In {\em IJCAI}, pages 1010--1016, 2013.

\bibitem{LM14}
A.~R. Lomuscio and J.~Michaliszyn.
\newblock Decidability of model checking multi-agent systems against a class of
  {EHS} specifications.
\newblock In {\em {ECAI}}, pages 543--548, 2014.

\bibitem{LM15}
A.~R. Lomuscio and J.~Michaliszyn.
\newblock Model checking multi-agent systems against epistemic {HS}
  specifications with regular expressions.
\newblock In {\em {KR}}, pages 298--308, 2016.

\bibitem{DBLP:conf/tacas/LomuscioR06}
A.~R. Lomuscio and F.~Raimondi.
\newblock {MCMAS:} {A} model checker for multi-agent systems.
\newblock In {\em {TACAS}}, pages 450--454, 2006.

\bibitem{DBLP:journals/fuin/MarcinkowskiM14}
Jerzy Marcinkowski and Jakub Michaliszyn.
\newblock The undecidability of the logic of subintervals.
\newblock {\em Fundamenta Informaticae}, 131(2):217--240, 2014.

\bibitem{MMMPP15}
A.~Molinari, A.~Montanari, A.~Murano, G.~Perelli, and A.~Peron.
\newblock Checking interval properties of computations.
\newblock {\em Acta Informatica}, 53(6-8):587--619, 2016.

\bibitem{MMP15B}
A.~Molinari, A.~Montanari, and A.~Peron.
\newblock {Complexity of ITL model checking: some well-behaved fragments of the
  interval logic HS}.
\newblock In {\em TIME}, pages 90--100, 2015.

\bibitem{MMP15}
A.~Molinari, A.~Montanari, and A.~Peron.
\newblock A model checking procedure for interval temporal logics based on
  track representatives.
\newblock In {\em CSL}, pages 193--210, 2015.

\bibitem{MMPS16}
A.~Molinari, A.~Montanari, A.~Peron, and P.~Sala.
\newblock {Model Checking Well-Behaved Fragments of HS: the (Almost) Final
  Picture}.
\newblock In {\em KR}, pages 473--483, 2016.

\bibitem{DBLP:conf/time/MontanariMPP14}
A.~Montanari, A.~Murano, G.~Perelli, and A~Peron.
\newblock Checking interval properties of computations.
\newblock In {\em {TIME}}, pages 59--68, 2014.

\bibitem{MPS10}
A.~Montanari, G.~Puppis, and P.~Sala.
\newblock Maximal decidable fragments of {H}alpern and {S}hoham's modal logic
  of intervals.
\newblock In {\em ICALP}, pages 345--356, 2010.

\bibitem{DBLP:journals/corr/MontanariS14}
Angelo Montanari and Pietro Sala.
\newblock Interval-based synthesis.
\newblock In {\em GandALF}, pages 102--115, 2014.

\bibitem{digital_circuits_thesis}
B.~Moszkowski.
\newblock {\em Reasoning About Digital Circuits}.
\newblock PhD thesis, Department of Computer Science, Stanford University,
  Stanford, CA, 1983.

\bibitem{Pap94}
C.~H. Papadimitriou.
\newblock {\em Computational complexity}.
\newblock Addison-Wesley, 1994.

\bibitem{Par66}
R.~J. Parikh.
\newblock On context-free languages.
\newblock {\em Journal of the ACM}, 13(4):570--581, 1966.

\bibitem{DBLP:journals/ai/Pratt-Hartmann05}
I.~Pratt{-}Hartmann.
\newblock Temporal prepositions and their logic.
\newblock {\em Artificial Intelligence}, 166(1-2):1--36, 2005.

\bibitem{Roe80}
P.~Roeper.
\newblock Intervals and tenses.
\newblock {\em Journal of Philosophical Logic}, 9:451--469, 1980.

\bibitem{schnoebelen2003}
Ph. Schnoebelen.
\newblock Oracle circuits for branching-time model checking.
\newblock In {\em ICALP}, pages 790--801, 2003.

\bibitem{Sip12}
M.~Sipser.
\newblock {\em Introduction to the Theory of Computation}.
\newblock International Thomson Publishing, 2012.

\bibitem{Ven90}
Y.~Venema.
\newblock Expressiveness and completeness of an interval tense logic.
\newblock {\em Notre Dame Journal of Formal Logic}, 31(4):529--547, 1990.

\bibitem{chopping_intervals}
Y.~Venema.
\newblock A modal logic for chopping intervals.
\newblock {\em Journal of Logic and Computation}, 1(4):453--476, 1991.

\end{thebibliography}

\begin{small}
    
\end{small}

\clearpage
\appendix
\section{Proofs}

\subsection{Proof of Lemma \ref{symmextlemma}}\label{symmextlemmaProof}
In the proof, we will exploit the fact that if two tracks in $\Trk_\mathpzc{K}$ have the same $B_{k+1}$-descriptor, then they also have the same $B_k$-descriptor. The latter can indeed be obtained from the former by removing the nodes at depth $k+1$ (leaves) and then deleting isomorphic subtrees possibly originated by the removal.

\begin{proof}
By induction on $k\geq 0$.

\emph{Base case} ($k=0$): let us assume $\rho_1$ and $\rho_2$ are associated with the descriptor element $(v_{in},S,v_{fin})$ and $\rho_1'$ and $\rho_2'$ with $(v_{in}',S',v_{fin}')$. Thus $\rho_1\cdot \rho_1'$ and $\rho_2\cdot\rho_2'$ are both described by the descriptor element $(v_{in},S\cup\{v_{fin},v_{in}'\}\cup S',v_{fin}')$.

\emph{Inductive step} ($k>0$): let $\mathpzc{D}_{B_k}$ be the $B_k$-descriptor for $\rho_1\cdot \rho_1'$ and $\mathpzc{D}_{B_k}'$ be the one for $\rho_2\cdot\rho_2'$: their roots are the same, as for $k=0$; let us now consider a prefix $\rho$ of $\rho_1\cdot \rho_1'$:
\begin{itemize}
    \item if $\rho$ is a proper prefix of $\rho_1$, since $\rho_1$ and $\rho_2$ have the same $B_k$-descriptor, there exists a prefix $\overline{\rho}$ of $\rho_2$ associated with the same subtree as $\rho$ of depth $k-1$ in the descriptor for $\rho_1$ (and $\rho_2$); 
    \item for $\rho=\rho_1$, it holds that $\rho_1$ and $\rho_2$ have the same $B_{k-1}$-descriptor because they have the same $B_k$-descriptor;
    \item if $\rho$ is a proper prefix of $\rho_1\cdot \rho_1'$ such that $\rho=\rho_1\cdot\tilde{\rho}_1$ for some prefix $\tilde{\rho}_1$ of $\rho_1'$, then two cases have to be taken into account:
    \begin{itemize}
        \item if $|\tilde{\rho}_1|=1$, then $\tilde{\rho}_1=v_{in}'$; but also $\fst(\rho_2')=v_{in}'$. Let us now consider the $B_{k-1}$-descriptors for $\rho_1\cdot v_{in}'$ and $\rho_2\cdot v_{in}'$: the labels of the roots are the same, namely $(v_{in},S\cup \{v_{fin}\},v_{in}')$, then the subtrees of depth $k-2$ are exactly the same as in $\rho_1$ and $\rho_2$'s $B_{k-1}$-descriptor, (possibly) with the addition of the $B_{k-2}$-descriptor for $\rho_1$ (which is equal to that for $\rho_2$). Thus $\rho_1\cdot v_{in}'$ and $\rho_2\cdot v_{in}'$ have the same $B_{k-1}$-descriptor;
        \item otherwise, since $\tilde{\rho}_1$ is a prefix of $\rho_1'$ of length at least 2, and $\rho_1'$ and $\rho_2'$ have the same $B_k$-descriptor, there exists a prefix $\tilde{\rho}_2$ of $\rho_2'$ associated with the same subtree of depth $k-1$ as $\tilde{\rho}_1$ (in the $B_k$-descriptor for $\rho_1'$). Hence, by inductive hypothesis, $\rho_1\cdot \tilde{\rho}_1$ and $\rho_2\cdot\tilde{\rho}_2$ have the same $B_{k-1}$-descriptor.
    \end{itemize}
\end{itemize}
 
Therefore we have shown that for any proper prefix of $\rho_1\cdot \rho_1'$ there exists a proper prefix of $\rho_2\cdot\rho_2'$ having the same $B_{k-1}$-descriptor. The inverse can be shown by symmetry. Thus $\mathpzc{D}_{B_k}$ is equal to $\mathpzc{D}_{B_k}'$.
\end{proof}

\subsection{Proof of Theorem \ref{thdesc}}\label{thdescProof}
\begin{proof}
The proof is by induction on $i \geq u+1$.\\
(Case $i=u+1$) We consider two cases: 
\begin{enumerate}
    \item if $\rho_{ds}(u)=\rho_{ds}(u+1)=d\in\mathpzc{C}$, then we have $Q_{-2}(u)=\mathpzc{C}\setminus \{d\}$, and $Q_{-1}(u)=\{d\}$, $Q_0(u)=Q_1(u)=\cdots =Q_s(u)=\emptyset$. Moreover, it holds that $Q_{-2}(u+1)=\mathpzc{C}\setminus \{d\}$, $Q_{-1}(u)=\emptyset$, $Q_0(u)=\{d\}$, and $Q_1(u)=Q_2(u)=\cdots = Q_s(u)=\emptyset$. $c(u)>_{lex}c(u+1)$ and the thesis follows.
    \item if $d,d'\in\mathpzc{C}$, with $d\neq d'$, $\rho_{ds}(u)=d$, and $\rho_{ds}(u+1)=d'$, then we have $Q_{-2}(u)=\mathpzc{C}\setminus \{d\}$, $Q_{-1}(u)=\{d\}$, and $Q_0(u)=Q_1(u)=\cdots =Q_s(u)=\emptyset$. Moreover, it holds that $Q_{-2}(u+1)=\mathpzc{C}\setminus \{d,d'\}$, $Q_{-1}(u)=\{d,d'\}$, $Q_0(u)=Q_1(u)=\cdots =Q_s(u)=\emptyset$, and $c(u)>_{lex}c(u+1)$, implying the thesis.
\end{enumerate}
(Case $i>u+1$) In the following, we say that $\rho_{ds}(\ell)$ and $\rho_{ds}(m)$ ($\ell<m$) are consecutive occurrences of a descriptor element $d$ if there are no other occurrences of $d$ in $\rho_{ds}(\ell+1, m-1)$. We consider the following cases:
\begin{enumerate}
    \item If $\rho_{ds}(i)$ is the first occurrence of $d\in\mathpzc{C}$, then $d\in Q_{-2}(i-1)$, $d\in Q_{-1}(i)$, and it holds that $c(i-1)>_{lex}c(i)$.
    \item If $\rho_{ds}(i)$ is the second occurrence of $d\in\mathpzc{C}$, according to the definition, $\rho_{ds}(i)$ can not be 1-indistinguishable from the previous occurrence of $d$, and thus $d\in Q_{-1}(i-1)$ ($\rho_{ds}(u,i-1)$ contains the first occurrence of $d$) and $d\in Q_0(i)$, proving that $c(i-1)>_{lex}c(i)$.
    \item If $\rho_{ds}(i)$ is at least the third occurrence of $d\in\mathpzc{C}$, but $\rho_{ds}(i)$ is \emph{not} $1$-indistinguishable from  the immediately preceding occurrence of $d$, $\rho_{ds}(i')$, with $i'<i$, then $DElm(\rho_{ds}(u,\allowbreak i'-1))\subset DElm(\rho_{ds}(u,i-1))$. Hence, there exists a first occurrence of some $d'\in\mathpzc{C}$ in $\rho_{ds}(i'+1, i-1)$, say $\rho_{ds}(j)=d'$, for $i'+1\leq j\leq i-1$. Thus, $d\in Q_{-1}(j)$, $\cdots$ , $d\in Q_{-1}(i-1)$, and $d\in Q_0(i)$, proving that $c(i-1)>_{lex}c(i)$.
    \item In the remaining cases, we assume that $\rho_{ds}(i)$ is \emph{at least the third occurrence} of $d\in\mathpzc{C}$. If $\rho_{ds}(i-1)$ and $\rho_{ds}(i)$ are both occurrences of $d\in\mathpzc{C}$ and $\rho_{ds}(i-1)$ is $t$-indistinguishable, for some $t>0$, and not $(t+1)$-indistinguishable, from the immediately preceding occurrence of $d$, then $\rho_{ds}(i-1)$ and $\rho_{ds}(i)$ are exactly $(t+1)$-indistinguishable. Thus, $d\in Q_{t}(i-1)$ and $d\in Q_{t+1}(i)$, implying that $c(i-1)>_{lex}c(i)$ (as a particular case, if $\rho_{ds}(i-1)$ and the immediately preceding occurrence are not 1-indistinguishable, then $\rho_{ds}(i-1)$ and $\rho_{ds}(i)$ are at most 1-indistinguishable).
    \item\label{b1b} If $\rho_{ds}(i)$ is exactly $1$-indistinguishable from the immediately preceding occurrence of $d$, $\rho_{ds}(j)$, with $j<i-1$, then $DElm(\rho_{ds}(u,j-1))= DElm(\rho_{ds}(u,i-1))$, and there are no first occurrences of any $d'\in\mathpzc{C}$ in $\rho_{ds}(j,i-1)$. 
    If $\rho_{ds}(j)$ is not 1-indistinguishable from its previous occurrence of $d$, it immediately follows that $d\in Q_0(j)$, $\cdots$, $d\in Q_0(i-1)$ and $d\in Q_1(i)$, implying that $c(i-1)>_{lex}c(i)$.
    
    Otherwise, there exists $j< i'<i$ such that $\rho_{ds}(i')=d''\in\mathpzc{C}$ is not 1-indistinguishable from any occurrence of $d''$ before $j$ (as a matter of fact, if this was not the case, $\rho_{ds}(i)$ and $\rho_{ds}(j)$ would be 2-indistinguishable); in particular, $\rho_{ds}(i')$ is not 1-indistinguishable from the last occurrence of $d''$ before $j$, say $\rho_{ds}(j')$, for some $j'<j$ (such a $j'$ exists since there are no first occurrences in $\rho_{ds}(j+1,i-1)$). 
    Now, if by contradiction every pair of consecutive occurrences of $d''$ in $\rho_{ds}(j',i')$ were 1-indistinguishable, then by Corollary \ref{propC} $\rho_{ds}(j')$ and $\rho_{ds}(i')$ would be 1-indistinguishable. Thus, a pair of consecutive occurrences of $d''$ exists, where the second element in the pair is $\rho_{ds}(\ell)=d''$, with $j<\ell <i$, such that they are not 1-indistinguishable. By inductive hypothesis, $d''\in Q_{-1}(\ell -1)$ and $d''\in Q_0(\ell)$. Therefore, $d\in Q_0(\ell)$, $\cdots$, $d\in Q_0(i-1)$ (recall that there are no first occurrences between $j$ and $i$) and $d\in Q_1(i)$, proving that $c(i-1)>_{lex}c(i)$.
    \item If $\rho_{ds}(j)=d\in\mathpzc{C}$ is at most $t$-indistinguishable (for some $t\geq 1$) from a preceding occurrence of $d$ and $\rho_{ds}(j)$ and $\rho_{ds}(i)=d$, with $j<i-1$, are $(t+1)$-indistinguishable consecutive occurrences of $d$ (by definition of indistinguishability, $\rho_{ds}(j)$ and $\rho_{ds}(i)$ can not be more than $(t+1)$-indistinguishable), any occurrence of $d'\in\mathpzc{C}$ in $\rho_{ds}(j+1,i-1)$ is (at least) $t$-indistinguishable from another occurrence of $d'$ before $j$. By Proposition \ref{propA}, all pairs of consecutive occurrences of $d'$ in $\rho_{ds}(j+1,i-1)$ are (at least) $t$-indistinguishable, hence $d\in Q_t(j)$, $\cdots$, $d\in Q_t(i-1)$ and finally $d\in Q_{t+1}(i)$, proving that $c(i-1)>_{lex}c(i)$.
    \item If $\rho_{ds}(j)=d\in\mathpzc{C}$ is at most $t$-indistinguishable (for some $t\geq 1$) from a preceding occurrence of $d$, and $\rho_{ds}(j)$ and $\rho_{ds}(i)=d$, with $j<i-1$, are consecutive occurrences of $d$ which are at most $\overline{t}$-indistinguishable, for some $1\leq\overline{t}\leq t$, we preliminarily observe that $DElm(\rho_{ds}(u,j-1))= DElm(\rho_{ds}(u,i-1))$. Then, if some $d''\in\mathpzc{C}$, with $d'' \neq d$, occurs in $\rho_{ds}(j+1,i-1)$ and it is not $1$-indistinguishable from any occurrence of $d''$ before $j$, then $\overline{t}=1$ and we are again in case~\ref{b1b}.
    
    Otherwise, all the occurrences of descriptor elements in $\rho_{ds}(j+1,i-1)$ are (at least) 1-indistinguishable from other occurrences before $j$. Moreover, there exists $j<i'<i$ such that $\rho_{ds}(i')=d'\in\mathpzc{C},d\neq d'$, and it is at most $(\overline{t}-1)$-indistinguishable from another occurrence of $d'$ before $j$. Analogously to case \ref{b1b}, by Proposition \ref{propA}, $\rho_{ds}(i')$ must be \mbox{$(\overline{t}-1)$-indistinguishable} from the last occurrence of $d'$ before $j$, say $\rho_{ds}(j')$, with $j'<j$. 
    But two consecutive occurrences of $d'$ in $\rho_{ds}(j',i')$ must then be at most $(\overline{t}-1)$-indistinguishable (if all pairs of occurrences of $d'$ in $\rho_{ds}(j', i')$ were $\overline{t}$-indistinguishable, $\rho_{ds}(i')$ and $\rho_{ds}(j')$ would be $\overline{t}$-indistinguishable as well), where the second occurrence is $\rho_{ds}(\ell)=d'$ for some $j<\ell\leq i'$. By applying the inductive hypothesis, we have $d'\in Q_{\overline{t}-2}(\ell-1)$ and $d'\in Q_{\overline{t}-1}(\ell)$. As a consequence, we have $d\in Q_{\overline{t}-1}(\ell)$, $\cdots$, $d\in Q_{\overline{t}-1}(i-1)$ (all descriptor elements in $\rho_{ds}(j, i)$ are at least $(\overline{t}-1)$-indistinguishable from other occurrences before $j$) and finally $d\in Q_{\overline{t}}(i)$, implying that $c(i-1)>_{lex}c(i)$.\qedhere
\end{enumerate}
\end{proof}
It is worth pointing out that, from the proof of the theorem, it follows that the definition of $f$ is in fact redundant: cases (c) and (e) never occur.

\subsection{Proof of Lemma~\ref{lemmamdc}}\label{explCheck}
\begin{proof}
The proof is by induction on the structure of $\psi$.
 The cases in which $\psi=\top$, $\psi=\bot$, $\psi=p\in\mathpzc{AP}$ are trivial. 
 The cases in which $\psi=\neg\varphi$, $\psi=\varphi_1\wedge\varphi_2$ are also trivial and omitted. We focus on the remaining cases. 
\begin{itemize}
    \item $\psi=\hsA\varphi$. If $\mathpzc{K},\tilde{\rho}\models \psi$, then there exists $\rho\in \Trk_\mathpzc{K}$ such that $\lst(\tilde{\rho})=\fst(\rho)$ and $\mathpzc{K},\rho\models \varphi$. By Theorem \ref{corrunr} the unravelling procedure returns $\overline{\rho}\in \Trk_\mathpzc{K}$ such that $\fst(\overline{\rho})=\fst(\rho)$ and $\overline{\rho}$ and $\rho$ have the same $B_k$-descriptor, thus $\mathpzc{K},\overline{\rho}\models \varphi$. By the inductive hypothesis, \texttt{Check}$(\mathpzc{K},k,\varphi,\overline{\rho})=1$, hence \texttt{Check}$(\mathpzc{K},k,\psi,\tilde{\rho})=1$.
    
    Vice versa, if \texttt{Check}$(\mathpzc{K},k,\psi,\tilde{\rho})=1$, there exists $\rho\in \Trk_\mathpzc{K}$ such that $\lst(\tilde{\rho})=\fst(\rho)$ and \texttt{Check}$(\mathpzc{K},k,\varphi,\rho)=1$. By the inductive hypothesis, $\mathpzc{K},\rho\models \varphi$, hence $\mathpzc{K},\tilde{\rho}\models \psi$.
    
    \item $\psi=\hsAt\varphi$. The proof is symmetric to the case $\psi=\hsA\varphi$.
    
    \item $\psi=\hsB\varphi$. If $\mathpzc{K},\tilde{\rho}\models \psi$, there exists $\rho\in\Pref(\tilde{\rho})$ such that $\mathpzc{K},\rho\models \varphi$. By the inductive hypothesis, \texttt{Check}$(\mathpzc{K},k-1,\varphi,\rho)=1$. Since all prefixes of $\tilde{\rho}$ are checked, \texttt{Check}$(\mathpzc{K},k,\psi,\tilde{\rho})=1$. 
    \emph{Note that, by definition of descriptor, if $\tilde{\rho}$ is a track representative of a $B_k$-descriptor $\mathpzc{D}_{B_k}$, a prefix of $\tilde{\rho}$ is a representative of a $B_{k-1}$-descriptor, whose root is a child of the root of $\mathpzc{D}_{B_k}$.}
    
    Vice versa, if \texttt{Check}$(\mathpzc{K},k,\psi,\tilde{\rho})=1$, then for some track $\rho\in\Pref(\tilde{\rho})$, we have  \texttt{Check}$(\mathpzc{K},k-1,\varphi,\rho)=1$. By the inductive hypothesis $\mathpzc{K},\rho\models \varphi$, hence $\mathpzc{K},\tilde{\rho}\models \psi$.
    
    \item $\psi=\hsBt\varphi$. If $\mathpzc{K},\tilde{\rho}\models \psi$, then there exists $\rho$ such that $\tilde{\rho}\cdot \rho \in \Trk_\mathpzc{K}$ for which $\mathpzc{K},\tilde{\rho}\cdot\rho\models \varphi$. If $|\rho|=1$, since by the inductive hypothesis \texttt{Check}$(\mathpzc{K},k,\varphi,\tilde{\rho}\cdot \rho)=1$, then \texttt{Check}$(\mathpzc{K},k,\psi,\tilde{\rho})=1$. Otherwise, the unravelling algorithm returns a track $\overline{\rho}$ with the same $B_k$-descriptor as $\rho$. Thus, by the extension Proposition \ref{extBk}, $\tilde{\rho}\cdot \rho$ and $\tilde{\rho}\cdot \overline{\rho}$ have the same $B_k$-descriptor. Thus $\mathpzc{K},\tilde{\rho}\cdot\overline{\rho}\models \varphi$. So (by inductive hypothesis) \texttt{Check}$(\mathpzc{K},k,\varphi,\tilde{\rho}\cdot\overline{\rho})=1$ implying that \texttt{Check}$(\mathpzc{K},k,\psi,\tilde{\rho})=1$. 
    \emph{Note that, given two tracks $\rho,\rho'$ of $\mathpzc{K}$, if we are considering $\overline{\rho}$ as the track representative of the $B_k$-descriptor of $\rho$, and the unravelling algorithm returns $\overline{\rho}'$ as the representative of the $B_k$-descriptor of $\rho'$, since by Lemma \ref{symmextlemma} $\rho\cdot\rho'$ and $\overline{\rho}\cdot\overline{\rho}'$ have the same $B_k$-descriptor, we have that $\overline{\rho}\cdot\overline{\rho}'$ is the representative of the $B_k$-descriptor of $\rho\cdot\rho'$.}

     Vice versa, if \texttt{Check}$(\mathpzc{K},k,\psi,\tilde{\rho})=1$, there exists $\rho$ such that $\tilde{\rho}\cdot \rho \in \Trk_\mathpzc{K}$ and \texttt{Check}$(\mathpzc{K},k,\varphi,\tilde{\rho}\cdot\rho)=1$. By the inductive hypothesis, $\mathpzc{K},\tilde{\rho}\cdot\rho\models \varphi$, hence $\mathpzc{K},\tilde{\rho}\models \psi$.
     
     \item $\psi=\hsEt\varphi$. The proof is symmetric to the case $\psi=\hsBt\varphi$.\qedhere
\end{itemize} 
\end{proof}

\subsection{Proof of Theorem~\ref{thModcheck}}\label{proofModCheck2}
\begin{proof}
	If $\mathpzc{K}\models \psi$, then for all $\rho\in\Trk_\mathpzc{K}$ such that $\fst(\rho)=w_0$ is the initial state of $\mathpzc{K}$, we have $\mathpzc{K},\rho\models \psi$.
	By  Lemma \ref{lemmamdc}, 
%since $\mathpzc{K},\rho\models \psi$ we have that 
it follows that $\texttt{Check}(\mathpzc{K},\nestb(\psi),\psi,\rho)=1$. Now, the unravelling procedure returns a subset of the initial tracks. This implies that \texttt{ModCheck}$(\mathpzc{K},\psi)=1$.

	On the other hand, if \texttt{ModCheck}$(\mathpzc{K},\psi)=1$, then for any track $\rho$ with $\fst(\rho)=w_0$ \emph{returned by the unravelling algorithm}, $\texttt{Check}(\mathpzc{K},\nestb(\psi),\psi,\rho)=1$ and, by Lemma \ref{lemmamdc}, $\mathpzc{K},\rho\models \psi$. 
	Assume now that a track $\tilde{\rho}$, with $\fst(\tilde{\rho})=w_0$, is \emph{not} returned by the unravelling algorithm. By Theorem~\ref{corrunr}, there exists a track $\overline{\rho}$, with $\fst(\overline{\rho})=w_0$, which is returned in place of $\tilde{\rho}$ and $\overline{\rho}$ has the same $B_k$-descriptor as $\tilde{\rho}$ (with $k=\nestb(\psi)$). Since $\mathpzc{K},\tilde{\rho}\models \psi \iff \mathpzc{K},\overline{\rho}\models \psi$ (by Theorem~\ref{satPresB}) and $\mathpzc{K},\overline{\rho}\models \psi$, we get that $\mathpzc{K},\tilde{\rho}\models \psi$. So all tracks starting from state $w_0$ model $\psi$, implying that $\mathpzc{K}\models \psi$.
\end{proof}

\subsection{NEXP-hardness of succinct $\AAbarBBbarEbar$}\label{sec:succAAbarBBbarEbarHard}

In Section~\ref{sec:representatives}, we proved that the model checking problem for $\AAbarBBbarEbar$ formulas is in EXPSPACE, and, in Section~\ref{subsec:AAbarBbarEbar}, that it is PSPACE-hard.
Here we prove that the model checking problem for $\AAbarBBbarEbar$ is in between EXPSPACE and NEXP when a suitable encoding of formulas is exploited. Such an encoding is \emph{succinct}, in the sense that the following binary-encoded shorthands are used: $\hsB^k\psi$ stands for $k$ repetitions of $\hsB$ before $\psi$,
%$\underbrace{\hsB \hsB\cdots \hsB}_{k\text{ times}} \psi$,
where $k$ is represented in binary (the same for all the other HS modalities); 
moreover, $\bigwedge_{i=l,\cdots ,r} \psi(i)$
denotes a conjunction of formulas which contain some occurrences of the index $i$ as exponents ($l$ and $r$ are binary encoded naturals), e.g., $\bigwedge_{i=1,\cdots,5}\hsB^i \top$.
Finally, we denote by $\expand(\psi)$ the expanded form of $\psi$, where all exponents $k$ are removed from $\psi$, by explicitly repeating $k$ times each HS modality with such an exponent, and big conjunctions are replaced by conjunctions of formulas without indexes. 

It is not difficult to show that there exists a constant $c>0$ such that, for all succinct $\AAbarBBbarEbar$ formulas $\psi$, $|\expand(\psi)|\leq 2^{|\psi|^c}$.
Therefore the model checking algorithm \texttt{ModCheck} of Section~\ref{sec:representatives} still runs in \emph{exponential working space} with respect to the succinct input formula $\psi$---by preliminarily expanding $\psi$ to $\expand(\psi)$---as $\tau(|W|,\nestb(\expand(\psi)))$ is exponential in $|W|$ and $|\psi|$.

Moreover, the following result holds:
\begin{theorem}\label{threduction}
The model checking problem for succinctly encoded formulas of $\AAbarBBbarEbar$ over finite Kripke structures is NEXP-hard (under polynomial-time reductions).
\end{theorem}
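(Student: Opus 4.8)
The plan is to give a polynomial-time reduction from a standard NEXP-complete problem, namely the acceptance problem for nondeterministic Turing machines running in time $2^{n^c}$ (equivalently, a succinctly encoded tiling problem, or bounded nondeterministic halting), to the model checking problem for succinctly encoded $\AAbarBBbarEbar$ formulas over finite Kripke structures. The idea mirrors the PSPACE-hardness reduction of Theorem~\ref{th:ABbarHard} from QBF, but scaled up: there, initial tracks of polynomial length encode truth assignments to polynomially many Boolean variables; here, I want initial tracks of \emph{exponential} length to encode a computation (or tiling) of exponential size, and the succinct encoding of formulas is precisely what lets me write, in polynomially many symbols, a formula that navigates and constrains such exponentially long tracks.

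Concretely, I would first fix an NEXP-complete problem with a convenient combinatorial shape --- say, given $n$ in unary, does a fixed NTM $M$ accept the empty input within $2^{n^c}$ steps --- and observe that an accepting computation can be written as a $2^{n^c}\times 2^{n^c}$ array of tape symbols (a ``computation tableau''), with local consistency constraints between adjacent cells plus boundary/initial/accepting conditions. Next I would build a small Kripke structure $\mathpzc{K}$ (of size polynomial in $n$) whose initial tracks, read left to right, spell out a candidate tableau: the branching in $\mathpzc{K}$ nondeterministically chooses each cell's content and, crucially, lays down a binary ``address'' (row index, column index) alongside each cell via extra proposition letters $b_1,\dots,b_{2n^c}$, reusing the same gadget-per-bit pattern as $\mathpzc{K}_{QBF}^{Var}$ so that states stay polynomially many. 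Then I would write the $\AAbarBBbarEbar$ formula $\xi$ that asserts: (i) the track is a well-formed tableau of the right dimensions --- length exactly $2^{2n^c}\cdot(\text{block size})$, checkable with $\ell(k)$-style formulas $[B]^{k-1}\bot\wedge\hsB^{k-2}\top$ (Example~\ref{ex:ellk}) using binary exponents $k$ of magnitude $2^{O(n^c)}$ but description length $O(n^c)$; (ii) the initial row encodes $M$'s start configuration; (iii) some configuration is accepting, via a big disjunction $\bigvee$ over cell types guarded by $\hsE$/$\hsEt$; and (iv) local consistency: for every cell and the three cells above it (its ``window''), the window is legal. Part (iv) is where I would use $[B]$ (or $[E]$) to quantify universally over all prefixes/suffixes, i.e.\ over all cells, and then use $\hsA,\hsAt$ together with the address bits to ``jump'' from a cell to the corresponding cell one row up; comparing addresses requires saying ``the $b_j$-pattern here equals the $b_j$-pattern there up to the standard increment,'' which is a Boolean condition on proposition letters holding over short sub-tracks and is expressible with a polynomial-size formula (or a succinct $\bigwedge_{j}$ over the bits).

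The main obstacle I anticipate is the ``row-above'' navigation: modalities $\hsA$ and $\hsAt$ in the state-based semantics let one move to \emph{any} track branching on the right (resp.\ left) of the current one, not to a canonically chosen successor, so pinning down ``the cell exactly $2^{n^c}$ positions earlier'' must be done purely by the address-bit bookkeeping, with the formula rejecting any branch whose addresses do not line up correctly. Getting this exactly right --- ensuring that the address fields are forced to be a genuine binary counter that increments per cell and wraps per row, and that the window checks cannot be satisfied by a ``cheating'' misaligned track --- is the delicate part; it is essentially the same difficulty that arises in standard NEXP-hardness proofs via succinct circuits or exponential tilings, and I would handle it by the usual trick of having the formula existentially guess, for each window check, a matching pair of cells and then \emph{verify} (with a universal $[B]$/$[E]$ guard plus Boolean bit-comparison) that the guessed pair is the unique correct one. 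Finally I would verify that both $\mathpzc{K}$ and $\xi$ are produced in time polynomial in $n$ (the Kripke structure is literally polynomial-size; the formula uses only $O(n^c)$-bit binary exponents and $O(\text{poly}(n))$-indexed big conjunctions), and that $M$ accepts iff $\mathpzc{K}\not\models\neg\xi$, equivalently iff there \emph{exists} an initial track satisfying $\xi$; since $\AAbarBBbarEbar$ is closed under the constructions used and model checking is defined as ``all initial tracks satisfy,'' a small dualization or a complementary target formula converts this existential statement into the required $\mathpzc{K}\models(\cdot)$ form, completing the reduction and establishing NEXP-hardness.
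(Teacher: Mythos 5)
Your overall architecture (computation tableau of size $2^{n^k}\times 2^{n^k}$, a polynomial-size Kripke structure whose long initial tracks spell out candidate tableaux, and a succinct formula checking well-formedness, the initial row, acceptance, and the window constraints) matches the paper's reduction. But your mechanism for the crucial step --- aligning a cell with the cell directly above it --- has a genuine gap. You propose to attach binary address bits to each cell and to ``jump'' between matching cells using $\hsA$ and $\hsAt$; as you yourself observe, these modalities move to \emph{arbitrary} tracks meeting the current one, so any cell reached this way need not belong to the same candidate tableau, and the constraint becomes vacuous or unsound. Your proposed repair (``existentially guess a matching pair and verify it is the unique correct one'') does not resolve this: the only modalities that stay inside the current track are $\hsB$ and $\hsE$, and under the homogeneity semantics a prefix or sub-track only exposes the \emph{intersection} of its states' labels, so there is no evident way to compare the bit-pattern at the end of one independently quantified sub-track with that at the end of another. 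The paper avoids addresses entirely: it uses the succinctly encoded exponential conjunction $\bigwedge_{i=l,\ldots,r}$ over all cell positions $i$, and in each conjunct hard-codes the two relevant \emph{absolute prefix lengths} via $\ell(\cdot)$ with binary-encoded constants (lengths $3\cdot 2^{n^k}+3i+1$ and $3i+1$, differing by exactly one configuration), so alignment is done by length arithmetic inside nested $[B]$'s rather than by address comparison. You mention $\ell(k)$ only for checking the total track length; extending it to the window check is the missing idea.

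There is a second, smaller gap at the end. You reduce acceptance of $M$ to the \emph{existence} of an initial track satisfying $\xi$, i.e.\ to $\mathpzc{K}\not\models\neg\xi$; since model checking is defined by universal quantification over initial tracks, this as stated is a reduction to the \emph{complement} of the model checking problem and would yield coNEXP-hardness, not NEXP-hardness. The paper's fix is to arrange that exactly one initial track ($w_0w_1$) satisfies a marker letter $start$ and to check $\mathpzc{K}\models start\rightarrow\hsA\xi$: the universal quantifier is trivialized to that single track, and the existential content is pushed into the $\hsA$ modality. Some such device is needed; a generic ``small dualization'' will not do, because negating $\xi$ flips the quantifier back the wrong way.
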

The theorem is proved by means of a reduction from the acceptance problem for a (generic) language $L$ decided by a \emph{non-deterministic one-tape} Turing machine $M$ (w.l.o.g.) that halts in $O(2^{n^{k}})$ computation steps on any input of size $n$, where $k>0$ is a constant. We suitably define  a Kripke structure $\mathpzc{K}=(\mathpzc{AP},W,\delta,\mu,w_0)$ and a succinct $\AAbarBBbarEbar$ formula $\psi$ such that $\mathpzc{K}\models\psi$ if and only if
$M$ accepts its input string $c_{0}c_{1}\cdots c_{n-1}$.

This allows us to conclude that the model checking problem for succinct $\AAbarBBbarEbar$ formulas over finite Kripke structures is between NEXP and EXPSPACE.
We end this section by proving Theorem~\ref{threduction}.

\begin{proof}
Let us consider a language $L$ decided by a \emph{non-deterministic one-tape} Turing machine $M$ (w.l.o.g.) that halts after no more than $2^{n^{k}}-3$ computation steps on an input of size $n$ (assuming a sufficiently high constant $k\in\mathbb{N}$). Hence, $L$ belongs to NEXP. 

Let $\Sigma$ and $Q$ be the alphabet and the set of states of $M$, respectively, and let $\#$ be a special symbol
not in $\Sigma$ used as separator for configurations (in the following we let $\Sigma'=\Sigma\cup\left\{ \#\right\} $).
The alphabet $\Sigma$ is assumed to contain the blank symbol $\sqcup$.

As usual, a computation of $M$ is a sequence of configurations of $M$, where each configuration fixes the content of the tape, the position of the head on the tape and the internal state of $M$. 
We use a standard encoding for computations called \emph{computation table} (or tableau) (see \cite{Pap94,Sip12} for further details).
Each configuration of $M$ is a sequence over the alphabet $\Gamma=\Sigma'\cup(Q\times\Sigma)$; a symbol in $(q,c) \in Q\times\Sigma$ occurring in the $i$-th position encodes the fact that the machine has internal state $q$ and its head is currently on the $i$-th position of the tape (obviously exactly one occurrence of a symbol in $Q\times\Sigma$ occurs in each configuration). 
Since $M$ halts after no more than $2^{n^{k}}-3$ computation steps, $M$ uses at most $2^{n^{k}}-3$ cells on its tape, so the size of a configuration is $2^{n^{k}}$ (we need 3 occurrences of the auxiliary symbol $\#$, two for delimiting the beginning of the configuration, and one for the end; additionally $M$ never overwrites delimiters $\#$). If a configuration is actually shorter than $2^{n^k}$, it is padded with $\sqcup$ symbols in order to reach length $2^{n^k}$ (which is a fixed number, once the input length is known). 
Moreover, since $M$ halts after no more than $2^{n^{k}}-3$ computation steps, the number of configurations
is $2^{n^{k}}-3$. The computation table is basically a matrix of $2^{n^{k}}-3$ rows and $2^{n^{k}}$ columns, where the $i$-th row records the configuration of $M$ at the $i$-th computation step.

\begin{figure}[tb]
\begin{equation*}
\underbrace{
\begin{array}{|c|c|c|c|c|c|c|c|c|c|c|c|c|c|}
\hline
\# & \# & (q_{0},c_{0}) & c_{1} & c_{2} & \cdots & \cdots & c_{n-1} & \sqcup & \sqcup & \cdots & \cdots & \sqcup & \#\\
\hline
\# & \# & c_{0}' & (q_{1},c_{1}) & c_{2} & \cdots & \cdots & c_{n-1} & \sqcup & \sqcup & \cdots & \cdots & \sqcup & \#\\
\hline
\vdots & \vdots &  &  &  & \ddots & \ddots &  &  &  &  &  &  & \vdots\\
\hline
\vdots & \vdots &  &  &  & \ddots & \ddots &  &  &  &  &  &  & \vdots\\
\hline
\# & \# & \cdots & \cdots & (q_{yes},c_{k}) & \cdots & \cdots & \cdots & \cdots & \cdots & \cdots & \cdots & \cdots & \#\\
\hline
\end{array}
}_{2^{n^k}}
\end{equation*}
\caption{An example of computation table (tableau).}\label{table}
\end{figure}
As an example, a possible table is depicted in Figure~\ref{table}. In the first configuration (row) the head is in the leftmost position (on the right of delimiters $\#$) and $M$ is in state $q_0$. In addition, we have the string symbols $c_0c_1\cdots c_{n-1}$ padded with occurrences of $\sqcup$ to reach length $2^{n^k}$. In the second configuration, the head has moved one position to the right, $c_0$ has been overwritten by $c_0'$, and $M$ is in state $q_1$.
From the first two rows, we can deduce that the tuple $(q_{0},c_{0},q_{1},c_{0}',\rightarrow)$ belongs to
the transition relation $\delta_M$ of $M$ (we assume that $\delta_M \subseteq Q  \times \Sigma \times Q \times \Sigma \times \{\rightarrow,\leftarrow,\bullet\}$ with the obvious standard meaning).

Following \cite{Pap94,Sip12}, 
we now introduce the notion of (legal) window. A window is a $2\times 3$ matrix, in which the first row
represents three consecutive symbols of a possible configuration. The second row represents the three symbols which are placed exactly in the 
same position in the next configuration. A window is legal when the changes from the first to the second row 
are coherent with $\delta_M$ in the obvious sense. Actually, the set of legal windows, which we denote by $Wnd\subseteq\left(\Gamma^{3}\right)^{2}$, is a tabular representation of the transition relation $\delta_M$.

For example, two legal windows associated with the table of the previous example are: 
\begin{equation*}
\begin{array}{|c|c|c|}
\hline
\# & (q_{0},c_{0}) & c_{1} \\
\hline
\# & c_{0}' & (q_{1},c_{1}) \\
\hline
\end{array}
\hspace{1cm}
\begin{array}{|c|c|c|}
\hline
(q_{0},c_{0}) & c_{1} & c_2\\
\hline
c_{0}' & (q_{1},c_{1}) & c_2 \\
\hline
\end{array}
\end{equation*}

Formally, a $((x,y,z),(x',y',z'))\in Wnd$ can be represented as 
 \begin{equation*}
\begin{array}{|c|c|c|}
\hline
x & y & z \\
\hline
x' & y' & z' \\
\hline
\end{array}\qquad \text{with } x,x',y,y',z,z'\in\Gamma ,
\end{equation*}
where the following constraints must hold:
\begin{enumerate}
    \item if all $x,y,z\in\Sigma'$ ($x$, $y$, $z$ are not state-symbol pairs), then $y=y'$;
    \item if one of $x$, $y$ and $z$ belongs to $Q\times\Sigma$, then $x'$, $y'$ and $z'$ are coherent with $\delta_M$, and
    \item $(x=\#\Rightarrow x'=\#)\wedge(y=\#\Rightarrow y'=\#)\wedge(z=\#\Rightarrow z'=\#)$.
\end{enumerate}
As we said, $M$ never overwrites a $\#$ and we can assume that the head never visits a $\#$, as well (some more windows can be possibly added if necessary, see~\cite{Pap94}). 
%However, in some window, condition 2  would require to move the head right (or left) overwriting $\#$ (or just visiting it), while 3  does not allow to replace a $\#$ with another symbol (notice that $(q_i,\#)$ does not belong to $\Gamma$ for any state $q_i$ of $M$); in such a case the window is not valid and so it is discarded (it doesn't belong to $Wnd$).

In the following we define a Kripke structure $\mathpzc{K}=(\mathpzc{AP},W,\delta,\mu,w_0)$ and a (succinct) formula $\psi$ of $\AAbarBBbarEbar$  such that $\mathpzc{K}\models\psi$ if and only if
$M$ accepts its input string $c_{0}c_{1}\cdots c_{n-1}$. 
The set of propositional letters is $\mathpzc{AP}=\Gamma\cup\Gamma^{3}\cup\left\{ start\right\}$. 
The Kripke structure  $\mathpzc{K}$ is obtained by suitably composing a basic
pattern called \emph{gadget}.   An instance of the gadget is associated with a triple of symbols  $(a,b,c)\in\Gamma^{3}$ (i.e., a sequence of three adjacent symbols in a configuration) and consists of 3 states: $q_{(a,b,c)}^{0}$, $q_{(a,b,c)}^{1}$, $q_{(a,b,c)}^{2}$ such that 
\begin{equation*}
\mu\left(q_{(a,b,c)}^{0}\right)=\mu\left(q_{(a,b,c)}^{1}\right)=\left\{ (a,b,c),c\right\}
\text{ and }
\mu\left(q_{(a,b,c)}^{2}\right)=\emptyset .
\end{equation*}
Moreover, 
\begin{equation*}
\delta\left(q_{(a,b,c)}^{0}\right)=\left\{ q_{(a,b,c)}^{1}\right\} \text{ and } \delta\left(q_{(a,b,c)}^{1}\right)=\left\{ q_{(a,b,c)}^{2}\right\} .
\end{equation*}
(See Figure \ref{gadget}.) The underlying idea is that a gadget associated with $(x,y,z)\in\Gamma^{3}$ ``records'' the current proposition letter $z$, as well as two more ``past'' letters ($x$ and $y$).

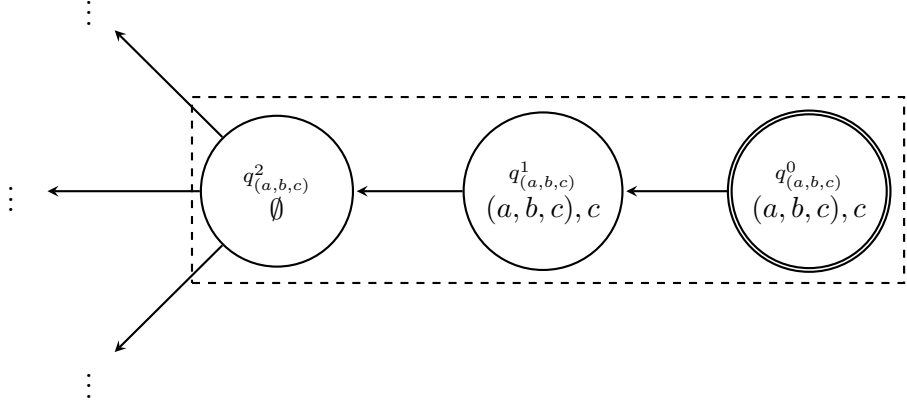
\begin{figure}[t] 
\centering 
%\resizebox{\linewidth}{!}{
\begin{tikzpicture}[->,>=stealth,thick,shorten >=1pt,auto,minimum width=2cm,node distance=3.5cm,every node/.style={circle,draw}]     
\node [style={double}](v0) {$\stackrel{q^0_{(a,b,c)}}{(a,b,c),c}$};
\node (v1) [left of=v0] {$\stackrel{q^1_{(a,b,c)}}{(a,b,c),c}$};
\node (v2) [left of=v1] {$\stackrel{q^2_{(a,b,c)}}{\emptyset}$};
\node (v3) [minimum width=0, style={draw=none}, left of=v2] {$\vdots$};
\node (v4) [minimum width=0, style={draw=none}, above left of=v2] {$\vdots$};
\node (v5) [minimum width=0, minimum width=0, style={draw=none}, below left of=v2] {$\vdots$};
\draw (v0) to (v1);
\draw (v1) to (v2);
\draw (v2) to (v3);
\draw (v2) to (v4);
\draw (v2) to (v5);
\draw[black,thick,dashed] ($(v0.north east)+(0.5,0.5)$)  rectangle ($(v2.south west)+(-0.4,-0.5)$);
\end{tikzpicture}%}
\caption{An instance of the described gadget for $(a,b,c)\in\Gamma^{3}$.} 
\label{gadget}
\end{figure}

The Kripke structure $\mathpzc{K}$ has (an instance of) a gadget for every $(x,y,z)\in\Gamma^{3}$ and for all $(x,y,z)$ and $(x',y',z')$ in $\Gamma^{3}$, we have $q_{(x',y',z')}^{0}\in\delta\left(q_{(x,y,z)}^{2}\right)$ if and only if $x'=y$ and $y'=z$. Moreover, $\mathpzc{K}$ has some additional (auxiliary) states $w_0,\cdots , w_6$ described in Figure \ref{initK} and $\delta(w_{6})=\left\{ q_{(\#,\#,x)}^{0}\mid x\in\Gamma\right\}$. Note that the overall size of  $\mathpzc{K}$ only depends on $|\Gamma|$ and it is constant w.r.\ to the input string $c_{0}c_{1}\cdots c_{n-1}$ of $M$.

\begin{figure}[t] 
\centering 
%\resizebox{\linewidth}{!}{
\begin{tikzpicture}[->,>=stealth,thick,shorten >=1pt,auto,minimum width=1.7cm,node distance=2.5cm,every node/.style={circle,draw}]     
\node [style={double}](v0) {$\stackrel{w_0}{start}$};
\node (v1) [left of=v0] {$\stackrel{w_1}{start,\#}$};
\node (v2) [left of=v1] {$\stackrel{w_2}{\#}$};
\node (v3) [left of=v2] {$\stackrel{w_3}{\emptyset}$};
\node (v4) [above of=v1] {$\stackrel{w_4}{\#}$};
\node (v5) [left of=v4] {$\stackrel{w_5}{\#}$};
\node (v6) [left of=v5] {$\stackrel{w_6}{\emptyset}$};
\node (v10) [minimum width=0, style={draw=none}, left of=v6] {$\vdots$};
\node (v11) [minimum width=0, style={draw=none}, above left of=v6] {$\vdots$};
\node (v12) [minimum width=0, style={draw=none}, below left of=v6] {$\vdots$};
\draw (v0) to (v1);
\draw (v1) to (v2);
\draw (v2) to (v3);
\draw (v3) to (v4);
\draw (v4) to (v5);
\draw (v5) to (v6);
\draw (v6) to (v10);
\draw (v6) to (v11);
\draw (v6) to (v12);
\end{tikzpicture}%}
\caption{Initial part of $\mathpzc{K}$.}\label{initK}
\end{figure}

Now we want to decide whether an input string belongs to the language $L$ by solving the model checking problem $\mathpzc{K}\models start\rightarrow\hsA\xi$ where $\xi$ is satisfied only by tracks which represent a successful computation of $M$. Since the only (initial) track which satisfies $start$ is $w_{0}w_{1}$, we are actually verifying the existence of a track which begins with $w_{1}$ and satisfies $\xi$.

As for $\xi$, it requires that a track $\rho$, for which $\mathpzc{K},\rho\models\xi$ (with $\fst(\rho)=w_1$), mimics a successful computation of $M$ in this way: every interval $\rho(i,i+1)$, for $i\mod3=0$, satisfies the proposition letter $p\in\mathpzc{AP}$ if and only if the $\frac{i}{3}$-th character of the computation represented by $\rho$ is $p$ (note that as a consequence of the gadget structure, only $\rho$'s subtracks $\overline{\rho}=\rho(i, i+1)$ for $i\mod3=0$ can satisfy some proposition letters). A symbol of a configuration is mapped to an occurrence of an instance of a gadget in $\rho$; $\rho$, in turn, encodes a computation of $M$ through the concatenation of the first, second, third\dots rows of the computation table (two consecutive configurations are separated by 3 occurrences of $\#$, which require 9 states overall).

Let us now define the  HS formula $\xi=\psi_{accept}\wedge\psi_{input}\wedge\psi_{window}$,
where
\begin{equation*}
\psi_{accept}=\hsB\hsA\bigvee_{a\in\Sigma}(q_{yes},a)
\end{equation*}
requires a track to contain an occurrence of the accepting state of $M$, $q_{yes}$; $\psi_{input}$ is a bit more involved and demands that the subtrack corresponding to the first configuration of $M$ actually ``spells'' the input $c_0c_1\cdots c_{n-1}$, suitably padded with occurrences of $\sqcup$ and terminated by a $\#$ (in the following, $\ell(k)$, introduced in Example~\ref{ex:ellk}, is satisfied only by those tracks whose length equals $k$ ($k\geq 2$) and it has a binary encoding of $O(\log k)$ bits): 
\begin{multline*}
\psi_{input}=[B]\Big(\ell(7)\rightarrow\hsA(q_{0},c_{0})\Big)\wedge
[B]\Big(\ell(10)\rightarrow\hsA c_1\Big)\wedge
[B]\Big(\ell(13)\rightarrow\hsA c_2\Big)\wedge\\
\vdots\\
[B]\Big(\ell(7+3(n-1))\rightarrow\hsA c_{n-1}\Big)\wedge\\
[B]\Bigg(\hsB^{5+3n}\top \wedge [B]^{3\cdot 2^{n^{k}}-6}\bot\rightarrow \hsA\bigg(\Big(\ell(2)\wedge\bigwedge_{a\in\Gamma}\neg a \Big)\vee\sqcup\bigg)\Bigg)\wedge\\
[B]\Big(\ell\big(3\cdot 2^{n^{k}}-2\big)\rightarrow\hsA\#\Big).
\end{multline*}

Finally $\psi_{window}$ enforces the window constraint: if the proposition $(d,e,f)\in\Gamma^3$ is witnessed in a subinterval (of length 2) in the subtrack of $\rho$ corresponding to the $j$-th configuration of $M$, then in the same position of (the subtrack of $\rho$ associated with) configuration $j-1$, some $(a,b,c)\in\Gamma^3$ must be there, such that $((a,b,c),(d,e,f))\in Wnd$.
\begin{multline*}
\psi_{window}= [B]\Bigg(\bigwedge_{i=2,\cdots ,t}\bigwedge_{(d,e,f)\in\Gamma^3}\Big(\ell(3\cdot 2^{n^k}+3i+1)\wedge\hsA (d,e,f)\\
\rightarrow [B] \big(\ell(3i+1)\rightarrow\bigvee_{((a,b,c),(d,e,f))\in Wnd}\hsA (a,b,c)\big)\Big)\Bigg).
\end{multline*}
where $t=2^{n^k}\cdot (2^{n^k}-4)-1$ is encoded in binary.

All the integers which must be stored in the formula are less than $(2^{n^k})^2$, thus they need $O(n^{k})$ bits to be encoded; in this way the formula can be generated in polynomial time. 
\end{proof}

\subsection{Proof of Lemma~\ref{lemmaABbar}}\label{sec:lemmaABbarProof}
\begin{proof}
The proof is by induction on the complexity of $\psi$.
\begin{itemize}
    \item $\psi =p$, with $p\in\mathpzc{AP}$ ($p\ell(p)=\{p\}$). If $\mathpzc{K},\rho\models p$, then $p\in \mathpzc{L}(\mathpzc{K},\rho)$ and hence $p\in \mathpzc{L}(\mathpzc{K}_{\,|p\ell(\psi)},\rho)$. By hypothesis, it immediately follows that $p\in \mathpzc{L}(\mathpzc{K}'_{\,|p\ell(\psi)},\rho')$, and thus $p\in \mathpzc{L}(\mathpzc{K}',\rho')$ and $\mathpzc{K}',\rho'\models p$.
    \item $\psi=\neg\phi$ ($p\ell(\phi)=p\ell(\psi)$). If $\mathpzc{K},\rho\models \neg\phi$, then $\mathpzc{K},\rho\not\models \phi$. By the inductive hypothesis, $\mathpzc{K}',\rho'\not\models \phi$ and thus $\mathpzc{K}',\rho'\models \neg\phi$.
    \item $\psi=\phi_1\wedge\phi_2$. If $\mathpzc{K},\rho\models \phi_1\wedge\phi_2$, then in particular $\mathpzc{K},\rho\models \phi_1$. Since, by hypothesis, $\mathpzc{L}(\mathpzc{K}_{\,|p\ell(\psi)},\rho)=\mathpzc{L}(\mathpzc{K}'_{\,|p\ell(\psi)},\rho')$ and $reach(\mathpzc{K}_{\,|p\ell(\psi)},\lst(\rho))\sim reach(\mathpzc{K}'_{\,|p\ell(\psi)},\lst(\rho'))$, it holds that $\mathpzc{L}(\mathpzc{K}_{\,|p\ell(\phi_1)},\rho)=\mathpzc{L}(\mathpzc{K}'_{\,|p\ell(\phi_1)},\rho')$ and $reach(\mathpzc{K}_{\,|p\ell(\phi_1)},\lst(\rho))\sim reach(\mathpzc{K}'_{\,|p\ell(\phi_1)},\lst(\rho'))$, as $p\ell(\phi_1)\subseteq p\ell(\psi)$. By the inductive hypothesis, $\mathpzc{K}',\rho'\models \phi_1$. 
The same argument works for $\phi_2$. The thesis 
%immediately 
follows.
% By reasoning analogously for $\phi_2$, 
    \item $\psi=\hsA \phi$. If $\mathpzc{K},\rho\models \hsA \phi$, there exists a track $\overline{\rho}\in\Trk_\mathpzc{K}$ such that $\fst(\overline{\rho})=\lst(\rho)$ and $\mathpzc{K},\overline{\rho}\models \phi$, with $p\ell(\phi) = p\ell(\psi)$.     
    By hypothesis, it holds that $reach(\mathpzc{K}_{\,|p\ell(\psi)},\lst(\rho))\sim reach(\mathpzc{K}'_{\,|p\ell(\psi)},\lst(\rho'))$.
    Hence, there exists a track $\overline{\rho}'\in\Trk_{\mathpzc{K}'}$, with $\fst(\overline{\rho}')=\lst(\rho')$, such that $|\overline{\rho}|=|\overline{\rho}'|$ and for all $0\leq i \leq |\overline{\rho}|-1$, $f(\overline{\rho}(i)) = \overline{\rho}'(i)$, where $f$ is the (an) isomorphism between $reach(\mathpzc{K}_{\,|p\ell(\psi)},\lst(\rho))$ and $reach(\mathpzc{K}'_{\,|p\ell(\psi)},\lst(\rho'))$. It immediately follows that 
    $\mathpzc{L}(\mathpzc{K}_{\,|p\ell(\phi)},\overline{\rho})=\mathpzc{L}(\mathpzc{K}'_{\,|p\ell(\phi)},\overline{\rho}')$.
    
    We now prove that $reach(\mathpzc{K}_{\,|p\ell(\phi)},\lst(\overline{\rho}))\sim reach(\mathpzc{K}'_{\, |p\ell(\phi)}, \lst(\overline{\rho}'))$. To this end, it suffices to prove that the restriction of 
the isomorphism 
$f$ to the states of $reach(\mathpzc{K}_{\,|p\ell(\phi)},\lst(\overline{\rho}))$, say $f'$, is an isomorphism between $reach(\mathpzc{K}_{\,|p\ell(\phi)},\lst(\overline{\rho}))$ and $reach(\mathpzc{K}'_{\,|p\ell(\phi)},\lst(\overline{\rho}'))$ (note that  $reach(\mathpzc{K}_{\,|p\ell(\phi)},\lst(\overline{\rho}))$ is a subgraph of $reach(\mathpzc{K}_{\,|p\ell(\psi)},\lst(\rho))$). 
    First, it holds that $f(\lst(\overline{\rho}))=f'(\lst(\overline{\rho}))=\lst(\overline{\rho}')$.
    Next, if $w$ is any state of $reach(\mathpzc{K}_{\,|p\ell(\phi)},\lst(\overline{\rho}))$, then $f(w)=f'(w)=w'$ is a state of $reach(\mathpzc{K}'_{\,|p\ell(\phi)},\lst(\overline{\rho}'))$, as from the existence of a track from $\lst(\overline{\rho})$ to $w$, it follows that there is an isomorphic track (w.r.t. $f$) from $\lst(\overline{\rho}')$ to $w'$. 
    Moreover, if $(w,\overline{w})\in\delta$, then $\overline{w}$ 
%is another state of 
belongs to $reach(\mathpzc{K}_{\,|p\ell(\phi)},\lst(\overline{\rho}))$, and thus $(w',f(\overline{w}))\in\delta'$ and $f(\overline{w})=f'(\overline{w})$ 
%is a state of 
belongs to $reach(\mathpzc{K}'_{\,|p\ell(\phi)},\lst(\overline{\rho}'))$. 
%as well. 
%Reasoning by symmetry, 
We can conclude that,
%we can conclude that 
for any two states  $v, v'$ of $reach(\mathpzc{K}_{\,|p\ell(\phi)},\lst(\overline{\rho}))$, it holds that $(v,v')$ is an edge if and only if $(f'(v),f'(v'))$ is an edge of $reach(\mathpzc{K}'_{\,|p\ell(\phi)},\lst(\overline{\rho}'))$.    
    
    By the inductive hypothesis, $\mathpzc{K}',\overline{\rho}'\models \phi$ and hence $\mathpzc{K}',\rho'\models \hsA\phi$.
    
    \item $\psi=\hsBt \phi$. If $\mathpzc{K},\rho\models \hsBt \phi$, then $\mathpzc{K},\rho\cdot\overline{\rho}\models \phi$, with $p\ell(\psi)=p\ell(\phi)$, where $\rho\cdot\overline{\rho}\in\Trk_\mathpzc{K}$ and $\overline{\rho}$ is either a single state or a proper track. 
%Obviously $p\ell(\psi)=p\ell(\phi)$. 
In analogy to the previous case, let $\overline{\rho}'\in\Trk_{\mathpzc{K}'}$ such that $|\overline{\rho}|=|\overline{\rho}'|$ and, for all $0\leq i <|\overline{\rho}|$, $f(\overline{\rho}(i))=\overline{\rho}'(i)$, where $f$ is the isomorphism between $reach(\mathpzc{K}_{\,|p\ell(\psi)},\lst(\rho))$ and $reach(\mathpzc{K}'_{\,|p\ell(\psi)},\lst(\rho'))$. 
    Since $f(\lst(\rho))=\lst(\rho ')$, by definition of isomorphism, $(\lst(\rho),\fst(\overline{\rho}))\in\delta$ implies $(\lst(\rho'),\fst(\overline{\rho}'))\in\delta'$.
    Therefore 
    $\mathpzc{L}(\mathpzc{K}_{\,|p\ell(\phi)},\overline{\rho})=\mathpzc{L}(\mathpzc{K}'_{\,|p\ell(\phi)},\overline{\rho}')$ and $reach(\mathpzc{K}_{\,|p\ell(\phi)},\lst(\overline{\rho}))\sim reach(\mathpzc{K}'_{\,|p\ell(\phi)},\lst(\overline{\rho}'))$.
    Finally, 
    \begin{multline*}    
    \mathpzc{L}(\mathpzc{K}_{\,|p\ell(\phi)},\rho\cdot\overline{\rho})=\mathpzc{L}(\mathpzc{K}_{\,|p\ell(\phi)},\rho)\cap\mathpzc{L}(\mathpzc{K}_{\,|p\ell(\phi)},\overline{\rho})= \\
    \mathpzc{L}(\mathpzc{K}'_{\,|p\ell(\phi)},\rho')\cap\mathpzc{L}(\mathpzc{K}'_{\,|p\ell(\phi)},\overline{\rho}')=\mathpzc{L}(\mathpzc{K}'_{\,|p\ell(\phi)},\rho'\cdot\overline{\rho}')
    \end{multline*}
    and 
%obviously 
$reach(\mathpzc{K}_{\,|p\ell(\phi)},\lst(\rho\cdot\overline{\rho}))\sim reach(\mathpzc{K}'_{\,|p\ell(\phi)},\lst(\rho'\cdot\overline{\rho}'))$.
By the inductive hypothesis, $\mathpzc{K}',\rho'\cdot\overline{\rho}'\models \phi$ and thus $\mathpzc{K}',\rho'\models \hsBt \phi$.\qedhere
\end{itemize}
\end{proof}

\end{document}